\journal{journal}
\newcommand{\ii}{\mathrm{i}}
\newcommand{\ee}{\mathrm{e}}
\newcommand{\dd}{\mathrm{d}}
\newtheorem{prop}{Proposition}
\begin{document}
	\title{The Riemann-Hilbert approach for the integrable fractional Fokas--Lenells equation}
	\author[]{Ling An}
	\author[]{Liming Ling\corref{mycorrespondingauthor}}
	\ead{linglm@scut.edu.cn}
	\cortext[mycorrespondingauthor]{Corresponding author}
		\address{School of Mathematics, South China University of Technology, Guangzhou, China, 510641}
	
	\begin{abstract}
		In this paper, we propose a new integrable fractional Fokas--Lenells  equation by using the completeness of the squared eigenfunctions, dispersion relation, and inverse scattering transform. To solve this equation, we employ the Riemann-Hilbert approach. Specifically, we focus on the case of the reflectionless potential with a simple pole for the zero boundary condition. And we provide the fractional $N$-soliton solution in determinant form. Additionally, we prove the fractional one-soliton solution rigorously. Notably, we demonstrate that as $|t|\to\infty$, the fractional $N$-soliton solution can be expressed as a linear combination of $N$ fractional single-soliton solutions.\\
		{\bf Keywords:\ }Integrable fractional Fokas--Lenells equation, recursion operator, Riemann-Hilbert approach, fractional $N$-soliton solution, asymptotic analysis.
	\end{abstract}

	\maketitle
	\section{Introduction}
	In the past few decades, the nonlinear Schr\"odinger (NLS) equation has been widely studied in various fields, and it can describe many nonlinear phenomena such as nonlinear optics, ion acoustic waves in plasmas, Bose-Einstein condensation, and deep-water wave propagation \cite{hasegawa1973transmission,agrawal2011nonlinear}. However, when describing the propagation of ultrashort pulses in nonlinear optical fibers, higher-order effects become crucial, necessitating the appropriate introduction of some higher-order terms \cite{potasek1991exact,cavalcanti1991modulation,kodama1997input}. Each of these terms has a specific contribution to the pulse propagation, and the choice of coefficients for these higher-order terms often determines whether the model is integrable or not. In a similar way to the perturbation of the bi-Hamiltonian structure of the famous Korteweg-de Vries (KdV) equation to obtain the integrable Camassa-Holm equation \cite{fuchssteiner1981symplectic}, the application of the same mathematical technique to the two associated Hamiltonian operators of the NLS equation results in the Fokas--Lenells (FL) equation \cite{fokas-1995class,lenells-2009exactly,lenells-2008novel,lenells-2010dressing}, which is an important generalization of the NLS equation, 
\begin{equation}\label{2-FL1}
	\ii q_{t}-\alpha_{1}q_{tx}+\alpha_{2}q_{xx}+\sigma|q|^{2}(q+\ii\alpha_{1}q_{x})=0,\ \ \ \sigma=\pm 1,\ \ \alpha_{1},\alpha_{2}\in\mathbb{R}.
\end{equation}
If we do the transformation $q\to\beta_{2}\sqrt{\beta_{1}}\ee^{\ii\beta_{2}x}q,\ \sigma\to-\sigma,\ \beta_{1}=\frac{\alpha_{2}}{\alpha_{1}},\  \beta_{2}=\frac{1}{\alpha_{1}}$, then we can obtain
\begin{equation}\label{2-FL2}
	q_{tx}+\beta_{1}\beta_{2}^{2}q-2\ii\beta_{1}\beta_{2}q_{x}-\beta_{1}q_{xx}+\ii\sigma\beta_{1}\beta_{2}^{2}|q|^{2}q_{x}=0.
\end{equation}
We can also do another transformation $	q\to\gamma_{2}\sqrt{\gamma_{1}}\ee^{\ii(\gamma_{2}x+2\gamma_{1}\gamma_{2}t)}q,\  \gamma_{1}=\frac{\alpha_{2}}{\alpha_{1}}>0,\ \gamma_{2}=\frac{1}{\alpha_{1}},\ \chi=x+\gamma_{1}t,\ \tau=-\gamma_{1}\gamma_{2}^{2}t$, which can yield
\begin{equation}\label{2-FL3}
	q_{\tau\chi}-q+\ii\sigma|q|^{2}q_{\chi}=0.
\end{equation}
An important feature of the equation \eqref{2-FL3} is that it describes the first negative flow of the integrable hierarchy associated with the derivative nonlinear 
Schr\"odinger (DNLS) equation, known as the Kaup-Newell (KN) system. The FL equation, in contrast to the NLS equation, exhibits a notable distinction in that the latter possesses both a focusing and defocusing version depending on parameter values. However, all variants of the FL equation \eqref{2-FL1}-\eqref{2-FL3} are mathematically identical, with differences only arising from variable transformations.

The FL equation, which possesses the Lax pair and integrability, has garnered significant attention from researchers, resulting in a variety of interesting findings. In \cite{vekslerchik2011lattice}, the authors studied the lattice representation of the FL equation and constructed corresponding dark soliton solutions. The bright/dark soliton solutions of the FL equation were systematically obtained under vanishing/nonvanishing boundary conditions in \cite{matsuno2012direct-1,matsuno2012direct-2}. The rogue wave solutions of the FL equation were investigated in detail in \cite{he-2012rogue}. In \cite{zhao2013algebro}, the authors presented the functional representation of algebro-geometric solutions to the FL equation and reduced them to $n$-dark soliton solutions via the degeneration of related Riemann surfaces. Three types of Darboux transformations for the FL equation were constructed, leading to the multi-soliton solutions in \cite{wang2020fokas}. The bilinear approach was utilized to investigate the FL equation in \cite{liu2022fokas}. Additionally, the integrable nonlocal form of the FL equation and its exact solutions were presented in \cite{zhang2019exact,li2021n}. Numerous studies on the FL equation exist, and we will not enumerate them all here. The aforementioned research on the FL equation and its nonlocal form demonstrate that they both belong to integrable differential equations of integer order. However, when describing the dynamics of certain systems, fractional differential equations can better reflect the actual variation law of the system \cite{wang-2020dynamical}. This naturally prompts us to search for the integrable fractional form of the FL equation.
	
In $2022$, Ablowitz, Been and Carr introduced a fractional NLS equation and a fractional KdV equation \cite{ablowitz2022fractional}, where the fractional operators were defined using Riesz fractional derivative \cite{Riesz1949,agrawal2001applications,lischke2020fractional}. The remarkable feature of these fractional differential equations is their integrability in the sense of inverse scattering transform (IST). Subsequently, Ablowitz et al. extended these fractional differential equations to the general Ablowitz-Kaup-Newell-Segur
	(AKNS) system \cite{ablowitz2022integrable} and investigated the fractional discrete NLS equation \cite{ablowitz2022discrete}. In \cite{weng2022dynamics,zhang2022interaction,yan2022multi}, the authors explored the fractional forms of the higher-order NLS equation, higher-order modified KdV equation, and given a new integrable multi-L$\acute{\rm e}$vy-index and mixed fractional nonlinear equations. Additionally, they employed the deep learning method to analyze the fractional equations \cite{zhong2022data}. In \cite{an2022nondegenerate}, the authors analyzed a fractional coupled Hirota equation, which extended the $2\times 2$ fractional Lax equation to the $3\times 3$ case. The fractional DNLS equation was investigated in \cite{an2023inverse}, expanding the integrable fractional AKNS system to the KN system. Furthermore, in \cite{mou-2023integrable}, the authors presented the fractional form of the $n$-component coupled NLS equation. The above research has provided us an effective idea for exploring the fractional form of FL equation. And we will discuss the fractional form of the following FL equation:
\begin{equation}\label{2-FL}
	q_{xt}-q+\ii\sigma|q|^{2}q_{x}=0.
\end{equation}
In the field of physics, the IST plays a crucial role in studying nonlinear wave equations with Lax pairs. In solving nonlinear integrable models, an advanced approach called the Riemann-Hilbert method, which is an extension of the IST, has gained widespread utilization \cite{biondini2014inverse,biondini-2016inverse,guo-2012}. Consequently, we employ the Riemann-Hilbert approach in our search for exact solutions to the fractional FL (fFL) equation.

This paper is organized as follows. In Sec.$2$, we extend the recursion operator function $\mathcal{F}(\mathcal{L})=\mathcal{L}^{-1}$, which corresponds to the FL equation, to its fractional form $\mathcal{F}_{fF}(\mathcal{L})=\mathcal{L}^{-1}|2\mathcal{L}|^{2\epsilon}$. This extension allows us to obtain the operator representation for the fFL equation. Additionally, we establish an appropriate Riemann-Hilbert problem and provide its solution. Moreover, we explore the relations between perturbations in the potential function and the scattering data, deriving the completeness of the corresponding squared eigenfunctions. By combining this result with the operator representation for the fFL equation, we derive the exact form of the fFL equation. In Sec.$3$, we present a determinant representation for the fractional $N$-soliton solution. We rigorously prove the fractional one-soliton solution and determine the fractional rational solution by considering the limit form of the fractional one-soliton solution. Based on the exact expressions of the fractional one-soliton and rational solutions, we analyze the effect of the fractional small parameter $\epsilon$ on these solutions. Furthermore, we investigate the decomposed property of the fractional $N$-soliton solution as $|t|$ tends to $\infty$.

\section{IST for the fFL equation}
In this section, we want to construct the exact form of the fFL equation and solve it by IST. Since the FL equation and the DNLS equation all belong to the KN system, we can combine the idea in \cite{an2023inverse} to find the integrable fractional form of the FL equation.

Firstly, we consider the following spectral problem:
\begin{subequations}\label{2-Lax-Phi}
	\begin{equation}\label{2-Lax-Phi-a}
		{\bf \Phi}_{x}={\bf U}{\bf \Phi},\ \ \ \ {\bf U}(\lambda;x,t)=-\ii\lambda^{2}\sigma_{3}+\lambda{\bf \widehat{Q}}(x,t),\ \ \ {\bf \widehat{Q}}(x,t)=\begin{bmatrix}
			0 & \hat{q}\\[2pt]
		\hat{r} & 0
		\end{bmatrix},
	\end{equation}
	\begin{equation}\label{2-Lax-Phi-b}
		{\bf \Phi}_{t}={\bf V}{\bf \Phi},\ \ \ \ {\bf V}(\lambda;x,t)=\begin{bmatrix}
			V_{1}&V_{2}\\[4pt]
			V_{3}&-V_{1}
		\end{bmatrix},
	\end{equation}
\end{subequations}
where $\lambda\in\mathbb{C}\cup\{\infty\}$ is the spectral parameter, ${\bf \Phi}={\bf \Phi}(\lambda;x,t)$ is the wave function, $\hat{q}=\hat{q}(x,t)$ and $\hat{r}=\hat{r}(x,t)$ are the potential functions, $V_{j}=V_{j}(\lambda;x,t),\ j=1,2,3$ are the polynomial functions, $\sigma_{3}$ is the Pauli's spin matrix, and here we directly introduce three Pauli's spin matrices:
\begin{equation*}
	\sigma_{1}=\begin{bmatrix}
		0&1\\
		1&0
	\end{bmatrix},\ \ \ \ 
	\sigma_{2}=\begin{bmatrix}
	0&-\ii\\
	\ii&0
\end{bmatrix},\ \ \ \ 
	\sigma_{3}=\begin{bmatrix}
	1&0\\
	0&-1
\end{bmatrix}.
\end{equation*}
To ensure the compatibility of \eqref{2-Lax-Phi}, the zero curvature equation need to be satisfied,
	\begin{equation*}
	{\bf U}_{t}-{\bf V}_{x}+[{\bf U},{\bf V}]=0,\ \ \ \ [{\bf U},{\bf V}]\equiv {\bf UV-VU},
\end{equation*}
which implies
\begin{equation}\label{2-V}
	\begin{split}
		V_{1x}-\lambda(\hat{q}V_{3}-\hat{r} V_{2})=0,&\\
		V_{2x}+2\ii\lambda^{2}V_{2}+2\lambda \hat{q}V_{1}-\lambda \hat{q}_{t}=0,&\\
		V_{3x}-2\ii\lambda^{2}V_{3}-2\lambda \hat{r}V_{1}-\lambda \hat{r}_{t}=0.&
	\end{split}
\end{equation}
The equation \eqref{2-V} can be rewritten as
\begin{equation}\label{2-V-rewrit}
	\lambda\begin{bmatrix}
		\hat{q}\\[2pt]\hat{r}
	\end{bmatrix}_{t}=2\lambda V_{10}\sigma_{3}\begin{bmatrix}
	\hat{q}\\[2pt]\hat{r}
\end{bmatrix}+(\mathcal{L}_{1}+2\ii \lambda^{2}\mathcal{L}_{2})\sigma_{3}\begin{bmatrix}
V_{2}\\[2pt]V_{3}
\end{bmatrix},
\end{equation}
where $V_{10}$ is an integration constant, and
\begin{equation*}
	\begin{split}
	\mathcal{L}_{1}=\sigma_{3}\partial,\ \ \ \ \mathcal{L}_{2}=\mathbb{I}+\ii\sigma_{3}\begin{bmatrix}
   \hat{q}\\[2pt]\hat{r}
   \end{bmatrix}\partial^{-1}_{-}\begin{bmatrix}
   	\hat{r},&\hat{q}
   \end{bmatrix},\ \ \ \ \partial=\partial/\partial_{x},\ \ \ \ \partial^{-1}_{-}=\int_{-\infty}^{x}\dd y.
	\end{split}
\end{equation*}
Choosing $V_{10}=-\frac{\ii}{4}(-\lambda^{2})^{n}$,
\begin{equation*}
	\begin{bmatrix}
		V_{2}\\[2pt]V_{3}
	\end{bmatrix}=\sum_{j=1}^{n}(-1)^{n-j}\begin{bmatrix}
	V_{2j}\\[2pt]V_{3j}
\end{bmatrix}\lambda^{2(n-j)+1},
\end{equation*}
substituting these expansions into \eqref{2-V-rewrit}, and collecting the same powers of $\lambda$, then we can get the hierarchy:
\begin{equation}\label{2-evolution eq}
	\begin{bmatrix}
	\hat{q}\\[2pt]\hat{r}
	\end{bmatrix}_{t}=-\frac{\ii}{2}\mathcal{F}(\mathcal{L})\begin{bmatrix}
	\hat{q}\\[2pt]-\hat{r}
\end{bmatrix},\ \ \ \ \mathcal{F}(\mathcal{L})=\mathcal{L}^{n},
\end{equation}
where
\begin{equation*}
	\mathcal{L}=\frac{1}{2\ii}\mathcal{L}_{1}\mathcal{L}_{2}^{-1}=-\frac{1}{2}\left(\ii\sigma_{3}\partial+{\bf u}_{x}\partial^{-1}_{-}{\bf u}^{\top}\sigma_{1}+{\bf u}{\bf u}^{\top}\sigma_{1}\right)=-\frac{1}{2}\begin{bmatrix}
		\ii\partial+\hat{q}_{x}\partial^{-1}_{-}\hat{r}+\hat{q}\hat{r}&\hat{q}_{x}\partial^{-1}_{-}\hat{q}+\hat{q}^{2}\\[4pt]
		\hat{r}_{x}\partial^{-1}_{-}\hat{r}+\hat{r}^{2}&-\ii\partial+\hat{r}_{x}\partial^{-1}_{-}\hat{q}+\hat{q}\hat{r}
	\end{bmatrix},
\end{equation*}
${\bf u}(x,t)=\begin{bmatrix}
	\hat{q},&\hat{r}
\end{bmatrix}^{\top}$, the superscript $^{\top}$ denotes the transpose. $\mathcal{L}$ is the adjoint of
\begin{equation*}
\widetilde{\mathcal{L}}=-\frac{1}{2}\begin{bmatrix}
	-\ii\partial-\hat{r}_{x}\partial^{-1}_{+}\hat{q}+\hat{q}\hat{r}&\hat{r}_{x}\partial^{-1}_{+}\hat{r}-\hat{r}^{2}\\[4pt]
	\hat{q}_{x}\partial^{-1}_{+}\hat{q}-\hat{q}^{2}&\ii\partial-\hat{q}_{x}\partial^{-1}_{+}\hat{r}+\hat{q}\hat{r}
\end{bmatrix},\ \ \ \ \partial^{-1}_{+}=\int_{x}^{+\infty}\dd y.
\end{equation*}
The hierarchy \eqref{2-evolution eq} can yield many integrable equations by choosing different values of $n$. For example, the DNLS equation can be obtained by choosing $n=2$ and $\hat{r}=\sigma \hat{q}^{*}\ (\sigma=\pm 1)$, where the superscript $^{*}$ denotes the complex conjugate. In addition, if we take $n=-1$, $\hat{r}=-\sigma\hat{q}^{*}\ (\sigma=\pm 1)$, and make a transformation $\hat{q}=q_{x}$, then we can derive the FL equation \eqref{2-FL}.

The operator function $\mathcal{F}(\mathcal{L})$ can be related to
the dispersion relation of the linearization of \eqref{2-evolution eq}, and  $\mathcal{F}(\mathcal{L})$ can be generalized to the more general form $\mathcal{F}(\lambda)$ by using the properties of the squared eigenfunctions which will be discussed later. The linearization of \eqref{2-evolution eq} with respect to $\hat{q}$ is as follows:
\begin{equation}\label{2-evolution-q-linear}
\hat{q}_{t}(x,t)=-\frac{\ii}{2}\left(-\frac{\ii}{2}\partial\right)^{n}\hat{q}(x,t).
\end{equation}
Then we put $\hat{q}(x,t)\thicksim\ee^{\ii(\lambda^{2} x-\omega(\lambda^{2})t)}$ into \eqref{2-evolution-q-linear}, which can yield
\begin{equation}\label{2-relation-F-ome}
	\mathcal{F}\left(\frac{\lambda^{2}}{2}\right)=2\omega\left(\lambda^{2}\right).
\end{equation}
Obviously, the relation \eqref{2-relation-F-ome} will be invariant after the transformation $\hat{q}=q_{x}$, then we can directly obtain $\omega_{F}(\lambda^{2})=\lambda^{-2}$ which corresponds to the FL equation \eqref{2-FL}. Following the rule in \cite{ablowitz2022fractional}, we assume the dispersion relation of the fFL equation is $\omega_{fF}(\lambda^{2})=\lambda^{-2}|\lambda^{2}|^{2\epsilon},\ \epsilon\in[0,1)$, then the linearization of the fFL equation is
\begin{equation*}
	 q_{xt}-|-\partial^{2}|^{\epsilon}q=0,
\end{equation*}
where $|-\partial^{2}|^{\epsilon}$ is called the Riesz fractional derivative \cite{ablowitz2022fractional}. Combining the relation \eqref{2-relation-F-ome}, there is
\begin{equation*}
	\mathcal{F}_{fF}(\lambda)=\lambda^{-1}|2\lambda|^{2\epsilon},
\end{equation*}
which will lead to the operator function $\mathcal{F}_{fF}(\mathcal{L})=\mathcal{L}^{-1}|2\mathcal{L}|^{2\epsilon}$ of the fFL equation.

\subsection{Direct scattering}\label{subsec-diret}
We rewrite the Lax pair of the fFL equation as
\begin{equation}\label{2-Lax pair-fFL}
	{\bf\Phi}_{x}=(-\ii\lambda^{2}\sigma_{3}+\lambda {\bf Q}_{x}){\bf\Phi},\ \ \ \ {\bf\Phi}_{t}={\bf V}{\bf\Phi},\ \ \ \ {\bf Q}=\begin{bmatrix}
		0&q\\[2pt]
		r&0
	\end{bmatrix},
\end{equation}
where $r=-\sigma q^{*},\ \sigma=\pm 1$, and without loss of generality, we take $\sigma=-1$.
Assume that the potential function $q(x,t)$ is sufficiently smooth and rapidly tends to zero as $|x|\to\infty$. Note that the matrix ${\bf V}(\lambda;x,t)$ cannot be given explicitly for the fFL equation, while the constraint ${\bf V}(\lambda;x,t)\to\frac{\ii}{4}\mathcal{F}_{fF}(\lambda^{2})\sigma_{3},\ |x|\to\infty$ need to be satisfied. 

In Sec.\ref{subsec-diret}, we will perform 
spectral analysis on the $x$-part of \eqref{2-Lax pair-fFL}, thus the variable $t$ temporarily be considered as a dummy variable. From the matrix spectral problem \eqref{2-Lax pair-fFL}, we have the following asymptotic behavior:
	\begin{equation}\label{2-asy-Phi}
		{\bf\Phi}^{\pm}=
		\begin{bmatrix}
			\phi^{\pm}_{1},&\phi^{\pm}_{2}
		\end{bmatrix}\to
		\exp\left(-\ii\lambda^{2}\sigma_{3}x\right),\ \ \ x\to\pm\infty,
	\end{equation}
where the superscripts $^{\pm}$ refer to the cases of $x\to\pm\infty$, respectively. 

This motivates us to introduce the variable transformation:
\begin{equation}\label{2-psi-phi}
	{\bf\Phi}={\bf\Psi}\exp\left(-\ii\lambda^{2}\sigma_{3}x\right),
\end{equation}
which can yield the canonical normalization:
\begin{equation}\label{2-asy-Psi}
	{\bf\Psi}^{\pm}=\begin{bmatrix}
		\psi^{\pm}_{1},&\psi^{\pm}_{2}
	\end{bmatrix}\to\mathbb{I},\ \ \ \  x\to\pm\infty,
\end{equation}
where the superscripts $^{\pm}$ also refer to the cases of $x\to\pm\infty$, respectively. 
Inserting \eqref{2-psi-phi} into the first equation in \eqref{2-Lax pair-fFL}, there is
\begin{equation}\label{2-lax-Psi}
	{\bf\Psi}_{x}=-\ii\lambda^{2}[\sigma_{3},{\bf\Psi}]+\lambda{\bf Q}_{x}{\bf\Psi}.
\end{equation}
Then the Volterra integral equations for ${\bf\Psi}^{\pm}(\lambda;x,t)$ can be obtained,
\begin{equation}\label{2-Psi-Volterra}
	{\bf\Psi}^{\pm}(\lambda;x,t)=\mathbb{I}+\lambda\int_{\pm\infty}^{x}\ee^{-\ii\lambda^{2}(x-y){\rm ad}\widehat{\sigma}_{3}}\left({\bf Q}_{y}(y,t){\bf\Psi}^{\pm}(\lambda;y,t)\right)dy,
\end{equation}
where $\ee^{{\rm ad}\widehat{\sigma}_{3}}{\bf X}=\ee^{\sigma_{3}}{\bf X}\ee^{-\sigma_{3}}$ with ${\bf X}$ being a $2\times 2$ matrix. In addition, the large-$\lambda$ expansions of ${\bf\Psi}^{\pm}(\lambda;x,t)$ are given by
\begin{equation*}
	{\bf \Psi}^{\pm}(\lambda;x,t)=\exp\left(\frac{\ii\sigma_{3}}{2}\int_{\pm\infty}^{x}|q_{y}(y,t)|^{2}dy\right)+\mathcal{O}(\lambda^{-1}).
\end{equation*}
Through a transformation, we can transform the linear equation \eqref{2-lax-Psi} to a spectral problem of the Zakharov–Shabat (ZS) type, thereby rewriting the Volterra integral equations for ${\bf \Psi}(\lambda;x,t)$ into the form corresponding to the ZS system. This makes it easy for us to analyze the analyticity of ${\bf \Psi}(\lambda;x,t)$, and the similar results have been reported in Lemma $1$ in \cite{pelinovsky-2018}. We summarize the analyticity of ${\bf \Psi}(\lambda;x,t)$ in the proposition below.
\begin{prop} \label{2-prop-phi-analytic}
	We assume $\partial_{x} q(x,t)\in\mathrm{L}^{1}(\mathbb{R})\cap\mathrm{L}^{3}(\mathbb{R})$ and $\partial_{xx}q(x,t)\in\mathrm{L}^{1}(\mathbb{R})$, then there exist the unique
	solutions satisfying the Volterra integral equations \eqref{2-Psi-Volterra} for every $\lambda\in\Sigma$, where $\Sigma=\mathbb{R}\cup\ii\mathbb{R}$, and ${\bf\Psi}^{\pm}(\lambda;x,t)$ have the following properties:
	\begin{itemize}
		\item  The column vectors $\psi^{-}_{1}$ and $\psi^{+}_{2}$ are analytic for $\lambda\in \mathbb{D_{+}}$ and continuous for $\lambda\in \mathbb{D_{+}}\cup\Sigma$,
		\item  The column vectors $\psi^{+}_{1}$ and $\psi^{-}_{2}$ are analytic for $\lambda\in \mathbb{D_{-}}$ and continuous for $\lambda\in \mathbb{D_{-}}\cup\Sigma$,
	\end{itemize}
where
\begin{equation*}
	\mathbb{D_{+}}=\left\{\lambda\big|\arg\lambda\in\left(0,\frac{\pi}{2}\right)\cup\left(\pi,\frac{3\pi}{2}\right)\right\},\ \ \ \ \mathbb{D_{-}}=\left\{\lambda\big|\arg\lambda\in\left(\frac{\pi}{2},\pi\right)\cup\left(\frac{3\pi}{2},2\pi\right)\right\}.
\end{equation*}
\end{prop}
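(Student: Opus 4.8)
The plan is to reduce the statement to the classical existence/analyticity theory for the Jost functions of a Zakharov--Shabat (ZS) scattering problem, which is exactly the route flagged before the proposition and worked out in Lemma~1 of \cite{pelinovsky-2018}. First I would carry out the announced transformation: introduce the new spectral variable $k=\lambda^{2}$ together with a $\lambda$- and $q$-dependent gauge matrix $G$ (whose diagonal part is the factor $\exp(\tfrac{\ii}{2}\sigma_3\int_{\pm\infty}^{x}|q_y|^2\,\dd y)$ visible in the large-$\lambda$ expansion of ${\bf\Psi}^{\pm}$), and set $\widetilde{\bf\Psi}=G^{-1}{\bf\Psi}$. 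The goal of this step is to obtain a genuine ZS-type equation $\widetilde{\bf\Psi}_x=-\ii k[\sigma_3,\widetilde{\bf\Psi}]+\widetilde{\bf Q}\,\widetilde{\bf\Psi}$ in which the spectral parameter enters \emph{linearly} and the effective potential $\widetilde{\bf Q}$ is $k$-independent. Differentiating the gauge factor produces the quadratic term $|q_x|^{2}$, and converting the quadratic $\lambda$-dependence of $\lambda{\bf Q}_x$ into linear $k$-dependence produces a derivative term $\sim q_{xx}$ and cubic terms $\sim|q_x|^{2}q_x$. Under this substitution the four sectors $\mathbb{D}_\pm$ of the $\lambda$-plane are sent to the upper/lower half $k$-planes, where the ZS kernels $\ee^{\mp 2\ii k(x-y)}$ are bounded.

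Second, I would rewrite the Volterra equations \eqref{2-Psi-Volterra} in the transformed variables and set up the Neumann series $\widetilde{\bf\Psi}=\sum_n\widetilde{\bf\Psi}_n$. For the column $\psi^{-}_1$ (resp. $\psi^{+}_2$), integration runs from $-\infty$ (resp. $+\infty$) to $x$, so that after the $\mathrm{ad}\,\widehat{\sigma}_3$ action the only exponential appearing in the relevant entry is bounded by $1$ throughout $\mathbb{D}_+$; the symmetric statement holds for $\psi^{+}_1,\psi^{-}_2$ on $\mathbb{D}_-$. Crucially, because $\widetilde{\bf Q}$ no longer carries the factor $\lambda$, the ordered Volterra integration gives the clean estimate $\|\widetilde{\bf\Psi}_n(x)\|\le\tfrac{1}{n!}\big(\int_{\mp\infty}^{x}\|\widetilde{\bf Q}\|\,\dd y\big)^{n}$, which is \emph{uniform} in $k$ on the closed half-plane (a direct attack on \eqref{2-Psi-Volterra} would instead carry a factor $|\lambda|^{n}$ into the $n$-th iterate and lose this uniformity). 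The hypotheses are precisely what make $\widetilde{\bf Q}\in\mathrm{L}^{1}(\mathbb{R})$: $\partial_{xx}q\in\mathrm{L}^{1}$ controls the derivative term, $\partial_x q\in\mathrm{L}^{1}$ the linear term, and $\partial_x q\in\mathrm{L}^{3}$ controls the cubic term $|q_x|^{2}q_x$ (while $\mathrm{L}^{1}\cap\mathrm{L}^{3}\subset\mathrm{L}^{2}$ by interpolation handles the quadratic term $|q_x|^{2}$).

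Third comes the transfer of regularity. Each iterate $\widetilde{\bf\Psi}_n$ is analytic in $k$ on the open half-plane, being a composition of entire exponentials, the $k$-independent potential, and the previous (analytic) iterate; since the series converges uniformly on compact subsets, Weierstrass's / Morera's theorem makes the limit analytic there, and undoing the gauge and the holomorphic, locally invertible substitution $k=\lambda^{2}$ (away from $\lambda=0$, which lies on the sector corners) yields analyticity of the designated columns on $\mathbb{D}_\pm$. For continuity up to $\Sigma=\mathbb{R}\cup\ii\mathbb{R}$, I use that $k=\lambda^{2}\in\mathbb{R}$ there, so the kernels have modulus exactly $1$ and the same dominating estimate persists on $\mathbb{D}_\pm\cup\Sigma$; dominated convergence then gives continuity of each column up to the boundary, and existence for every $\lambda\in\Sigma$. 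Uniqueness is automatic from the Volterra structure: the difference of two solutions solves a homogeneous Volterra equation with continuous kernel, so Gronwall's inequality forces it to vanish.

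The main obstacle is the first step together with the bookkeeping of the second: exhibiting a gauge transformation that truly linearizes the $\lambda$-dependence, and then verifying that the resulting effective potential $\widetilde{\bf Q}$ lies in $\mathrm{L}^{1}(\mathbb{R})$ under the stated hypotheses. This is where the somewhat unusual function-space conditions $\partial_x q\in\mathrm{L}^{1}\cap\mathrm{L}^{3}$ and $\partial_{xx}q\in\mathrm{L}^{1}$ are genuinely used—a naive Neumann-series argument directly on \eqref{2-Psi-Volterra} would need only $\partial_x q\in\mathrm{L}^{1}$, but it does not connect cleanly to the half-plane ZS dictionary nor give estimates uniform in $\lambda$. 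The remaining delicate point is to keep the convergence and analyticity estimates uniform \emph{up to} $\Sigma$, where the exponential kernels no longer decay but are merely bounded.
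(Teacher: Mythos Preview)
Your proposal is correct and follows precisely the route the paper indicates: the paper does not give a self-contained proof of this proposition but instead points to the gauge reduction to a Zakharov--Shabat problem and cites Lemma~1 of \cite{pelinovsky-2018}, which is exactly the machinery you have unpacked. Your identification of the gauge factor, the substitution $k=\lambda^{2}$, the role of the hypotheses $\partial_x q\in\mathrm{L}^{1}\cap\mathrm{L}^{3}$ and $\partial_{xx}q\in\mathrm{L}^{1}$ in making the transformed potential integrable, and the Neumann-series/Weierstrass argument all match what the cited reference does.
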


We know that ${\bf\Phi}^{\pm}(\lambda;x,t)$ can be regarded as the fundamental solutions of \eqref{2-Lax-Phi}, then they are all linearly dependent on each other by the theory of ordinary differential equations,
	\begin{equation}\label{2-Phi-S}
	{\bf\Phi}^{-}={\bf\Phi}^{+}{\bf S}(\lambda;t),\ \ \ \ \lambda\in\Sigma,
\end{equation}
where ${\bf S}(\lambda;t)=\left(s_{ij}(\lambda;t)\right)_{i,j=1,2}$ is called the scattering matrix. Combining with the relation \eqref{2-psi-phi}, we have
\begin{equation}\label{2-Psi-S}
	{\bf\Psi}^{-}={\bf\Psi}^{+}\ee^{-\ii\lambda^{2}x{\rm ad}\widehat{\sigma}_{3}}{\bf S}(\lambda;t),\ \ \ \ \lambda\in\Sigma.
\end{equation}
The scattering matrix ${\bf S}(\lambda;t)$ can be expanded at infinity of $\lambda$,
\begin{equation*}
	{\bf S}(\lambda;t)=\exp\left(\frac{\ii\sigma_{3}}{2}\int_{-\infty}^{+\infty}|q_{x}(x,t)|^{2}dx\right)+\mathcal{O}(\lambda^{-1}).
\end{equation*}
By applying Abel's formula and ${\rm tr}{\bf Q}_{x}=0$, we have $\det{\bf \Psi}=1$, then $\det{\bf S}=1$ can be derived from \eqref{2-Psi-S}. Furthermore, \begin{equation*}
	\begin{split}
		&s_{11}(\lambda;t)=\left|\psi_{1}^{-},\psi_{2}^{+}\right|,\ \ \ \ s_{12}(\lambda;t)=\left|\psi_{2}^{-},\psi_{2}^{+}\right|\ee^{2\ii\lambda^{2}x},\\
		&s_{22}(\lambda;t)=\left|\psi_{1}^{+},\psi_{2}^{-}\right|,\ \ \ \ s_{21}(\lambda;t)=\left|\psi_{1}^{+},\psi_{1}^{-}\right|\ee^{-2\ii\lambda^{2}x}.
	\end{split}
\end{equation*}
	Therefore, $s_{11}(\lambda;t)$ and $s_{22}(\lambda;t)$ are analytic in $\mathbb{D}_{\pm}$, respectively. As usual, the off-diagonal scattering
coefficients cannot be extended off the contour $\Sigma$. 
	\begin{prop}\label{2-prop-sym}
	The solution ${\bf\Psi}(\lambda;x,t)$ and the scattering matrix ${\bf S}(\lambda;t)$ all have two symmetry reductions:
	\begin{itemize}
		\item ${\bf\Psi}(\lambda;x,t){\bf\Psi}^{\dagger}(-\lambda^{*};x,t)=\mathbb{I}$,\ \ \ ${\bf S}(\lambda;t){\bf S}^{\dagger}(-\lambda^{*};t)=\mathbb{I}$.
		\item ${\bf\Psi}(\lambda;x,t)=\sigma_{3}{\bf\Psi}(-\lambda;x,t)\sigma_{3}$,\ \ \ ${\bf S}(\lambda;t)=\sigma_{3}{\bf S}(-\lambda;t)\sigma_{3}$.
	\end{itemize}
\end{prop}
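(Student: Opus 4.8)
The plan is to reduce both involutions to corresponding symmetries of the coefficient matrix ${\bf U}(\lambda)=-\ii\lambda^{2}\sigma_{3}+\lambda{\bf Q}_{x}$ of the $x$-part of the Lax pair \eqref{2-Lax pair-fFL}, and then propagate them to the eigenfunctions and the scattering matrix by a uniqueness-of-solutions argument combined with the boundary normalizations. The single fact I would establish first is that, under the reduction $r=-\sigma q^{*}$ with $\sigma=-1$ (so $r=q^{*}$), the matrix ${\bf Q}_{x}$ is Hermitian, ${\bf Q}_{x}^{\dagger}={\bf Q}_{x}$, and satisfies $\sigma_{3}{\bf Q}_{x}\sigma_{3}=-{\bf Q}_{x}$. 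From these two elementary identities the symmetries of ${\bf U}$ follow immediately:
\begin{equation*}
{\bf U}^{\dagger}(\lambda)=-{\bf U}(-\lambda^{*}),\qquad {\bf U}(\lambda)=\sigma_{3}{\bf U}(-\lambda)\sigma_{3}.
\end{equation*}

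For the first (Hermitian) symmetry I would argue that, starting from ${\bf\Phi}_{x}={\bf U}(\lambda){\bf\Phi}$, both ${\bf\Phi}^{\dagger}(\lambda)$ and ${\bf\Phi}(-\lambda^{*})^{-1}$ solve the same adjoint linear ODE ${\bf Y}_{x}=-{\bf Y}\,{\bf U}(-\lambda^{*})$, hence differ by an $x$-independent factor ${\bf C}(\lambda;t)$. Evaluating as $x\to\pm\infty$ using ${\bf\Phi}^{\pm}\to\ee^{-\ii\lambda^{2}\sigma_{3}x}$ — so that ${\bf\Phi}^{\pm}(\lambda)^{\dagger}$ and ${\bf\Phi}^{\pm}(-\lambda^{*})^{-1}$ both tend to $\ee^{\ii(\lambda^{*})^{2}\sigma_{3}x}$ — forces ${\bf C}=\mathbb{I}$, giving ${\bf\Phi}^{\pm}(-\lambda^{*}){\bf\Phi}^{\pm}(\lambda)^{\dagger}=\mathbb{I}$. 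Stripping the gauge factor via ${\bf\Phi}={\bf\Psi}\ee^{-\ii\lambda^{2}\sigma_{3}x}$ cancels the exponentials and yields ${\bf\Psi}(\lambda){\bf\Psi}^{\dagger}(-\lambda^{*})=\mathbb{I}$ after the replacement $\lambda\mapsto-\lambda^{*}$. The second symmetry is handled analogously: $\sigma_{3}{\bf\Phi}(-\lambda)\sigma_{3}$ solves the \emph{same} equation as ${\bf\Phi}(\lambda)$ because $\sigma_{3}{\bf U}(-\lambda)\sigma_{3}={\bf U}(\lambda)$, and the boundary identity $\sigma_{3}\ee^{-\ii\lambda^{2}\sigma_{3}x}\sigma_{3}=\ee^{-\ii\lambda^{2}\sigma_{3}x}$ fixes the constant to $\mathbb{I}$, so $\sigma_{3}{\bf\Phi}^{\pm}(-\lambda)\sigma_{3}={\bf\Phi}^{\pm}(\lambda)$ and, after the same gauge removal, $\sigma_{3}{\bf\Psi}(-\lambda)\sigma_{3}={\bf\Psi}(\lambda)$.

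The scattering-matrix identities I would obtain by substituting the eigenfunction symmetries into ${\bf S}=({\bf\Phi}^{+})^{-1}{\bf\Phi}^{-}$ coming from \eqref{2-Phi-S}. For the first symmetry, taking $\dagger$ and using ${\bf\Phi}^{\pm}(\lambda)^{\dagger}={\bf\Phi}^{\pm}(-\lambda^{*})^{-1}$ gives ${\bf S}(\lambda)^{\dagger}={\bf\Phi}^{-}(-\lambda^{*})^{-1}{\bf\Phi}^{+}(-\lambda^{*})={\bf S}(-\lambda^{*})^{-1}$, i.e.\ ${\bf S}(\lambda){\bf S}^{\dagger}(-\lambda^{*})=\mathbb{I}$; for the second, conjugating by $\sigma_{3}$ and using $\sigma_{3}{\bf\Phi}^{\pm}(-\lambda)\sigma_{3}={\bf\Phi}^{\pm}(\lambda)$ yields $\sigma_{3}{\bf S}(-\lambda)\sigma_{3}={\bf S}(\lambda)$ directly. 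Since ${\bf S}$ is the \emph{same} matrix appearing in the $\Psi$-relation \eqref{2-Psi-S}, the diagonal gauge factors $\ee^{-\ii\lambda^{2}x\,\mathrm{ad}\widehat{\sigma}_{3}}$ do not affect these algebraic identities.

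The main obstacle I anticipate is purely bookkeeping: correctly tracking the exponential gauge factor $\ee^{-\ii\lambda^{2}\sigma_{3}x}$ and its Hermitian conjugate when passing between ${\bf\Phi}$ and ${\bf\Psi}$, and verifying the substitutions $(-\lambda^{*})^{2}=(\lambda^{2})^{*}$ and $(-\lambda)^{2}=\lambda^{2}$ so that the exponentials cancel exactly rather than up to a residual phase. One should also check that the involution $\lambda\mapsto-\lambda^{*}$ (reflection across $\ii\mathbb{R}$) maps $\mathbb{D}_{+}$ to $\mathbb{D}_{-}$, so that the first identity consistently relates the $\mathbb{D}_{+}$-analytic columns of Proposition \ref{2-prop-phi-analytic} to the $\mathbb{D}_{-}$-analytic ones, whereas $\lambda\mapsto-\lambda$ preserves each domain; the algebraic derivation itself presents no essential difficulty once the two identities for ${\bf U}$ are in place.
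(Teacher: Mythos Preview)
Your proposal is correct and follows essentially the same approach as the paper: establish the two symmetries of the coefficient matrix, show that the transformed eigenfunction solves the same (or adjoint) equation, and then use the boundary normalization together with uniqueness to identify it with the original; the scattering-matrix identities then follow from the relation between ${\bf\Phi}^{\pm}$ (or ${\bf\Psi}^{\pm}$) and ${\bf S}$. The only cosmetic difference is that the paper works directly with the normalized functions ${\bf\Psi}^{\pm}$ and their equation \eqref{2-lax-Psi}, so the asymptotic condition ${\bf\Psi}^{\pm}\to\mathbb{I}$ fixes the constant immediately, whereas you work with ${\bf\Phi}^{\pm}$ first and then strip the gauge factor $\ee^{-\ii\lambda^{2}\sigma_{3}x}$; this avoids none of the substance but does introduce the exponential bookkeeping you flag as an obstacle, which simply disappears if you run the argument at the ${\bf\Psi}$ level from the start.
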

\begin{proof}
	Based on the equation \eqref{2-lax-Psi}, we can derive its adjoint equation:
	\begin{equation}\label{2-Psi-adjoint}
		\widetilde{{\bf\Psi}}_{x}=-\ii\lambda^{2}[\sigma_{3},\widetilde{{\bf\Psi}}]-\lambda\widetilde{{\bf\Psi}}{\bf Q}_{x}.
	\end{equation}
	Now we assume ${\bf\Psi}(\lambda;x,t)$ be a solution of
	the equation \eqref{2-lax-Psi} corresponding to the eigenvalue $\lambda$. Through simple calculations, it can be
	found that $\left({\bf\Psi}(\lambda;x,t)\right)^{-1}$ satisfies the adjoint equation \eqref{2-Psi-adjoint}, so we call $\left({\bf\Psi}(\lambda;x,t)\right)^{-1}$ a matrix
	adjoint solution with respect to $\lambda$. In addition, it is easy to find the symmetry reduction of the matrix function ${\bf Q}(x,t)$: ${\bf Q}^{\dagger}={\bf Q}$, then we have
	\begin{equation*}
		\begin{split}
			\left({\bf\Psi}^{\dagger}(-\lambda^{*};x,t)\right)_{x}=&-\ii\lambda^{2}\left[\sigma_{3},\Psi^{\dagger}(-\lambda^{*};x,t)\right]-\lambda\Psi^{\dagger}(-\lambda^{*};x,t){\bf Q}_{x}^{\dagger}(x,t)\\
			=&-\ii\lambda^{2}\left[\sigma_{3},\Psi^{\dagger}(-\lambda^{*};x,t)\right]-\lambda\Psi^{\dagger}(-\lambda^{*};x,t){\bf Q}_{x}(x,t).
		\end{split}
	\end{equation*}
	So ${\bf\Psi}^{\dagger}(-\lambda^{*};x,t)$ satisfies the equation \eqref{2-Psi-adjoint}. Applying the canonical asymptotic condition \eqref{2-asy-Psi}, $\left({\bf\Psi}(\lambda;x,t)\right)^{-1}$ and ${\bf\Psi}^{\dagger}(-\lambda^{*};x,t)$ have the same
	asymptotic behavior. Therefore, we can obtain $\left({\bf\Psi}(\lambda;x,t)\right)^{-1}={\bf\Psi}^{\dagger}(-\lambda^{*};x,t)$ due to the
	solution of the scattering problem \eqref{2-Psi-adjoint} is uniquely determined by its boundary condition. By direct calculation, we can find that $\sigma_{3}{\bf\Psi}(-\lambda;x,t)\sigma_{3}$ satisfies the equation \eqref{2-lax-Psi} and its boundary condition is consistent with ${\bf\Psi}(\lambda;x,t)$, which proves the second symmetry of ${\bf\Psi}(\lambda;x,t)$. Then the symmetry reductions of ${\bf S}(\lambda;t)$ can be obtained by combining the relation \eqref{2-Psi-S}.
\end{proof}
Based on the proposition \ref{2-prop-sym}, we have $s_{11}(\lambda;t)=s_{11}(-\lambda;t)=s_{22}^{*}(\lambda^{*};t)=s_{22}^{*}(-\lambda^{*};t)$. So the zeros of $s_{11}(\lambda;t)$ appear in pairs, and we can assume that $s_{11}(\lambda;t)$ has simple zeros defined by $\lambda_{n},\ n=1,2,\cdots,N$ in the ${\rm\uppercase\expandafter{\romannumeral1}}$ quadrant, and $\lambda_{n+N}=-\lambda_{n}$ in the ${\rm\uppercase\expandafter{\romannumeral3}}$ quadrant. That is to say, $s_{11}(\lambda_{j};t)=0,\ \frac{\partial s_{11}(\lambda;t)}{\partial\lambda}|_{\lambda=\lambda_{j}}\neq0,\ j=1,2,\cdots,2N$. Then there is $s_{22}(\lambda_{j}^{*};t)=0,\ \frac{\partial s_{22}(\lambda;t)}{\partial\lambda}|_{\lambda=\lambda_{j}^{*}}\neq0,\ j=1,2,\cdots,2N$. We call the zeros $\lambda=\lambda_{j},\ \lambda=\lambda_{j}^{*}$ of $s_{11}(\lambda)$ and $s_{22}(\lambda)$ the discrete spectrums of \eqref{2-lax-Psi}, respectively. And the equation \eqref{2-lax-Psi} also has continuous spectrums, which all lie on $\Sigma$. Here we define the discrete
spectrums by the set $\Lambda=\Lambda_{1}\cup\Lambda_{2}$,
	\begin{equation*}
		\Lambda_{1}=\big\{\lambda_{n},\ -\lambda_{n}\big\}_{n=1}^{N},\ \ \ \ \Lambda_{2}=\big\{ \lambda_{n}^{*},\ -\lambda_{n}^{*}\big\}_{n=1}^{N},
	\end{equation*}
	whose distributions are shown in Fig.\ref{fig:contour}.
	\begin{figure}
		\centering
		\includegraphics[width=0.45\linewidth]{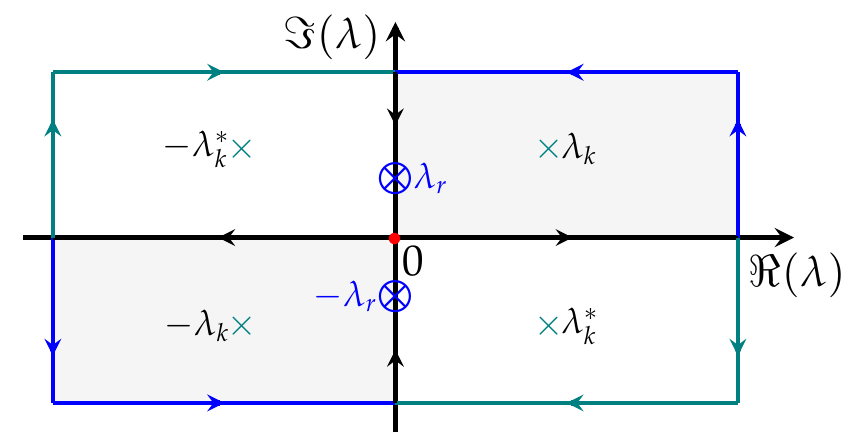}
		\caption{The complex $\lambda$-plane. Region $\mathbb{D_{+}}$ (grey region), Region $\mathbb{D_{-}}$ (white region), discrete spectra, and orientation of the
			contours for the matrix
			Riemann-Hilbert problem in the
			IST. The discrete spectra corresponding to the fractional rational solution are indicated by $\bigotimes$. The solid black lines along the grey and white areas are denoted as $\Sigma_{+}$ and $\Sigma_{-}$, respectively. And the solid blue line and the solid bluish-green line correspond to the integral paths $\Gamma_{+}$ and $\Gamma_{-}$, respectively.}
		\label{fig:contour}
	\end{figure}

\subsection{Time evolution}	
	The asymptotic behavior of the
	associated time evolution operator ${\bf V}(\lambda;x,t)$, which cannot be represented generally, is as follows:
	\begin{equation*}
		{\bf V}\to\frac{\ii}{4}\mathcal{F}_{fF}(\lambda^{2})\sigma_{3},\ \ \ \ |x|\to\infty.
	\end{equation*}
	Then the time evolution of the scattering data can be obtained,
	\begin{equation*}
		\begin{split}
			&s_{11}(\lambda;t)=s_{11}(\lambda;0),\ \ \ \  s_{12}(\lambda;t)=\ee^{-\frac{\ii}{2}\mathcal{F}_{fF}(\lambda^{2})t}s_{12}(\lambda;0),\\
			&s_{22}(\lambda;t)=s_{22}(\lambda;0),\ \ \ \ s_{21}(\lambda;t)=\ee^{\frac{\ii}{2}\mathcal{F}_{fF}(\lambda^{2})t}s_{21}(\lambda;0).
		\end{split}
	\end{equation*}

\subsection{Inverse scattering}\label{2-sec-IST}
We define a matrix function ${\bf T}(\lambda;x,t)$:
\begin{equation*}
	\begin{split}
		&{\bf T}_{+}=\begin{bmatrix}
			\psi_{1}^{-},&\psi_{2}^{+}
		\end{bmatrix}={\bf \Psi}^{-}{\bf H}_{1}+{\bf\Psi}^{+}{\bf H}_{2},\ \ \ \ \ \ \ \ \ \ \lambda\in\mathbb{D_{+}},\\
		&{\bf T}_{-}=\begin{bmatrix}
			\widetilde{\psi}^{-,1},&
			\widetilde{\psi}^{+,2}
		\end{bmatrix}^{\top}={\bf H}_{1}\widetilde{{\bf \Psi}}^{-}+{\bf H}_{2}\widetilde{{\bf\Psi}}^{+},\ \ \ \ \lambda\in\mathbb{D_{-}},
	\end{split}
\end{equation*}
where ${\bf H}_{1}={\rm diag}\left(1,0\right),\ {\bf H}_{2}={\rm diag}\left(0,1\right)$, and $\widetilde{\psi}^{\pm,j}(\lambda;x,t)$ refer to the $j$-th row of $\widetilde{\bf\Psi}^{\pm}(\lambda;x,t)$. Apparently, ${\bf T}_{\pm}(\lambda;x,t)$ are analytic in $\mathbb{D}_{\pm}$ by reviewing the analyticity of ${\bf\Psi}^{\pm}(\lambda;x,t)$. And we summarize some properties of ${\bf T}_{\pm}(\lambda;x,t)$ in the following
proposition.
\begin{prop}\label{2-prop-T}
	The matrix function ${\bf T}(\lambda;x,t)$ has the following properties:
	\begin{enumerate}
	\item[1] $\det {\bf T}_{+}(\lambda;x,t)=s_{11}(\lambda;t),\ \ \det {\bf T}_{-}(\lambda;x,t)=s_{22}(\lambda;t)$.\\
	\item[2] $\lim\limits_{x\to+\infty}{\bf T}_{+}(\lambda;x,t)=\begin{bmatrix}
		s_{11}(\lambda;t)&0\\
		0&1
	\end{bmatrix}$,\ \ $\lim\limits_{x\to-\infty}{\bf T}_{-}(\lambda;x,t)=\begin{bmatrix}
	1&0\\
	0&s_{22}(\lambda;t)
\end{bmatrix}$.\\
\item[3] ${\bf T}_{+}(\lambda;x,t)\to{\bf E}(x,t),\ {\bf T}_{-}(\lambda;x,t)\to{\bf E}^{-1}(x,t)$, {\rm as} $\lambda\to\infty$, where
\begin{equation*}
	{\bf E}(x,t)=\begin{bmatrix}
		\exp\left(\frac{\ii}{2}\int_{-\infty}^{x}|q_{y}(y,t)|^{2}\dd y\right)
		&0\\
		0&\exp\left(\frac{\ii}{2}\int_{x}^{+\infty}|q_{y}(y,t)|^{2}\dd y\right)
	\end{bmatrix}.
\end{equation*}
\item[4] ${\bf T}_{+}^{\dagger}(-\lambda^{*};x,t)={\bf T}_{-}(\lambda;x,t),\ \ {\bf T}_{\pm}(\lambda;x,t)=\sigma_{3}{\bf T}_{\pm}(-\lambda;x,t)\sigma_{3}$.
\end{enumerate}
\end{prop}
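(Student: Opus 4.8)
The plan is to establish the four items from the definitions ${\bf T}_+=[\psi_1^-,\psi_2^+]$ and ${\bf T}_-=[\tilde\psi^{-,1},\tilde\psi^{+,2}]^{\top}$ together with the scattering relation \eqref{2-Psi-S}, the Wronskian representations of the $s_{ij}$, the large-$\lambda$ expansion of ${\bf\Psi}^{\pm}$, and the two reductions of Proposition \ref{2-prop-sym}. Items $1$, $3$, $4$ are algebraic once these identities are in place; the one genuinely delicate point is item $2$, where the sectors $\mathbb{D}_{\pm}$ of Proposition \ref{2-prop-phi-analytic} are used essentially to control an exponential factor.

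For item $1$ I would read $\det{\bf T}_+=|\psi_1^-,\psi_2^+|=s_{11}$ off the Wronskian formula directly. For $\det{\bf T}_-$, since $\det{\bf\Psi}^{\pm}=1$ the rows $\tilde\psi^{-,1}$ and $\tilde\psi^{+,2}$ are the corresponding cofactor rows of ${\bf\Psi}^{-}$ and ${\bf\Psi}^{+}$; expanding the resulting $2\times2$ determinant reproduces exactly $|\psi_1^+,\psi_2^-|=s_{22}$. For item $3$ I would substitute ${\bf\Psi}^{\pm}=\exp(\tfrac{\ii\sigma_3}{2}\int_{\pm\infty}^x|q_y|^2\,\dd y)+\mathcal{O}(\lambda^{-1})$ into the definitions: the first column of the $\exp(\cdots\int_{-\infty}^x)$ factor and the second column of the $\exp(\cdots\int_{+\infty}^x)$ factor assemble ${\bf E}(x,t)$, while taking the matrix inverse of these diagonal exponentials produces ${\bf E}^{-1}(x,t)$ for ${\bf T}_-$.

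Item $4$ follows from Proposition \ref{2-prop-sym}. The reduction ${\bf\Psi}(\lambda){\bf\Psi}^{\dagger}(-\lambda^*)=\mathbb{I}$ gives ${\bf\Psi}^{\pm\dagger}(-\lambda^*)=({\bf\Psi}^{\pm}(\lambda))^{-1}=\tilde{\bf\Psi}^{\pm}(\lambda)$, so the conjugate transpose of the two columns defining ${\bf T}_+(-\lambda^*)$ is exactly the pair of rows $\tilde\psi^{-,1}(\lambda),\ \tilde\psi^{+,2}(\lambda)$ defining ${\bf T}_-(\lambda)$, which is the first identity. For the second, the reduction ${\bf\Psi}(\lambda)=\sigma_3{\bf\Psi}(-\lambda)\sigma_3$ (and its inverse, which has the same form) together with $\sigma_3{\bf H}_j\sigma_3={\bf H}_j$ immediately yields $\sigma_3{\bf T}_{\pm}(-\lambda)\sigma_3={\bf T}_{\pm}(\lambda)$.

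The main obstacle is item $2$. Here I would use \eqref{2-Psi-S} to write $\psi_1^-=s_{11}\psi_1^++s_{21}\ee^{2\ii\lambda^2 x}\psi_2^+$ and let $x\to+\infty$ with ${\bf\Psi}^{+}\to\mathbb{I}$. The subtle factor is $\ee^{2\ii\lambda^2 x}$: for $\lambda\in\mathbb{D}_+$ one has $\mathrm{Im}(\lambda^2)>0$, hence $\mathrm{Re}(2\ii\lambda^2)<0$, so this term decays as $x\to+\infty$ and leaves $\psi_1^-\to(s_{11},0)^{\top}$, giving ${\bf T}_+\to\mathrm{diag}(s_{11},1)$. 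The limit for ${\bf T}_-$ is symmetric: inverting \eqref{2-Psi-S} expresses $\tilde\psi^{+,2}=s_{21}\ee^{2\ii\lambda^2 x}\tilde\psi^{-,1}+s_{22}\tilde\psi^{-,2}$, and for $\lambda\in\mathbb{D}_-$ one has $\mathrm{Im}(\lambda^2)<0$, so the same exponential now decays as $x\to-\infty$ and yields ${\bf T}_-\to\mathrm{diag}(1,s_{22})$. I expect the step most prone to error to be verifying that the exponential factor attached to the off-diagonal coefficient $s_{21}$ indeed decays (rather than grows) in the relevant limit, i.e. the exact matching of the sector $\mathbb{D}_{\pm}$ with the sign of $\mathrm{Im}(\lambda^2)$; this is precisely where the analyticity sectors of Proposition \ref{2-prop-phi-analytic} are indispensable.
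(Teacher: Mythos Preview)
The paper states this proposition without proof, so there is no argument to compare against; your outline for items $1$, $3$, $4$ is correct and standard.

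There is, however, a genuine gap in item $2$. The scattering relation \eqref{2-Psi-S} you invoke holds only for $\lambda\in\Sigma=\mathbb{R}\cup\ii\mathbb{R}$, and there $\lambda^{2}\in\mathbb{R}$, so $\mathrm{Im}(\lambda^{2})=0$ and $\ee^{2\ii\lambda^{2}x}$ is purely oscillatory, not decaying. Conversely, for $\lambda$ in the open region $\mathbb{D}_{+}$ where you correctly observe $\mathrm{Im}(\lambda^{2})>0$, neither $\psi_{1}^{+}$ nor $s_{21}$ is defined (both live on $\mathbb{D}_{-}\cup\Sigma$), so the identity $\psi_{1}^{-}=s_{11}\psi_{1}^{+}+s_{21}\ee^{2\ii\lambda^{2}x}\psi_{2}^{+}$ is unavailable there. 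The same objection applies to your treatment of ${\bf T}_{-}$. The repair keeps your key observation but applies it to the right object: use the Volterra equation \eqref{2-Psi-Volterra} directly. For $\lambda\in\mathbb{D}_{+}$ the second component of $\psi_{1}^{-}$ reads $\lambda\int_{-\infty}^{x}\ee^{2\ii\lambda^{2}(x-y)}r_{y}\,(\psi_{1}^{-})_{1}\,\dd y$; since $|\ee^{2\ii\lambda^{2}(x-y)}|=\ee^{-2\,\mathrm{Im}(\lambda^{2})(x-y)}\le 1$ for $y\le x$ and tends to $0$ pointwise as $x\to+\infty$, dominated convergence gives $(\psi_{1}^{-})_{2}\to 0$. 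The first component carries no exponential and tends to a finite limit, which must equal $s_{11}$ because $\det{\bf T}_{+}=s_{11}$ is $x$-independent and $\psi_{2}^{+}\to(0,1)^{\top}$. The argument for ${\bf T}_{-}$ as $x\to-\infty$ with $\lambda\in\mathbb{D}_{-}$ is symmetric.
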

	Note that the inverse of the solution of the adjoint matrix spectral problem \eqref{2-Psi-adjoint} is exactly the solution of the matrix spectral problem \eqref{2-lax-Psi}, then $({\bf T}_{-}(\lambda;x,t))^{-1}$ is the solution of the equation \eqref{2-lax-Psi}. So we can define a sectional analytic solution ${\bf M}(\lambda;x,t)$ of the matrix spectral problem \eqref{2-lax-Psi},
		\begin{equation*}
		\begin{split}
			&{\bf M}_{+}(\lambda;x,t)={\bf T}_{+}(\lambda;x,t),\ \ \ \ \ \ \ \  \lambda\in\mathbb{D_{+}},\\
			&{\bf M}_{-}(\lambda;x,t)=({\bf T}_{-}(\lambda;x,t))^{-1},\ \ \  \lambda\in\mathbb{D_{-}}.
		\end{split}
	\end{equation*}
Based on the symmetry reductions of ${\bf T}(\lambda;x,t)$ in proposition \ref{2-prop-T}, we have ${\bf M}_{-}(\lambda;x,t){\bf M}_{+}^{\dagger}(-\lambda^{*};x,t)=\mathbb{I}$ and ${\bf M}_{\pm}(\lambda;x,t)=\sigma_{3}{\bf M}_{\pm}(-\lambda;x,t)\sigma_{3}$. 
	Then the related Riemann-Hilbert problem can be constructed.\\
	{\bf Riemann-Hilbert problem $1$.\ }We can construct a matrix function ${\bf M}(\lambda;x,t)$ with the following properties:
\begin{itemize}
	\item ${\bf Analyticity:}$\ ${\bf M}_{+}(\lambda;x,t)$ is analytic in  $\mathbb{D}_{+}$, while ${\bf M}_{-}(\lambda;x,t)$ is analytic in  $\mathbb{D}_{-}\setminus\Lambda_{2}$.
	\item ${\bf Residue\ condition:}$\ The conditions to the principal parts of the meromorphic matrix ${\bf M}_{-}(\lambda;x,t)$
	at $\lambda=\lambda_{j}^{*},\ j=1,\cdots,2N,$ are given by: 
	\begin{equation*}
		{\bf ker}\left(\underset{\lambda=\lambda_{j}^{*}}{\rm{Res}}\ {\bf M}_{-}(\lambda;x,t)\right)={\bf Im}\left({\bf T}_{-}(\lambda_{j}^{*};x,t)\right),\ \ \ \ {\bf Im}\left(\underset{\lambda=\lambda_{j}^{*}}{\rm{Res}}\ {\bf M}_{-}(\lambda;x,t)\right)={\bf ker}\left({\bf T}_{-}(\lambda_{j}^{*};x,t)\right).
	\end{equation*}
		\item ${\bf Jump\ condition:}$\ ${\bf M}(\lambda;x,t)$ satisfies the jump condition:
	\begin{equation*}
		{\bf M}_{+}(\lambda;x,t)={\bf M}_{-}(\lambda;x,t){\bf J}(\lambda;x,t),\ \ \lambda\in\Sigma,
	\end{equation*}
	where
	\begin{equation*}
		{\bf J}(\lambda;x,t)=\exp\left(\left(-\ii\lambda^{2}x+\frac{\ii}{4}\mathcal{F}_{fF}(\lambda^{2})t\right){\rm ad}\widehat{\sigma}_{3}\right)\begin{bmatrix}
			1&-s_{12}(\lambda;t)\\[4pt]
			s_{21}(\lambda;t)&1
		\end{bmatrix}.
	\end{equation*}
	\item ${\bf Asymptotic\ behavior:}$\
	\begin{equation*}
		{\bf M}(\lambda;x,t)={\bf E}(x,t)+\mathcal{O}(\lambda^{-1}),\ \ \ \ {\rm as}\ \lambda\to\infty.
	\end{equation*}
\end{itemize}
	The Riemann-Hilbert problem $1$ was named as the non-regular Riemann-Hilbert problem because of the determinants of the functions ${\bf M}_{+}(\lambda;x,t)$ and $\left({\bf M}_{-}(\lambda;x,t)\right)^{-1}$ can be zero at certain discrete locations. And the frequent application of this one is to construct the exact soliton solutions using the dressing operators \cite{wang-2010,wu-2017,ma-2019}. The uniqueness of the Riemann-Hilbert problem $1$ is not guaranteed unless the zeros of $\det {\bf T}_{\pm}(\lambda;x,t)$ in $\mathbb{D}_{\pm}$ are specified, and the kernel structures of ${\bf T}_{\pm}(\lambda;x,t)$ at their zeros are determined. Note that we have supposed that $s_{11}(\lambda;t)$ and $s_{22}(\lambda;t)$ all have $2N$ simple zeros in Sec.\ref{subsec-diret}, then the kernel of ${\bf T}_{+}(\lambda;x,t)$ contains a single column vector $|{\bf v}_{j}\rangle=\begin{bmatrix}
		v_{1j},&v_{2j}
	\end{bmatrix}^{\top}$ and the kernel of ${\bf T}_{-}(\lambda;x,t)$ contains a single row vector $\langle{\bf v}_{j}|=|{\bf v}_{j}\rangle^{\dagger}=\begin{bmatrix}
		v_{1j}^{*},&v_{2j}^{*}
	\end{bmatrix}$, i.e.,
	\begin{equation}\label{2-T-kernel}
		{\bf T}_{+}(\lambda_{j};x,t)|{\bf v}_{j}\rangle=0,\ \ \ \ \langle{\bf v}_{j}|{\bf T}_{-}(\lambda_{j}^{*};x,t)=0,\ \ \ \ j=1,2,\cdots,2N.
	\end{equation}
Here and in what follows, the ``ket" $|{\bf v}_{j}\rangle$ represents a usual column vector, and the
``bra" $\langle{\bf v}_{j}|$ is its conjugate transpose vector.

Next, we will eliminate the zero structures of ${\bf T}_{\pm}(\lambda;x,t)$ at $\lambda=\lambda_{j},\ \lambda=\lambda^{*}_{j}$ to ensure that the Riemann-Hilbert problem $1$ can be solved via Plemelj's formulae. Based on the symmetry reductions of ${\bf\Psi}(\lambda;x,t)$ in proposition \ref{2-prop-sym}, we introduce a matrix function ${\bf G}_{1}(\lambda;x,t)$ and its inverse \cite{guo-2012}:
\begin{equation*}
	{\bf G}_{1}=\mathbb{I}+\frac{{\bf A}_{1}}{\lambda-\lambda_{1}^{*}}-\frac{\sigma_{3}{\bf A}_{1}\sigma_{3}}{\lambda+\lambda_{1}^{*}},\ \ \ \ 
	{\bf G}_{1}^{-1}=\mathbb{I}+\frac{{\bf A}_{1}^{\dagger}}{\lambda-\lambda_{1}}-\frac{\sigma_{3}{\bf A}_{1}^{\dagger}\sigma_{3}}{\lambda+\lambda_{1}}.
\end{equation*}
The matrix function ${\bf A}_{1}(\lambda;x,t)$ can be derived by combining $\underset{\lambda=\lambda_{1}}{\rm{Res}}\left({\bf G}_{1}(\lambda;x,t)\left({\bf G}_{1}(\lambda;x,t)\right)^{-1}\right)=0$,
\begin{equation*}
	{\bf A}_{1}=\begin{bmatrix}
		a_{1}&0\\
		0&-a_{1}^{*}
	\end{bmatrix}|{\bf v}_{1}\rangle\langle {\bf v}_{1}|,\ \ \ \ a_{1}=\frac{\lambda_{1}^{2}-\lambda_{1}^{*2}}{(\lambda_{1}-\lambda_{1}^{*})\langle {\bf v}_{1}|\sigma_{3}|{\bf v}_{1}\rangle-(\lambda_{1}+\lambda_{1}^{*})\langle {\bf v}_{1}|{\bf v}_{1}\rangle}.
\end{equation*}
Moreover, through calculations, we can find $\det {\bf G}_{1}=\frac{\lambda^{2}-\lambda_{1}^{2}}{\lambda^{2}-\lambda_{1}^{*2}}$, and
\begin{equation*}
	{\bf G}_{1}(\lambda_{1};x,t)|{\bf v}_{1}\rangle=0,\ \ \ \ \langle {\bf v}_{1}|({\bf G}_{1}(\lambda_{1}^{*};x,t))^{-1}=0.
\end{equation*}
Then we define a new matrix function ${\bf R}_{1}(\lambda;x,t)$:
\begin{equation*}
	{\bf R}_{1,+}={\bf T}_{+}{\bf G}_{1}^{-1},\ \ \ \ {\bf R}_{1,-}={\bf G}_{1}{\bf T}_{-}.
\end{equation*}
\begin{prop}
	The matrix function ${\bf R}_{1}(\lambda;x,t)$ has the following properties:
	\begin{enumerate}
		\item[1] ${\bf R}_{1,\pm}(\lambda;x,t)$ no longer have the zero structures at $\lambda=\pm\lambda_{1}$ and $\lambda=\pm\lambda_{1}^{*}$, respectively.\\
		\item[2] $\underset{\lambda=\pm\lambda_{1}}{\rm{Res}}\ {\bf R}_{1,+}(\lambda;x,t)=0,\ \ \ \ \underset{\lambda=\pm\lambda_{1}^{*}}{\rm{Res}}\ {\bf R}_{1,-}(\lambda;x,t)=0.$\\
		\item[3] $\det{\bf R}_{1,+}(\pm\lambda_{1};x,t)\neq 0,\ \ \ \ \det{\bf R}_{1,-}(\pm\lambda_{1}^{*};x,t)\neq 0$.\\
		\item[4] ${\bf R}_{1,+}(\lambda_{2};x,t)|{\bf w}_{2}\rangle=0,\ \ \langle{\bf w}_{2}|{\bf R}_{1,-}(\lambda_{2}^{*};x,t)=0$,\ where $|{\bf w}_{2}\rangle={\bf G}_{1}(\lambda_{2};x,t)|{\bf v}_{2}\rangle$.
	\end{enumerate}
\end{prop}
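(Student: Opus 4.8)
The plan is to establish the four properties in sequence, in each case reducing the claim to the kernel relations \eqref{2-T-kernel}, the explicit pole/residue data of ${\bf G}_{1}^{\pm1}$, and the determinant and symmetry identities already recorded. The organizing observation is that ${\bf G}_{1}^{-1}$ has simple poles at precisely the points $\lambda=\pm\lambda_{1}$ where ${\bf T}_{+}$ is degenerate (and dually ${\bf G}_{1}$, ${\bf T}_{-}$ at $\pm\lambda_{1}^{*}$), so that the product annihilates the singular part while the accompanying determinant factor cancels the vanishing of $\det{\bf T}_{\pm}$.

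For the residue statement (\textbf{Property 2}), near $\lambda=\lambda_{1}$ the factor ${\bf G}_{1}^{-1}$ has a simple pole with residue ${\bf A}_{1}^{\dagger}=|{\bf v}_{1}\rangle\langle{\bf v}_{1}|{\bf D}^{\dagger}$, where ${\bf D}={\rm diag}(a_{1},-a_{1}^{*})$, while ${\bf T}_{+}$ is analytic there; hence $\underset{\lambda=\lambda_{1}}{\rm{Res}}\,{\bf R}_{1,+}={\bf T}_{+}(\lambda_{1}){\bf A}_{1}^{\dagger}={\bf T}_{+}(\lambda_{1})|{\bf v}_{1}\rangle\langle{\bf v}_{1}|{\bf D}^{\dagger}=0$ by \eqref{2-T-kernel}. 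At $\lambda=-\lambda_{1}$ the residue of ${\bf G}_{1}^{-1}$ is $-\sigma_{3}{\bf A}_{1}^{\dagger}\sigma_{3}$, and I invoke the second symmetry ${\bf T}_{+}(-\lambda_{1})=\sigma_{3}{\bf T}_{+}(\lambda_{1})\sigma_{3}$ of Proposition \ref{2-prop-T} to write ${\bf T}_{+}(-\lambda_{1})\sigma_{3}|{\bf v}_{1}\rangle=\sigma_{3}{\bf T}_{+}(\lambda_{1})|{\bf v}_{1}\rangle=0$, so this residue vanishes as well. The computation for ${\bf R}_{1,-}={\bf G}_{1}{\bf T}_{-}$ is the mirror image: ${\bf G}_{1}$ has residue ${\bf A}_{1}={\bf D}|{\bf v}_{1}\rangle\langle{\bf v}_{1}|$ at $\lambda_{1}^{*}$, giving $\underset{\lambda=\lambda_{1}^{*}}{\rm{Res}}\,{\bf R}_{1,-}={\bf A}_{1}{\bf T}_{-}(\lambda_{1}^{*})={\bf D}|{\bf v}_{1}\rangle\langle{\bf v}_{1}|{\bf T}_{-}(\lambda_{1}^{*})=0$, with the $-\lambda_{1}^{*}$ case again dispatched by the $\sigma_{3}$-symmetry of ${\bf T}_{-}$.

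For the determinant statement (\textbf{Property 3}) I would multiply determinants, using $\det{\bf R}_{1,+}=\det{\bf T}_{+}\det{\bf G}_{1}^{-1}=s_{11}(\lambda)\frac{\lambda^{2}-\lambda_{1}^{*2}}{\lambda^{2}-\lambda_{1}^{2}}$ and $\det{\bf R}_{1,-}=s_{22}(\lambda)\frac{\lambda^{2}-\lambda_{1}^{2}}{\lambda^{2}-\lambda_{1}^{*2}}$. Since $s_{11}$ has a simple zero at each of $\pm\lambda_{1}$ and the rational factor contributes a matching simple pole there, the product is finite and nonzero at $\lambda=\pm\lambda_{1}$; symmetrically $\det{\bf R}_{1,-}$ is finite and nonzero at $\pm\lambda_{1}^{*}$. \textbf{Property 1} then follows as a corollary: Property 2 shows ${\bf R}_{1,\pm}$ are analytic at these points and Property 3 shows their determinants do not vanish there, so the matrices are invertible and the degenerate (zero) structure has indeed been removed.

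Finally, the eigenvector relations (\textbf{Property 4}) are a direct cancellation. Because $\lambda_{2}\neq\pm\lambda_{1}$, the matrix ${\bf G}_{1}(\lambda_{2})$ is invertible, so ${\bf R}_{1,+}(\lambda_{2})|{\bf w}_{2}\rangle={\bf T}_{+}(\lambda_{2}){\bf G}_{1}^{-1}(\lambda_{2}){\bf G}_{1}(\lambda_{2})|{\bf v}_{2}\rangle={\bf T}_{+}(\lambda_{2})|{\bf v}_{2}\rangle=0$. For the row-vector statement I first record the symmetry ${\bf G}_{1}^{\dagger}(\lambda)={\bf G}_{1}^{-1}(\lambda^{*})$, obtained by conjugating the explicit formulae for ${\bf G}_{1}$ and ${\bf G}_{1}^{-1}$; then $\langle{\bf w}_{2}|=\langle{\bf v}_{2}|{\bf G}_{1}^{\dagger}(\lambda_{2})=\langle{\bf v}_{2}|{\bf G}_{1}^{-1}(\lambda_{2}^{*})$, whence $\langle{\bf w}_{2}|{\bf R}_{1,-}(\lambda_{2}^{*})=\langle{\bf v}_{2}|{\bf G}_{1}^{-1}(\lambda_{2}^{*}){\bf G}_{1}(\lambda_{2}^{*}){\bf T}_{-}(\lambda_{2}^{*})=\langle{\bf v}_{2}|{\bf T}_{-}(\lambda_{2}^{*})=0$. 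The only delicate point throughout is Property 2 at the reflected spectral points $-\lambda_{1}$ and $-\lambda_{1}^{*}$, where one must correctly pair the $-\sigma_{3}(\cdot)\sigma_{3}$ residue of ${\bf G}_{1}^{\pm1}$ with the second symmetry reduction of ${\bf T}_{\pm}$; matching the $\sigma_{3}$ conjugations so that they cancel is the one place where a sign or placement error could slip in.
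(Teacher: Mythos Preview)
The paper states this proposition without proof, so there is no argument to compare against; your proposal correctly supplies the standard dressing-type verification that the paper leaves implicit. Each step---the residue cancellation via \eqref{2-T-kernel}, the determinant bookkeeping using $\det{\bf G}_{1}=\frac{\lambda^{2}-\lambda_{1}^{2}}{\lambda^{2}-\lambda_{1}^{*2}}$, and the symmetry ${\bf G}_{1}^{\dagger}(\lambda)={\bf G}_{1}^{-1}(\lambda^{*})$ for Property~4---is sound and matches the structure the paper sets up before the proposition.
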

\noindent
Following this rule, we can construct the matrix functions ${\bf R}_{N,\pm}(\lambda;x,t)$ which have no zero structures at $\lambda=\pm\lambda_{j}$ and $\lambda=\pm\lambda_{j}^{*}$,\ $j=1,\cdots,N$,
\begin{equation*}
	{\bf R}_{N,+}={\bf T}_{+}{\bf G}^{-1},\ \ \ \ {\bf R}_{N,-}={\bf G}{\bf T}_{-},
\end{equation*}
where ${\bf G}={\bf G}_{N}\cdots{\bf G}_{1}$,
\begin{equation*}
	\begin{split}
	&{\bf G}_{j}=\mathbb{I}+\frac{{\bf A}_{j}}{\lambda-\lambda_{j}^{*}}-\frac{\sigma_{3}{\bf A}_{j}\sigma_{3}}{\lambda+\lambda_{j}^{*}},\ \ \ \ 
	{\bf G}_{j}^{-1}=\mathbb{I}+\frac{{\bf A}_{j}^{\dagger}}{\lambda-\lambda_{j}}-\frac{\sigma_{3}{\bf A}_{j}^{\dagger}\sigma_{3}}{\lambda+\lambda_{j}},\ \ \ \ j=1,\cdots,N,\\
	&{\bf A}_{j}=\begin{bmatrix}
		a_{j}&0\\
		0&-a_{j}^{*}
	\end{bmatrix}|{\bf w}_{j}\rangle\langle {\bf w}_{j}|,\ \ \ \ a_{j}=\frac{\lambda_{j}^{2}-\lambda_{j}^{*2}}{(\lambda_{j}-\lambda_{j}^{*})\langle {\bf w}_{j}|\sigma_{3}|{\bf w}_{j}\rangle-(\lambda_{j}+\lambda_{j}^{*})\langle {\bf w}_{j}|{\bf w}_{j}\rangle},\\[3pt]
&|{\bf w}_{j}\rangle={\bf G}_{j-1}(\lambda_{j};x,t)\cdots{\bf G}_{1}(\lambda_{j};x,t)|{\bf v}_{j}\rangle.
\end{split}
\end{equation*}
Note that $\det {\bf G}_{j}=\frac{\lambda^{2}-\lambda_{j}^{2}}{\lambda^{2}-\lambda_{j}^{*2}}$, and
\begin{equation*}
	{\bf G}_{j}(\lambda_{j};x,t)|{\bf w}_{j}\rangle=0,\ \ \ \ \langle {\bf w}_{j}|({\bf G}_{j}(\lambda_{j}^{*};x,t))^{-1}=0.
\end{equation*}

Now we define a new sectional analytic function ${\bf M}^{[1]}(\lambda;x,t)$:
\begin{equation*}
	\begin{split}
	&{\bf M}^{[1]}_{+}={\bf R}_{N,+}={\bf M}_{+}{\bf G}^{-1},\ \ \ \ \ \ \ \ \  \lambda\in\mathbb{D_{+}},\\
	&{\bf M}^{[1]}_{-}=({\bf R}_{N,-})^{-1}={\bf M}_{-}{\bf G}^{-1},\ \ \ \ \lambda\in\mathbb{D_{-}}.
\end{split}
\end{equation*}
Then the new Riemann-Hilbert problem can be constructed.\\
{\bf Riemann-Hilbert problem $2$.\ }We can find a matrix function ${\bf M}^{[1]}(\lambda;x,t)$ with the following properties:
\begin{itemize}
	\item ${\bf Analyticity:}$\ ${\bf M}^{[1]}_{\pm}(\lambda;x,t)$ are analytic in $\mathbb{D}_{\pm}$, respectively.
	\item ${\bf Jump\ condition:}$\ ${\bf M}^{[1]}(\lambda;x,t)$ satisfies the jump condition:
	\begin{equation*}
		{\bf M}_{+}^{[1]}(\lambda;x,t)={\bf M}_{-}^{[1]}(\lambda;x,t){\bf J}^{[1]}(\lambda;x,t),\ \ \lambda\in\Sigma,
	\end{equation*}
	where ${\bf J}^{[1]}={\bf GJG^{-1}}$.
	\item ${\bf Asymptotic\ behavior:}$\
	\begin{equation*}
		{\bf M}^{[1]}(\lambda;x,t)={\bf E}(x,t)+\mathcal{O}(\lambda^{-1}),\ \ \ \ {\rm as}\ \lambda\to\infty.
	\end{equation*}
\end{itemize}
Obviously, the determinant of ${\bf M}^{[1]}(\lambda;x,t)$ will not equal to zero. So we can apply Plemelj's formulae to solve the above Riemann-Hilbert problem.  
	Here we introduce the Cauchy projectors $\mathcal{P}_{\pm}$ \cite{biondini2014inverse} over $\Sigma$,
\begin{equation*}
	\mathcal{P}_{\pm}[f](\lambda)=\frac{1}{2\pi\ii}\int_{\Sigma_{\pm}}\dfrac{f(\zeta)}{\zeta-(\lambda\pm\ii 0)}\dd\zeta,
\end{equation*}
where $\Sigma_{\pm}$ refer to the integral paths along the gray area and the white area indicated by arrows in Fig.\ref{fig:contour}, and the notations $\lambda\pm\ii0$ indicate that the limits are taken from the left or right of $\lambda$. Applying the Cauchy projectors $\mathcal{P}_{\pm}$ to the Riemann-Hilbert problem $2$, we can get
\begin{equation*}
	{\bf M}^{[1]}_{\pm}(\lambda;x,t)={\bf E}(x,t)+\frac{1}{2\pi\ii}\int_{\Sigma_{\pm}}\dfrac{{\bf M}_{-}(\zeta;x,t)\left({\bf J}(\zeta;x,t)-\mathbb{I}\right)\left({\bf G}(\zeta;x,t)\right)^{-1}}{\zeta-(\lambda\pm\ii 0)}\dd\zeta.
\end{equation*}

For the integrable fractional DNLS equation, its potential function is recovered by the expansion of the analytic function as $\lambda\to+\infty$ \cite{an2023inverse}. However, it is more convenient to recover the potential function as $\lambda\to 0$. Therefore, we will recover $q(x,t)$ by the expansion of ${\bf M}_{+}(\lambda;x,t)$ as $\lambda\to 0$. We expand ${\bf M}_{+}(\lambda;x,t)$ at $\lambda\to 0$ as
\begin{equation}\label{2-expand-M}
	{\bf M}_{+}(\lambda;x,t)={\bf M}_{+,0}(x,t)+\lambda{\bf M}_{+,1}(x,t)+o(\lambda),
\end{equation}
 substitute \eqref{2-expand-M} into
\eqref{2-lax-Psi}, and collect the same powers of $\lambda$, then ${\bf M}_{+,0}(x,t)=\mathbb{I},\ {\bf Q}_{x}(x,t)=\left({\bf M}_{+,1}(x,t)\right)_{x}$, i.e.,
\begin{equation*}
	q(x,t)=\left({\bf M}_{+,1}(x,t)\right)_{12}.
\end{equation*}
Consequently, we have recovered the potential function $q(x,t)$.

\subsection{Exact form of the fFL equation}	
Next, our objective is to derive the exact form of the fFL equation. According to the equation \eqref{2-evolution eq}, we need to analyze the operator function $\mathcal{F}_{fF}(\mathcal{L})$ acting on the vector $\widehat{\bf u}_{x}=\begin{bmatrix}
	q_{x},&-q^{*}_{x}
\end{bmatrix}^{\top}$. Note that the squared eigenfunctions are eigenfunctions of the recursion operator of integrable equations. Motivated by this, our approach entails transforming the problem of the recursion operator 
$\mathcal{L}$ acting on the vector $\widehat{\bf u}_{x}(x,t)$ into another problem of the recursion operator $\mathcal{L}$ acting on the squared eigenfunctions. And then we can generalize it to the case of the operator function $\mathcal{F}_{fF}(\mathcal{L})$.

The completeness relation of the squared eigenfunctions plays a crucial role in determining the exact form of the fFL equation. There exist various methods for deriving this completeness relation, each with its own advantages. 
In our study, we have adopted the method proposed by Kaup et al. in their research on the massive Thirring model and the Benjamin-Ono equation \cite{kaup-1996squared,kaup-1999perturbation}, as well as the method presented by Kaup and Yang for a third-order scattering operator \cite{yang-2009squared}. Although alternative methods, such as the one outlined in \cite{kawata1980generalized,kawata1980linear}, can also be used to derive the exact form of the fFL equation, we have found the aforementioned methods to be more convenient. Our derivation procedure begins with determining the variations of the scattering data by perturbing the potential. The resulting coefficients are then regarded as the adjoint squared eigenfunctions. Subsequently, we utilize the Riemann-Hilbert approach to compute the variation of the potential based on the variations of the scattering data, enabling us to directly obtain the squared eigenfunctions. By combining these two procedures, we can efficiently derive the completeness relation of the squared eigenfunctions.

Firstly, we calculate the variations of the scattering data by considering the variation of the potential. This allows us to determine the adjoint squared eigenfunctions. To accomplish this, we introduce a variation to the equation \eqref{2-Lax pair-fFL} and solve the resulting variation equation. Let us take the variation to the equation \eqref{2-Lax pair-fFL}, then
\begin{equation}\label{2-delta-Phi}
	\delta{\bf\Phi}_{x}=-\ii\lambda^{2}\sigma_{3}\delta{\bf\Phi}+\lambda{\bf Q}_{x}\delta{\bf\Phi}+\lambda\delta{\bf Q}_{x}{\bf\Phi},
\end{equation}
where the notation $\delta\alpha$ is a quantity related to $\alpha$ to represent $\|\delta\alpha\|$ is sufficiently small. Note that the function ${\bf\Phi}(\lambda;x,t)\to\exp\left(\left(-\ii\lambda^{2}x+\frac{\ii}{4}\mathcal{F}_{fF}(\lambda^{2})t\right)\sigma_{3}\right)$ as $|x|\to\infty$, so $\delta{\bf\Phi}(\lambda;x,t)$ tends to zero under this boundary condition. Then the variation equation \eqref{2-delta-Phi} can be solved by
\begin{equation*}
	\delta{\bf\Phi}^{\pm}(\lambda;x,t)=\lambda{\bf\Phi}^{\pm}(\lambda;x,t)\int_{\pm\infty}^{x}\left({\bf\Phi}^{\pm}(\lambda;y,t)\right)^{-1}\delta{\bf Q}_{y}(y,t){\bf\Phi}^{\pm}(\lambda;y,t)\dd y.
\end{equation*}
We will consider the above equation from the cases of $x\to\pm\infty$ by using the relation \eqref{2-Phi-S}. In the limit of $x\to+\infty$, there are $\delta{\bf\Phi}^{+}\to0,\ \delta{\bf\Phi}^{-}\to{\bf\Phi}^{+}\delta{\bf S}$, then we can get
\begin{equation}\label{2-delta-S}
	\delta{\bf S}(\lambda;t)=\lambda\int_{-\infty}^{+\infty}\left({\bf\Phi}^{+}(\lambda;x,t)\right)^{-1}\delta{\bf Q}_{x}(x,t){\bf\Phi}^{-}(\lambda;x,t)\dd x.
\end{equation}
Similarly, we can find $\delta{\bf\Phi}^{-}\to0,\ \delta{\bf\Phi}^{+}\to{\bf\Phi}^{-}\delta{\bf S}^{-1}$ when $x\to-\infty$, then
\begin{equation}\label{2-delta-S^-1}
	\delta{\bf S}^{-1}(\lambda;t)=-\lambda\int_{-\infty}^{+\infty}\left({\bf\Phi}^{-}(\lambda;x,t)\right)^{-1}\delta{\bf Q}_{x}(x,t){\bf\Phi}^{+}(\lambda;x,t)\dd x.
\end{equation}
As usual, we define the reflection coefficients $\rho_{1}(\lambda;t)$ and $\rho_{2}(\lambda;t)$:
\begin{equation*}
	\rho_{1}(\lambda;t)=\frac{s_{12}(\lambda;t)}{s_{11}(\lambda;t)},\ \ \ \ \rho_{2}(\lambda;t)=\frac{s_{21}(\lambda;t)}{s_{22}(\lambda;t)},\ \ \ \  \lambda\in\Sigma.
\end{equation*}
Combining with the variation of ${\bf S}(\lambda;t)$ (i.e.,\eqref{2-delta-S}) or ${\bf S}^{-1}(\lambda;t)$ (i.e.,\eqref{2-delta-S^-1}), the variations of the reflection coefficients can be derived,
\begin{equation}\label{2-var-qtos}
\begin{split}
		\delta\rho_{1}(\lambda;t)=&\frac{1}{s_{11}^{2}(\lambda;t)}\left(s_{11}(\lambda;t)\delta s_{12}(\lambda;t)-\delta s_{11}(\lambda;t)s_{12}(\lambda;t)\right)
		=\frac{\lambda}{s_{11}^{2}(\lambda;t)}\Big\langle\sigma_{3}{\bf\Omega}_{+,2}(\lambda;x,t),\delta{\bf\widehat u}_{x}(x,t)\Big\rangle,\\
		\delta\rho_{2}(\lambda;t)=&\frac{1}{s_{22}^{2}(\lambda;t)}\left(s_{22}(\lambda;t)\delta s_{21}(\lambda;t)-\delta s_{22}(\lambda;t)s_{21}(\lambda;t)\right)
		=-\frac{\lambda}{s_{22}^{2}(\lambda;t)}\Big\langle\sigma_{3}{\bf\Omega}_{-,1}(\lambda;x,t),\delta{\bf\widehat u}_{x}(x,t)\Big\rangle,
	\end{split}
\end{equation} 
where $\lambda\in\Sigma$, 
\begin{equation*}
{\bf\Omega}_{-,1}=\begin{bmatrix}
		(\phi_{21}^{+})^{2}\\[4pt]
		-(\phi_{11}^{+})^{2}
	\end{bmatrix},\ \ \ \ {\bf\Omega}_{+,2}=\begin{bmatrix}
	(\phi_{22}^{+})^{2}	\\[4pt]
		-(\phi_{12}^{+})^{2}
	\end{bmatrix},\ \ \ \ \delta{\bf \widehat u}_{x}=\begin{bmatrix}
	\delta q_{x}\\[4pt]
	-\delta q_{x}^{*}
\end{bmatrix},
\end{equation*}
and the inner product is defined by
\begin{equation*}
	\langle{\bf f},{\bf g}\rangle=\int_{-\infty}^{+\infty}{\bf f}^{\top}(\lambda;x,t){\bf g}(x,t)\dd x.
\end{equation*}
The functions ${\bf\Omega}_{-,1}(\lambda;x,t)$ and ${\bf\Omega}_{+,2}(\lambda;x,t)$ are called the adjoint squared eigenfunctions of the fFL equation.

Then we discuss the variation of the potential via the variations of the scattering data. Based on the matrix function ${\bf T}(\lambda;x,t)$ which is defined in section \ref{2-sec-IST}, we introduce a new matrix function ${\bf F}(\lambda;x,t)$:
\begin{equation*}
	{\bf F}_{+}={\bf T}_{+}\begin{bmatrix}
		1&0\\[5pt]
		0&\frac{1}{s_{11}}
	\end{bmatrix},\ \  \lambda\in\mathbb{D_{+}},\ \ \ \ \ \ 
{\bf F}_{-}=({\bf T}_{-})^{-1}\begin{bmatrix}
	1&0\\[3pt]
	0&s_{22}
\end{bmatrix},\ \ \lambda\in\mathbb{D_{-}}.
\end{equation*}
Assuming that $s_{11}(\lambda;t)$ and $s_{22}(\lambda;t)$ have no zeros in $\mathbb{D}_{\pm}$, respectively. Then we can construct the following Riemann-Hilbert problem.\\
	{\bf Riemann-Hilbert problem $3$.\ }The matrix function ${\bf F}(\lambda;x,t)$ has the following properties:
\begin{itemize}
	\item ${\bf Analyticity:}$\ ${\bf F}_{\pm}(\lambda;x,t)$ are analytic in $\mathbb{D}_{\pm}$, respectively.  
	\item ${\bf Jump\ condition:}$\ ${\bf F}(\lambda;x,t)$ satisfies the jump condition:
	\begin{equation*}
		{\bf F}_{+}(\lambda;x,t)={\bf F}_{-}(\lambda;x,t)\left(\mathbb{I}-{\bf J}_{F}(\lambda;x,t)\right),\ \ \lambda\in\Sigma,
	\end{equation*}
	where
	\begin{equation*}
		{\bf J}_{F}(\lambda;x,t)=\exp\left(\left(-\ii\lambda^{2}x+\frac{\ii}{4}\mathcal{F}_{fF}(\lambda^{2})t\right){\rm ad}\widehat{\sigma}_{3}\right)
		\begin{bmatrix}
			0&\rho_{1}(\lambda;t)\\[4pt]
			-\rho_{2}(\lambda;t)&\rho_{1}(\lambda;t)\rho_{2}(\lambda;t)
		\end{bmatrix}.
	\end{equation*}
	\item ${\bf Asymptotic\ behavior:}$\
	\begin{equation*}
		{\bf F}(\lambda;x,t)={\bf F}_{0}(x,t)+\mathcal{O}(\lambda^{-1}),\ \ \ \ {\rm as}\ \lambda\to\infty,
	\end{equation*}
where ${\bf F}_{0}(x,t)=\exp\left(\frac{\ii}{2}\int_{-\infty}^{x}|q_{y}(y,t)|^{2}\dd y\sigma_{3}\right)$.
\end{itemize}
\noindent
It can be found that $\det{\bf F}_{\pm}=1$, which means that the Riemann-Hilbert problem $3$ is regular. 
Considering the variation of the jump condition in Riemann-Hilbert problem $3$, we can obtain
\begin{equation}\label{2-delta-RHP}
	\delta{\bf F}_{+}({\bf F}_{+})^{-1}-\delta{\bf F}_{-}({\bf F}_{-})^{-1}=-{\bf F}_{-}\delta{\bf J}_{F}({\bf F}_{+})^{-1},\ \ \ \ \lambda\in\Sigma,
\end{equation}
where
\begin{equation}\label{2-deltaJ}
\begin{split}
	&{\bf F}_{\pm}={\bf F}_{0}+\frac{1}{\lambda}\begin{bmatrix}
		\mp1&-\frac{\ii}{2}q_{x}\\[4pt]
		\frac{\ii}{2}q^{*}_{x}&\pm1
	\end{bmatrix}{\bf F}_{0}+\mathcal{O}(\lambda^{-2}),\ \ \ \ \bar{u}:=\bar{u}(x,t)=-\frac{\ii}{2}\int_{-\infty}^{x}{\bf \widehat u}_{y}^{\top}(y,t)\sigma_{1}\delta{\bf \widehat u}_{y}(y,t)\dd y,\\[4pt]
&\delta{\bf F}_{\pm}=\bar{u}\sigma_{3}{\bf F}_{0}+\frac{\ii}{2\lambda}\begin{bmatrix}
	\pm2\ii&q_{x}-\delta q_{x}\bar{u}^{-1}\\[4pt]
	q^{*}_{x}+\delta q^{*}_{x}\bar{u}^{-1}&\pm2\ii
\end{bmatrix}\bar{u}{\bf F}_{0}+\mathcal{O}(\lambda^{-2}),\\[4pt]
&\delta{\bf J}_{F}=\exp\left(\left(-\ii\lambda^{2}x+\frac{\ii}{4}\mathcal{F}_{fF}(\lambda^{2})t\right){\rm ad}\widehat{\sigma}_{3}\right)
\begin{bmatrix}
	0&\delta\rho_{1}\\
	-\delta\rho_{2}&\rho_{1}\delta\rho_{2}+\rho_{2}\delta\rho_{1}
\end{bmatrix}.
\end{split}
\end{equation}
Obviously, the functions $\delta{\bf F}_{\pm}(\lambda;x,t)\left({\bf F}_{\pm}(\lambda;x,t)\right)^{-1}$ are analytic in $\mathbb{D}_{\pm}$, and equal to $\bar{u}(x,t)\sigma_{3}$ as $\lambda\to\infty$, then the equation \eqref{2-delta-RHP} can be solved by using Plemelj's formulae,
\begin{equation}\label{2-deltaP+P-1}
	\delta{\bf F}_{\pm}({\bf F}_{\pm})^{-1}=\bar{u}(x,t)\sigma_{3}-\frac{1}{2\pi\ii}\int_{\Sigma_{\pm}}\dfrac{{\bf F}_{-}(\zeta;x,t)\delta{\bf J}_{F}(\zeta;x,t)({\bf F}_{+}(\zeta;x,t))^{-1}}{\zeta-(\lambda\pm\ii0)}\dd\zeta.
\end{equation}
On the other hand, the expressions for $\delta{\bf F}_{\pm}(\lambda;x,t)\left({\bf F}_{\pm}(\lambda;x,t)\right)^{-1}$ can also be given in terms of \eqref{2-deltaJ},
\begin{equation}\label{2-deltaP+P-2}
	\delta{\bf F}_{\pm}\left({\bf F}_{\pm}\right)^{-1}=\bar{u}(x,t)\sigma_{3}-\frac{\ii}{2\lambda}\begin{bmatrix}
		0&\delta q_{x}-2q_{x}\bar{u}\\[4pt]
		-\delta q^{*}_{x}-2q_{x}^{*}\bar{u}&0
	\end{bmatrix}+\mathcal{O}(\lambda^{-2}).
\end{equation}
Comparing the equations \eqref{2-deltaP+P-1} and \eqref{2-deltaP+P-2}, there is
\begin{equation*}
	\begin{split}
\begin{bmatrix}
	0&\delta q_{x}-2q_{x}\bar{u}\\[4pt]
	-\delta q^{*}_{x}-2q_{x}^{*}\bar{u}&0
\end{bmatrix}=&\frac{1}{\pi}\int_{\Sigma_{\pm}}{\bf F}_{-}(\zeta;x,t)\delta{\bf J}_{F}(\zeta;x,t)({\bf F}_{+}(\zeta;x,t))^{-1}\dd\zeta\\
=&\frac{1}{\pi}\int_{\Sigma_{\pm}}\left(\delta\rho_{1}(\lambda;t)\begin{bmatrix}
	-\phi_{11}^{-}\phi_{21}^{-}&(\phi_{11}^{-})^{2}\\[4pt]
	-(\phi_{21}^{-})^{2}&\phi_{11}^{-}\phi_{21}^{-}		
\end{bmatrix}+\delta\rho_{2}(\lambda;t)
\begin{bmatrix}
-\phi_{12}^{-}\phi_{22}^{-}&(\phi_{12}^{-})^{2}\\[4pt]
-(\phi_{22}^{-})^{2}&\phi_{12}^{-}\phi_{22}^{-}	
\end{bmatrix}\right)\dd\lambda
	\end{split},
\end{equation*}
in the integral of the above equation, $\phi_{ij}^{-}=\phi_{ij}^{-}(\lambda;x,t),\ i,j=1,2$. Therefore,
\begin{equation}\label{2-var-rhotoq}
	\delta{\bf\widehat u}_{x}(x,t)=\frac{\sigma_{3}}{\pi}\int_{\Sigma_{\pm}}\left(\delta\rho_{1}(\lambda;t){\bf\Omega}_{+,1}(\lambda;x,t)+\delta\rho_{2}(\lambda;t)
{\bf\Omega}_{-,2}(\lambda;x,t)\right)\dd\lambda+2\bar{u}(x,t)\sigma_{3}{\bf\widehat u}_{x}(x,t),
\end{equation}
where ${\bf\Omega}_{+,1}(\lambda;x,t)$ and ${\bf\Omega}_{-,2}(\lambda;x,t)$ are called the squared eigenfunctions of
the fFL equation,
\begin{equation*}
	{\bf\Omega}_{+,1}=\begin{bmatrix}
		(\phi_{11}^{-})^{2}\\[4pt]
		(\phi_{21}^{-})^{2}
	\end{bmatrix},\ \ \ \ {\bf\Omega}_{-,2}=\begin{bmatrix}
		(\phi_{12}^{-})^{2}\\[4pt]
		(\phi_{22}^{-})^{2}
	\end{bmatrix}.
\end{equation*}
Furthermore, we can derive the following relations based on the Lax pair \eqref{2-Lax pair-fFL}: \begin{equation}\label{2-relation-Omega+Omega_x}
 	\begin{split}
	&{\bf\Omega}_{+,1}=\frac{\ii}{2\lambda^{2}}\sigma_{3}\partial_{x}{\bf\Omega}_{+,1}-\frac{\ii}{\lambda}\phi_{11}^{-}\phi_{21}^{-}{\bf\widehat{u}}_{x},\ \ \ \ \phi_{11}^{-}\phi_{21}^{-}=-\lambda\int_{-\infty}^{x}({\bf \widehat u}_{y}^{\top}(y,t)\sigma_{1}\sigma_{3}{\bf\Omega}_{+,1}(\lambda;y,t))\dd y,\\ &{\bf\Omega}_{-,2}=\frac{\ii}{2\lambda^{2}}\sigma_{3}\partial_{x}{\bf\Omega}_{-,2}-\frac{\ii}{\lambda}\phi_{12}^{-}\phi_{22}^{-}{\bf\widehat{u}}_{x},\ \ \ \ \phi_{12}^{-}\phi_{22}^{-}=-\lambda\int_{-\infty}^{x}({\bf \widehat u}_{y}^{\top}(y,t)\sigma_{1}\sigma_{3}{\bf\Omega}_{-,2}(\lambda;y,t))\dd y.
\end{split}
\end{equation}
If we left-multiply both sides of the equation \eqref{2-var-rhotoq} by the vector ${\bf \widehat u}_{x}^{\top}(x,t)\sigma_{1}$ and then integrate the resulting expression, there is
\begin{multline}\label{2-bar-u}
	\int_{-\infty}^{x}{\bf \widehat u}_{y}^{\top}(y,t)\sigma_{1}\delta{\bf\widehat u}_{y}(y,t)\dd y\\
	=\frac{1}{\pi}\int_{-\infty}^{x}\int_{\Sigma_{\pm}}\left(\delta\rho_{1}(\lambda;t){\bf \widehat u}_{y}^{\top}(y,t)\sigma_{1}\sigma_{3}{\bf\Omega}_{+,1}(\lambda;y,t)+\delta\rho_{2}(\lambda;t){\bf \widehat u}_{y}^{\top}(y,t)\sigma_{1}\sigma_{3}
	{\bf\Omega}_{-,2}(\lambda;y,t)\right)\dd\lambda\dd y,
\end{multline}
which allows us to rewrite the function $\bar{u}(x,t)$. Substituting \eqref{2-relation-Omega+Omega_x}, \eqref{2-bar-u} and \eqref{2-var-qtos} into \eqref{2-var-rhotoq}, then
\begin{equation}\label{2-deltafunction-Y}
	\delta{\bf\widehat u}_{x}(x,t)=\int_{-\infty}^{+\infty}{\bf Y}(x,y,t)\delta{\bf \widehat u}_{y}(y,t)\dd y,
\end{equation}
where
\begin{equation*}
	{\bf Y}(x,y,t){=}\frac{\ii}{2\pi}\int_{\Sigma_{\pm}}\frac{1}{\lambda}\left(\frac{1}{ s_{11}^{2}(\lambda;t)}\left(\partial_{x}{\bf\Omega}_{+,1}(\lambda;x,t)\right){\bf\Omega}_{+,2}^{\top}(\lambda;y,t){-}
	\frac{1}{s_{22}^{2}(\lambda;t)}\left(\partial_{x}{\bf\Omega}_{-,2}(\lambda;x,t)\right){\bf\Omega}_{-,1}^{\top}(\lambda;y,t)\right)\sigma_{3}\dd\lambda.
\end{equation*}
If we want to ensure the equation \eqref{2-deltafunction-Y} is always true, regardless of the localized function $\delta{\bf\widehat u}_{x}(x,t)$ we choose, then the function ${\bf Y}(x,y,t)$ must be the Dirac's $\delta$-function, that is, 
\begin{equation}\label{2-closure-nozero}
	{\bf Y}(x,y,t)=\delta(x-y)\mathbb{I},
\end{equation}
which is called the closure relation or the completeness relation of the squared eigenfunctions. Note that the completeness relation \eqref{2-closure-nozero} was obtained based on the assumption that $s_{11}(\lambda;t)$ and $s_{22}(\lambda;t)$ have no zeros. However, the functions $s_{11}(\lambda;t)$ and $s_{22}(\lambda;t)$ have simple zeros in $\Lambda_{1}$ and $\Lambda_{2}$, respectively. Therefore, the completeness relation \eqref{2-closure-nozero} no longer holds and we must take into account contributions from the discrete spectrums. Furthermore, it has been ascertained that these contributions from the discrete spectrums are precisely represented by the residues of the integrand functions in \eqref{2-closure-nozero} at the poles $\lambda_{n}$ (i.e., the zeros of $s_{11}(\lambda;t)$) and $\lambda_{n}^{*}$ (i.e., the zeros of $s_{22}(\lambda;t)$), $n=1,2,\cdots,2N$. By incorporating these contributions from the discrete spectrums, we can establish a comprehensive completeness relation:
\begin{equation*}
	{\bf\widehat Y}(x,y,t)=\delta(x-y)\mathbb{I},
\end{equation*}
where
\begin{multline*}
{\bf\widehat Y}(x,y,t)=-\frac{\ii}{2\pi}\bigg(\int_{\Gamma_{+}}\frac{1}{\lambda s_{11}^{2}(\lambda;t)}\left(\partial_{x}{\bf\Omega}_{+,1}(\lambda;x,t)\right){\bf\Omega}_{+,2}^{\top}(\lambda;y,t)\sigma_{3}\dd\lambda\\
	-\int_{\Gamma_{-}}\frac{1}{\lambda s_{22}^{2}(\lambda;t)}\left(\partial_{x}{\bf\Omega}_{-,2}(\lambda;x,t)\right){\bf\Omega}_{-,1}^{\top}(\lambda;y,t)\sigma_{3}\dd\lambda\bigg),
\end{multline*}
and the integration paths $\Gamma_{\pm}$ are shown in Fig.\ref{fig:contour}. Based on the property of the Dirac's $\delta$-function,
\begin{equation*}
	\int_{-\infty}^{+\infty}\delta(x-y)f(y)\dd y=f(x),
\end{equation*}
where $f(x)$ is an arbitrary smooth function, we have
\begin{multline}\label{2-hatu-usesquared}
	{\bf\widehat u}_{x}(x,t)=\int_{-\infty}^{+\infty}{\bf\widehat Y}(x,y,t){\bf \widehat u}_{y}(y,t)\dd y
	=-\frac{\ii}{2\pi}\bigg(\int_{\Gamma_{+}}\frac{1}{\lambda s_{11}^{2}(\lambda;t)}\left(\partial_{x}{\bf\Omega}_{+,1}(\lambda;x,t)\right)\Big\langle\sigma_{3}{\bf\Omega}_{+,2}(\lambda;x,t),{\bf\widehat u}_{x}(x,t)\Big\rangle\dd\lambda\\
	 -\int_{\Gamma_{-}}\frac{1}{\lambda s_{22}^{2}(\lambda;t)}\left(\partial_{x}{\bf\Omega}_{-,2}(\lambda;x,t)\right)\Big\langle\sigma_{3}{\bf\Omega}_{-,1}(\lambda;x,t),{\bf\widehat u}_{x}(x,t)\Big\rangle\dd\lambda\bigg).
\end{multline}

Through a direct calculation, we can derive
\begin{equation*}
\begin{split}
		&\mathcal{L}(\sigma_{3}\partial_{x}{\bf\Omega}_{+,1})=-\lambda^{2}\sigma_{3}\partial_{x}{\bf\Omega}_{+,1},\ \ \ \ \mathcal{L}(\sigma_{3}\partial_{x}{\bf\Omega}_{-,2})=-\lambda^{2}\sigma_{3}\partial_{x}{\bf\Omega}_{-,2},\\
		&\widetilde{\mathcal{L}}(\sigma_{3}\partial_{x}{\bf\Omega}_{-,1})=-\lambda^{2}\sigma_{3}\partial_{x}{\bf\Omega}_{-,1},\ \ \ \ \widetilde{\mathcal{L}}(\sigma_{3}\partial_{x}{\bf\Omega}_{+,2})=-\lambda^{2}\sigma_{3}\partial_{x}{\bf\Omega}_{+,2},
	\end{split}
\end{equation*}
here the recursion operator $\mathcal{L}$ corresponds to the fFL equation, and $\widetilde{\mathcal{L}}$ is its adjoint. Then we can generalize it to
\begin{equation}\label{2-FactOmega}
\begin{split}
			&\mathcal{F}_{fF}(\mathcal{L})(\sigma_{3}\partial_{x}{\bf\Omega}_{+,1})=-\frac{|2\lambda^{2}|^{2\epsilon}}{\lambda^{2}}\sigma_{3}\partial_{x}{\bf\Omega}_{+,1},\ \ \ \ 	\mathcal{F}_{fF}(\mathcal{L})(\sigma_{3}\partial_{x}{\bf\Omega}_{-,2})=-\frac{|2\lambda^{2}|^{2\epsilon}}{\lambda^{2}}\sigma_{3}\partial_{x}{\bf\Omega}_{-,2},\\
			&\mathcal{F}_{fF}(\widetilde{\mathcal{L}})(\sigma_{3}\partial_{x}{\bf\Omega}_{-,1})=-\frac{|2\lambda^{2}|^{2\epsilon}}{\lambda^{2}}\sigma_{3}\partial_{x}{\bf\Omega}_{-,1},\ \ \ \ \mathcal{F}_{fF}(\widetilde{\mathcal{L}})(\sigma_{3}\partial_{x}{\bf\Omega}_{+,2})=-\frac{|2\lambda^{2}|^{2\epsilon}}{\lambda^{2}}\sigma_{3}\partial_{x}{\bf\Omega}_{+,2}.
	\end{split}
\end{equation}
Utilizing the relation \eqref{2-FactOmega}, we apply the operator function $\mathcal{F}_{fF}(\mathcal{L})$ to the vector function ${\bf\widehat u}_{x}(x,t)$ (i.e.,\eqref{2-hatu-usesquared}),
\begin{multline*}
	\mathcal{F}_{fF}(\mathcal{L}){\bf\widehat u}_{x}(x,t)=-\dfrac{1}{2\pi\ii}\bigg(\int_{\Gamma_{+}}\dfrac{|2\lambda^{2}|^{2\epsilon}}{\lambda^{3} s_{11}^{2}(\lambda;t)}\left(\partial_{x}{\bf\Omega}_{+,1}(\lambda;x,t)\right)\Big\langle\sigma_{3}{\bf\Omega}_{+,2}(\lambda;y,t),{\bf\widehat u}_{y}(y,t)\Big\rangle\dd\lambda\\
	-\int_{\Gamma_{-}}\dfrac{|2\lambda^{2}|^{2\epsilon}}{\lambda^{3} s_{22}^{2}(\lambda;t)}\  \left(\partial_{x}{\bf\Omega}_{-,2}(\lambda
	;x,t)\right)\Big\langle\sigma_{3}{\bf\Omega}_{-,1}(\lambda;y,t),{\bf\widehat u}_{y}(y,t)\Big\rangle\dd\lambda\bigg).
\end{multline*}
Hence, based on the equation \eqref{2-evolution eq}, we can deduce the exact form of the fFL equation,
\begin{equation}\label{2-exact-fFL}
	\begin{split}
		q_{xt}(x,t)=&-\frac{1}{4\pi}\int_{\Gamma_{+}}\dfrac{|2\lambda^{2}|^{2\epsilon}}{\lambda^{3} s_{11}^{2}(\lambda;t)}\left(\partial_{x}\left((\phi_{11}^{-}(\lambda;x,t))^2\right)\right)
		\int_{-\infty}^{+\infty}\Big(\big(\phi_{12}^{+}(\lambda;y,t)\big)^{2}q^{*}_{y}(y,t){-}\big(\phi_{22}^{+}(\lambda;y,t)\big)^{2}q_{y}(y,t)\Big)\dd y\dd\lambda\\
		&{+}\frac{1}{4\pi}\int_{\Gamma_{-}}\dfrac{|2\lambda^{2}|^{2\epsilon}}{\lambda^{3} s_{22}^{2}(\lambda;t)}\left(\partial_{x}\left((\phi_{12}^{-}(\lambda;x,t))^2\right)\right)
		\int_{-\infty}^{+\infty}\Big(\big(\phi_{11}^{+}(\lambda;y,t)\big)^{2}q^{*}_{y}(y,t){-}\big(\phi_{21}^{+}(\lambda;y,t)\big)^{2}q_{y}(y,t)\Big)\dd y\dd\lambda.
	\end{split}
\end{equation}
In particular, the equation \eqref{2-exact-fFL} will degenerate into the classical FL equation \eqref{2-FL} when $\epsilon=0$.

\section{Fractional $N$-soliton solution}
In this section, we will consider the fractional $N$-soliton solution of the fFL equation. This can be attained by setting $s_{12}(\lambda;t)=s_{21}(\lambda;t)=0$. Consequently, ${\bf J}(\lambda;x,t)=\mathbb{I}$,
\begin{equation*}
	{\bf M}_{+}(\lambda;x,t)={\bf M}_{+}^{[1]}(\lambda;x,t){\bf G}(\lambda;x,t)={\bf E}(x,t){\bf G}(\lambda;x,t).
\end{equation*}
We assume ${\bf G}(\lambda;x,t)={\bf G}^{(0)}(x,t)+\lambda{\bf G}^{(1)}(x,t)+o(\lambda)$ by combining with ${\bf M}_{+,0}(x,t)=\mathbb{I}$, then
\begin{equation}\label{2-q-recover}
	{\bf G}^{(0)}(x,t)=\left({\bf E}(x,t)\right)^{-1},\ \ \ \ q(x,t)=\left({\bf E}(x,t){\bf G}^{(1)}(x,t)\right)_{12}.
\end{equation}

Next, we need to derive the exact form of ${\bf E}(x,t){\bf G}^{(1)}(x,t)$. For the matrix function ${\bf G}(\lambda;x,t)$, it can be rewritten as follows:
\begin{equation*}
	{\bf G}=\mathbb{I}+\sum_{j=1}^{N}\left(\frac{{\bf B}_{j}}{\lambda-\lambda_{j}^{*}}-\frac{\sigma_{3}{\bf B}_{j}\sigma_{3}}{\lambda+\lambda_{j}^{*}}\right).
\end{equation*}
Note that ${\bf E}(x,t){\bf G}(\lambda;x,t)$ has the poles at $\lambda=\pm\lambda_{j}^{*},\ j=1,\cdots,N$, and tends to identity matrix $\mathbb{I}$ as $\lambda\to 0$. By introducing the variables $\kappa=\frac{1}{\lambda},\ \kappa_{j}=\frac{1}{\lambda_{j}}$, we can assume 
\begin{equation*}
		{\bf EG}=\mathbb{I}+\sum_{j=1}^{N}\left(\frac{{\bf C}_{j}}{\kappa-\kappa_{j}^{*}}-\frac{\sigma_{3}{\bf C}_{j}\sigma_{3}}{\kappa+\kappa_{j}^{*}}\right),
\end{equation*}
which satisfies ${\bf E}(x,t){\bf G}(\kappa;x,t)\to\mathbb{I}$ as $\kappa\to+\infty$. Therefore,
\begin{equation*}
	{\bf EG}^{(1)}=\sum_{j=1}^{N}\left({\bf C}_{j}-\sigma_{3}{\bf C}_{j}\sigma_{3}\right).
\end{equation*}

We define the matrix ${\bf C}_{j}(x,t)$ as $|{\bf y}_{j}\rangle\langle{\bf z}_{j}|$, where $|{\bf y}_{j}\rangle=\begin{bmatrix}
 	y_{1j},&y_{2j}
 \end{bmatrix}^{\top},\ |{\bf z}_{j}\rangle=\begin{bmatrix}
 z_{1j},&z_{2j}
\end{bmatrix}^{\top}$. Based on the equation \eqref{2-q-recover}, then there is
\begin{equation*}
	q(x,t)=2\langle{\bf\widehat{z}}_{2}|{\bf{\widehat y}}_{1}\rangle,
\end{equation*}
where
\begin{equation*}
	|\widehat{{\bf y}}_{k}\rangle=\begin{bmatrix}
		y_{k1},&\cdots,&y_{kN}
	\end{bmatrix}^{\top},\ \ \ \ |\widehat{{\bf z}}_{k}\rangle=\begin{bmatrix}
	z_{k1},&\cdots,&z_{kN}
\end{bmatrix}^{\top},\ \ \ \ k=1,2.
\end{equation*}
In addition, we can easily give the inverse of ${\bf E}(x,t){\bf G}(\kappa;x,t)$,
\begin{equation*}
	({\bf EG})^{-1}=\mathbb{I}+\sum_{j=1}^{N}\left(\frac{{\bf C}_{j}^{\dagger}}{\kappa-\kappa_{j}}-\frac{\sigma_{3}{\bf C}_{j}^{\dagger}\sigma_{3}}{\kappa+\kappa_{j}}\right).
\end{equation*}
It is obvious that $\underset{\kappa=\kappa_{l}}{\rm{Res}}\left({\bf E}(x,t){\bf G}(\kappa;x,t)\left({\bf E}(x,t){\bf G}(\kappa;x,t)\right)^{-1}\right)=0$, then
\begin{equation}\label{2-res-EG}
	\left(\mathbb{I}+\sum_{j=1}^{N}\left(\frac{|{\bf y}_{j}\rangle\langle{\bf z}_{j}|}{\kappa_{l}-\kappa_{j}^{*}}-\frac{\sigma_{3}|{\bf y}_{j}\rangle\langle{\bf z}_{j}|\sigma_{3}}{\kappa_{l}+\kappa_{j}^{*}}\right)\right)|{\bf z}_{l}\rangle=0,\ \ \ \ l=1,\cdots,N.
\end{equation}
The above equation can be rewritten as
\begin{equation*}
	|{\bf z}_{l}\rangle+\sum_{j=1}^{N}\left(\frac{\langle{\bf z}_{j}|{\bf z}_{l}\rangle}{\kappa_{l}-\kappa_{j}^{*}}\mathbb{I}-\frac{\langle{\bf z}_{j}|\sigma_{3}|{\bf z}_{l}\rangle}{\kappa_{l}+\kappa_{j}^{*}}\sigma_{3}\right)|{\bf y}_{j}\rangle=0,\ \ \ \ l=1,\cdots,N,
\end{equation*}
which implies $|\widehat{{\bf y}}_{1}\rangle={\bf N}^{-1}|\widehat{{\bf z}}_{1}\rangle$, ${\bf N}(\kappa;x,t)=\left(n_{jk}\right)_{1\leq j,k\leq N}$,
\begin{equation*}
	n_{jk}=\left(\frac{\langle{\bf z}_{k}|\sigma_{3}|{\bf z}_{j}\rangle}{\kappa_{j}+\kappa_{k}^{*}}-\frac{\langle{\bf z}_{k}|{\bf z}_{j}\rangle}{\kappa_{j}-\kappa_{k}^{*}}\right)_{1\leq j,k\leq N}.
\end{equation*}
Then we can give the potential function in determinant form,
\begin{equation*}
	q(x,t)=2\langle{\bf\widehat{z}}_{2}|{\bf{\widehat y}}_{1}\rangle=2\langle{\bf\widehat{z}}_{2}|{\bf N}^{-1}|\widehat{{\bf z}}_{1}\rangle=-2\frac{\det\widehat{\bf N}}{\det{\bf N}},\ \ \ \ \widehat{\bf N}=\begin{bmatrix}
		{\bf N}&|{\bf{\widehat z}}_{1}\rangle\\[2pt]
		\langle{\bf\widehat{z}}_{2}|&0
	\end{bmatrix}.
\end{equation*}

Based on the equations \eqref{2-T-kernel} and \eqref{2-res-EG}, we can choose $|{\bf z}_{l}\rangle=|{\bf v}_{l}\rangle$. Next we will give the explicit form of the vector $|{\bf v}_{j}\rangle$. By taking the $x$-derivative of
both sides of the first equation in \eqref{2-T-kernel} and solving this equation, we can obtain
\begin{equation*}
	|{\bf v}_{j}(\lambda_{j};x)\rangle=\ee^{-\ii\lambda_{j}^{2}\sigma_{3}x+\int^{x}_{x_{0}}\zeta_{j}(y)dy}|\nu_{j}\rangle,\ \ \ \ |\nu_{j}\rangle=\begin{bmatrix}
		\nu_{1j}&\nu_{2j}
	\end{bmatrix}^{\top},
\end{equation*}
where $|\nu_{j}\rangle$ is independent of $x$ and $t$. Without losing generality, we take $\zeta_{j}(y)=0$, then $|{\bf v}_{j}(\lambda_{j};x)\rangle=\ee^{-\ii\lambda_{j}^{2}\sigma_{3}x}|\nu_{j}\rangle$. Combining the time evolution of $|{\bf v}_{j}(\lambda_{j};x)\rangle$, then
\begin{equation*}
	|{\bf v}_{j}(\lambda_{j};x,t)\rangle=\ee^{\left(-\ii\lambda_{j}^{2}x+\frac{\ii}{4}\mathcal{F}_{fF}(\lambda_{j}^{2})t\right)\sigma_{3}}|\nu_{j}\rangle.
\end{equation*}
For convenience, we denote
\begin{equation}\label{2-v_j}
	|{\bf v}_{j}\rangle=\begin{bmatrix}
		\ee^{\theta_{j}}\nu_{1j}&\ee^{-\theta_{j}}\nu_{2j}
	\end{bmatrix}^{\top},\ \ \ \ \theta_{j}=\theta_{j}(\lambda_{j};x,t,\epsilon)=-\ii\lambda_{j}^{2}x+\frac{\ii}{4}\lambda_{j}^{-2}|2\lambda_{j}^{2}|^{2\epsilon}t.
\end{equation}

Combining with the form of $|{\bf v}_{j}\rangle$ (i.e.,\eqref{2-v_j}), and the transformation between $\kappa$ and $\lambda$, then the fractional $N$-soliton solution of the fFL equation can be represented as
\begin{equation}\label{2-recover-qx}
	q^{[N]}(x,t)=-2\frac{\det\widehat{\bf P}}{\det{\bf P}},\ \ \ \ {\widehat{\bf P}}=\begin{bmatrix}
		{\bf P}&\hat{{\bf p}}_{1}\\[3pt]
		\hat{{\bf p}}_{2}&0
	\end{bmatrix},
\end{equation}
where ${\bf P}(\lambda;x,t)=\left(p_{jk}\right)_{1\leq j,k\leq N}$,
\begin{equation*}
\hat{{\bf p}}_{1}=\begin{bmatrix}
		\ee^{\theta_{1}}\nu_{11},&\cdots,&\ee^{\theta_{N}}\nu_{1N}
	\end{bmatrix}^{\top},\ \ \ \ 
\hat{{\bf p}}_{2}=\begin{bmatrix}
\ee^{-\theta_{1}^{*}}\nu_{21}^{*},&\cdots,&\ee^{-\theta_{N}^{*}}\nu_{2N}^{*}
\end{bmatrix},\ \ \ \ 
	p_{jk}=\frac{2\lambda_{j}\lambda_{k}^{*}}{\lambda_{j}^{2}-\lambda_{k}^{*2}}\langle {\bf v}_{k}|\begin{bmatrix}
		\lambda_{j}&0\\
		0&\lambda_{k}^{*}
	\end{bmatrix}|{\bf v}_{j}\rangle,
\end{equation*}
the notation $\langle {\bf v}_{k}|$ represents the conjugate transpose of the vector $|{\bf v}_{k}\rangle$.

\subsection{Fractional one-soliton solution}

When considering the case of $N=1$, we can choose $\lambda_{1}=\xi+\ii\eta$, resulting in $\lambda_{2}=-\xi-\ii\eta$. Utilizing equation \eqref{2-recover-qx}, the fractional one-soliton solution can be obtained and presented in the subsequent proposition. It will be elucidated that this solution effectively provides a resolution to the fFL equation \eqref{2-exact-fFL}.
\begin{prop}
	The fractional one-soliton solution of the fFL equation \eqref{2-exact-fFL} is as follows:
	\begin{equation}\label{2-onesoliton}
		q^{[1]}(x,t)=\frac{2\ii\xi\eta\nu_{21}^{*}\ee^{2\ii\theta_{1I}+\varpi_{1}}}{(\xi+\ii\eta)^{2}(\xi-\ii\eta)\nu_{11}^{*}}{\rm sech}(2\theta_{1R}+\varpi_{1}),
	\end{equation}
where $\nu_{11},\ \nu_{21}$ are complex constants,  $\theta_{1}=\theta_{1R}+\ii\theta_{1I}$,
\begin{equation*}
	\begin{split}
		&\theta_{1R}=2\xi\eta x+2^{2\epsilon-1}\xi\eta(\xi^2+\eta^2)^{2\epsilon-2} t,\ \ \ \ \varpi_{1}=\ln\left(\frac{(\xi+\ii\eta)\ |\nu_{11}|}{\sqrt{\xi^2+\eta^2}|\nu_{21}|}\right),\\
		&\theta_{1I}=-(\xi^2-\eta^2)x+4^{\epsilon-1}(\xi^2-\eta^2)(\xi^2+\eta^2)^{2\epsilon-2}t.
	\end{split}	
\end{equation*}
\end{prop}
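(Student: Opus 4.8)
The plan is to prove the proposition in two stages: first, extract the closed-form expression \eqref{2-onesoliton} by specializing the determinant representation \eqref{2-recover-qx} to $N=1$; and second, confirm that the resulting $q^{[1]}$ indeed solves the exact fractional equation \eqref{2-exact-fFL}, which is the substance behind the phrase ``provides a resolution to the fFL equation''.

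For the first stage I would set $N=1$ throughout \eqref{2-recover-qx}. Then ${\bf P}$ reduces to the scalar $p_{11}$, so $\det{\bf P}=p_{11}$, while $\widehat{\bf P}$ is the $2\times2$ matrix $\begin{bmatrix}p_{11}&\hat p_1\\ \hat p_2&0\end{bmatrix}$ with $\det\widehat{\bf P}=-\hat p_1\hat p_2$. Hence $q^{[1]}=2\hat p_1\hat p_2/p_{11}$. Substituting $|{\bf v}_1\rangle$ from \eqref{2-v_j} and using $\lambda_1=\xi+\ii\eta$, $\lambda_1\lambda_1^*=\xi^2+\eta^2$, and $\lambda_1^2-\lambda_1^{*2}=4\ii\xi\eta$, a direct computation gives $p_{11}=\frac{\xi^2+\eta^2}{2\ii\xi\eta}\big(\lambda_1|\nu_{11}|^2\ee^{2\theta_{1R}}+\lambda_1^*|\nu_{21}|^2\ee^{-2\theta_{1R}}\big)$, where I split $\theta_1=\theta_{1R}+\ii\theta_{1I}$ so that $\theta_1+\theta_1^*=2\theta_{1R}$ and $\theta_1-\theta_1^*=2\ii\theta_{1I}$; the numerator then contributes the pure phase $\ee^{\theta_1-\theta_1^*}=\ee^{2\ii\theta_{1I}}$.

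The key algebraic step is to rewrite $\lambda_1|\nu_{11}|^2\ee^{2\theta_{1R}}+\lambda_1^*|\nu_{21}|^2\ee^{-2\theta_{1R}}$ in hyperbolic form. Setting $A=\lambda_1|\nu_{11}|^2$ and $B=\lambda_1^*|\nu_{21}|^2$, one has the identity $A\ee^{u}+B\ee^{-u}=2\sqrt{AB}\cosh\!\big(u+\tfrac12\ln(A/B)\big)$; identifying $\tfrac12\ln(A/B)=\varpi_1$ and $\sqrt{AB}=\sqrt{\xi^2+\eta^2}\,|\nu_{11}||\nu_{21}|$ reproduces exactly the phase $\varpi_1$ stated in the proposition. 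Assembling the pieces and simplifying the prefactor (using $\nu_{11}/|\nu_{11}|=|\nu_{11}|/\nu_{11}^*$) yields the advertised ${\rm sech}$ form \eqref{2-onesoliton}. For the second stage I would verify that this $q^{[1]}$ satisfies \eqref{2-exact-fFL}. Because the solution is reflectionless ($s_{12}=s_{21}=0$), I would first solve the associated Riemann--Hilbert problem explicitly to obtain the Jost columns $\phi_{ij}^\pm$ for the one-soliton potential, and then observe that the integrands in \eqref{2-exact-fFL} are meromorphic along $\Gamma_\pm$, so the contour integrals collapse to residue sums at the discrete eigenvalues $\pm\lambda_1$ (for $\Gamma_+$) and $\pm\lambda_1^*$ (for $\Gamma_-$). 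Since $s_{11}$ has a simple zero at $\lambda_1$, the factor $s_{11}^{-2}$ produces a double pole, so each residue requires differentiating the remaining integrand; here the fractional weight $|2\lambda^2|^{2\epsilon}$ is evaluated, together with its $\lambda$-derivative, at the discrete point, which is precisely what injects the fractional dispersion into the time dependence carried by $\theta_1$.

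The main obstacle will be this double-pole residue bookkeeping: correctly differentiating the products of squared eigenfunctions and the weight $|2\lambda^2|^{2\epsilon}/\lambda^3 s_{11}^2$ at $\lambda=\lambda_1$, keeping track of the norming-constant and $s_{11}'(\lambda_1)$ relations, and confirming that the fractional factor assembles into exactly the coefficients $2^{2\epsilon-1}\xi\eta(\xi^2+\eta^2)^{2\epsilon-2}$ and $4^{\epsilon-1}(\xi^2-\eta^2)(\xi^2+\eta^2)^{2\epsilon-2}$ appearing in $\theta_{1R}$ and $\theta_{1I}$. Summing the four residues and comparing with $q^{[1]}_{xt}$ computed directly from \eqref{2-onesoliton} then closes the argument; the first stage, by contrast, is routine once the $\cosh$ identity is recognized.
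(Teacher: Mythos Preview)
Your first stage is fine and matches the paper's implicit derivation from \eqref{2-recover-qx}.

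The second stage has a real gap. Your reduction of the $\lambda$-integrals in \eqref{2-exact-fFL} to residue sums is not justified: the integrand contains the factor $|2\lambda^{2}|^{2\epsilon}$, which is \emph{not} holomorphic in $\lambda$, so the integrand is not meromorphic; and even if it were, the contours $\Gamma_{\pm}$ run to infinity along $\Sigma_{\pm}$ and are not closed, so meromorphy alone does not collapse the integral to residues. Consequently your plan to ``differentiate the remaining integrand'' at the double pole, including taking a $\lambda$-derivative of $|2\lambda^{2}|^{2\epsilon}$, cannot be carried out as a complex-analytic residue.

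The paper's mechanism is different and is where the real work lies. One first writes down the one-soliton Jost matrix explicitly (via a Darboux matrix acting on the vacuum solution) together with $s_{11},s_{22}$. One then shows, by a separate contour argument in the real variable $z=2\theta_{1R}+\varpi_{1R}$, that the \emph{inner} $y$-integral
\[
\int_{-\infty}^{+\infty}\Big((\widehat\phi_{12}^{+})^{2}q^{[1]*}_{y}-(\widehat\phi_{22}^{+})^{2}q^{[1]}_{y}\Big)\dd y
\]
vanishes identically for every $\lambda\in\Sigma_{+}$ and also at $\lambda=\pm\lambda_{1}$ (and likewise for the $\Gamma_{-}$ piece). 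The vanishing on $\Sigma_{\pm}$ is what kills the continuous portion of the outer $\lambda$-integral; the vanishing at $\pm\lambda_{1}$ cancels one order of the double pole of $s_{11}^{-2}$, leaving a \emph{simple} pole whose residue involves only the $\lambda$-derivative of the inner $y$-integral and the \emph{value} (not derivative) of $|2\lambda^{2}|^{2\epsilon}$ at $\lambda_{1}$. That is how the fractional weight enters cleanly as $|2\lambda_{1}^{2}|^{2\epsilon}=(2(\xi^{2}+\eta^{2}))^{2\epsilon}$ without ever being differentiated. Your proposal misses this vanishing step and therefore misidentifies both the justification for the residue reduction and the order of the pole.
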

\begin{proof}
	
	In order to establish the validity of the fractional one-soliton solution $q^{[1]}(x,t)$ in satisfying equation \eqref{2-exact-fFL}, it is imperative to obtain the corresponding fundamental solution ${\bf\Phi}^{[1]}(\lambda;x,t)$ as well as the scattering coefficients 
$s_{11}(\lambda;t)$ and $s_{22}(\lambda;t)$. Upon thorough deliberation, we propose utilizing the Darboux transformation method to construct ${\bf\Phi}^{[1]}(\lambda;x,t)$, which offers a more convenient approach. Based on the literature \cite{guo-2012}, we make appropriate modifications to its one-fold Darboux matrix, resulting in
	\begin{equation*}
		{\bf D}=\mathbb{I}-\lambda\lambda_{1}^{*}\left(\dfrac{{\bf B}}{\lambda-\lambda_{1}^{*}}+\dfrac{\sigma_{3}{\bf B}\sigma_{3}}{\lambda+\lambda_{1}^{*}}\right),
	\end{equation*}
	where
	\begin{equation*}
		\begin{split}
			&{\bf B}=\dfrac{\lambda_{1}^{2}-\lambda_{1}^{*2}}{2|\lambda_{1}|^{2}}\begin{bmatrix}
				b&0\\[3pt]
				0&b^{*}
			\end{bmatrix}\varphi_{1}\varphi_{1}^{\dagger},\ \ \ \ b^{-1}=\varphi_{1}^{\dagger}\begin{bmatrix}
				\lambda_{1}&0\\[3pt]
				0&\lambda_{1}^{*}
			\end{bmatrix}\varphi_{1},\ \ \ \ \varphi_{1}=\begin{bmatrix}
				\nu_{11}\exp\left(-\ii\lambda_{1}^{2}x+\frac{\ii}{4}\lambda_{1}^{-2}|2\lambda_{1}^{2}|^{2\epsilon}t\right)\\[5pt]
				\nu_{21}\exp\left(\ii\lambda_{1}^{2}x-\frac{\ii}{4}\lambda_{1}^{-2}|2\lambda_{1}^{2}|^{2\epsilon}t\right)
			\end{bmatrix}.
		\end{split}
	\end{equation*}
For convenience, we denote $\widehat{\bf\Phi}:={\bf\Phi}^{[1]}$, then $\widehat{\bf\Phi}={\bf D}{\bf\Phi}^{[0]}$ is shown below,
		\begin{equation*}
		\begin{split}
			&\widehat{\bf\Phi}=\dfrac{1}{(\xi+\ii\eta)(\lambda+\xi-\ii\eta)(\lambda-\xi+\ii\eta)}\times\\[8pt]
			&
			\begin{bmatrix}
				(\xi{-}\ii\eta)\left(\lambda^{2}\cosh(2\theta_{1R}{+}\varpi_{1}^{*}){\rm sech}(2\theta_{1R}{+}\varpi_{1}){-}(\xi^{2}{+}\eta^{2})\right)\ee^{\theta}&-2\ii\lambda\xi\eta\frac{\nu_{21}^{*}}{\nu_{11}^{*}}\exp(2\ii\theta_{1I}+\varpi_{1}^{*}){\rm sech}(2\theta_{1R}+\varpi_{1})\ee^{{-}\theta}\\[10pt]
				-2\ii\lambda\xi\eta\frac{\nu_{21}}{\nu_{11}}\exp(-2\ii\theta_{1I}+\varpi_{1}^{*}){\rm sech}(2\theta_{1R}+\varpi_{1}^{*})\ee^{\theta}&(\xi{-}\ii\eta)\left(\lambda^{2}\cosh(2\theta_{1R}{+}\varpi_{1}){\rm sech}(2\theta_{1R}{+}\varpi_{1}^{*}){-}(\xi^{2}{+}\eta^{2})\right)\ee^{{-}\theta}
			\end{bmatrix},\\[3pt]
		&{\bf\Phi}^{[0]}={\rm diag}\left(\ee^{\theta},\ee^{-\theta}\right),\ \ \ \ \theta(\lambda;x,t,\epsilon)=-\ii\lambda^{2}x+\frac{\ii}{4}\lambda^{-2}|2\lambda^{2}|^{2\epsilon}t,
		\end{split}
	\end{equation*}
 where ${\bf\Phi}^{[0]}(\lambda;x,t)$ is the fundamental solution under the zero background. We can also derive $\widehat{\bf\Phi}^{\pm}(\lambda;x,t)$ by combining the boundary condition \eqref{2-asy-Phi},
		\begin{equation}\label{2-hatPhi}
			\widehat{\bf\Phi}^{+}=\widehat{\bf\Phi}\begin{bmatrix}
				\dfrac{(\xi+\ii\eta)^{2}(\lambda+\xi-\ii\eta)(\lambda-\xi+\ii\eta)}{(\xi-\ii\eta)^{2}(\lambda-\xi-\ii\eta)(\lambda+\xi+\ii\eta)}&0\\[10pt]
				0&1
			\end{bmatrix},\ \ \ \  \widehat{\bf\Phi}^{-}=\widehat{\bf\Phi}\begin{bmatrix}
				1&0\\[10pt]
				0&\dfrac{(\xi+\ii\eta)^{2}(\lambda+\xi-\ii\eta)(\lambda-\xi+\ii\eta)}{(\xi-\ii\eta)^{2}(\lambda-\xi-\ii\eta)(\lambda+\xi+\ii\eta)}
			\end{bmatrix}.
	\end{equation}
Furthermore, the corresponding scattering coefficients are as follows:
\begin{equation}\label{2-pf-s11+s22}
	s_{11}(\lambda;t)=\dfrac{(\xi-\ii\eta)^{2}(\lambda-\xi-\ii\eta)(\lambda+\xi+\ii\eta)}{(\xi+\ii\eta)^{2}(\lambda+\xi-\ii\eta)(\lambda-\xi+\ii\eta)},\ \ \ \ s_{22}(\lambda;t)=\dfrac{(\xi+\ii\eta)^{2}(\lambda+\xi-\ii\eta)(\lambda-\xi+\ii\eta)}{(\xi-\ii\eta)^{2}(\lambda-\xi-\ii\eta)(\lambda+\xi+\ii\eta)}.
\end{equation}
	Based on the above, we begin to prove that $q^{[1]}(x,t)$ satisfies the equation \eqref{2-exact-fFL}.
	
	Firstly, we consider the integral
	\begin{equation}\label{2-integral-gamma+z}
	\int_{-\infty}^{+\infty}\left(\left(\widehat{\bf\phi}_{12}^{+}(\lambda;x,t)\right)^{2}q^{[1]*}_{x}(x,t){-}\left(\widehat{\bf\phi}_{22}^{+}(\lambda;x,t)\right)^{2}q^{[1]}_{x}(x,t)\right)\dd x:=\int_{-\infty}^{+\infty}g_{+,1}(\lambda;x,t)\dd x.
	\end{equation}
The explicit form of $g_{+,1}(\lambda;x,t)$ can be obtained	by using \eqref{2-onesoliton} and \eqref{2-hatPhi},
\begin{multline*}
	g_{+,1}(\lambda;x,t)=-\dfrac{4\xi\eta(\xi-\ii\eta)^2\nu_{21}^{*}\nu_{11}^{*-1}\ee^{-2\theta+2\ii\theta_{1I}+\varpi_{1}}}{(\xi+\ii\eta)^{3}\left(\lambda^{2}-(\xi-\ii\eta)^{2}\right)^{2}}\bigg(\lambda^{4}{\rm sech}(2\theta_{1R}+\varpi_{1}^{*})-2\lambda^{2}(\xi^2+\eta^2){\rm sech}(2\theta_{1R}+\varpi_{1})\\
+\frac{4\xi^2\eta^2\lambda^2}{\xi^2+\eta^2}{\rm sech}(2\theta_{1R}+\varpi_{1}){\rm sech}^{2}(2\theta_{1R}+\varpi_{1}^{*})+(\xi^2+\eta^2)^{2}{\rm cosh}(2\theta_{1R}+\varpi_{1}^{*}){\rm sech}^{2}(2\theta_{1R}+\varpi_{1})\bigg).
\end{multline*}
By introducing the variable $z=2\theta_{1R}+\varpi_{1R}$, then the integral \eqref{2-integral-gamma+z} can be rewritten as
\begin{equation*}
	\int_{-\infty}^{+\infty}g_{+,1}(\lambda;x,t)\dd x=-\frac{(\xi-\ii\eta)^2\nu_{21}^{*}\nu_{11}^{*-1}\ee^{\varpi_{1}-c(\lambda;t)}}{(\xi+\ii\eta)^3\left(\lambda^{2}-(\xi-\ii\eta)^{2}\right)^{2}}	\int_{-\infty}^{+\infty}h_{1}(\lambda;z,t)\dd z,
\end{equation*}
where $\varpi_{1}=\varpi_{1R}+\ii\varpi_{1I}$,
\begin{equation*}
	\begin{split}
	&c(\lambda;t)=\ii\left( 2^{2\epsilon-1}|\lambda^{2}|^{2\epsilon}\lambda^{-2}+4(\lambda^{2}-2\xi^{2}+2\eta^{2})(2(\xi^2+\eta^2))^{2\epsilon-2}\right)t+\frac{\ii(\lambda^{2}-\xi^2+\eta^2)\varpi_{1R}}{2\xi\eta},\\
&h_{1}(\lambda;z,t)=\exp\left(\frac{\ii(\lambda^{2}-\xi^{2}+\eta^{2})}{2\xi\eta}z\right)\Big(\lambda^{4}{\rm sech}(z-\ii\varpi_{1I})+(\xi^2+\eta^2)^{2}{\rm cosh}(z-\ii\varpi_{1I}){\rm sech}^{2}(z+\ii\varpi_{1I})\\
&\ \ \ \ \ \ \ \ \ \ \ \ \ \ \  -2\lambda^{2}(\xi^2+\eta^2){\rm sech}(z+\ii\varpi_{1I})+\frac{4\xi^2\eta^2\lambda^2}{\xi^2+\eta^2}{\rm sech}(z+\ii\varpi_{1I}){\rm sech}^{2}(z-\ii\varpi_{1I})\Big),\\
&\varpi_{1R}=\ln|\nu_{11}\nu_{21}^{-1}|,\ \ \ \ \varpi_{1I}={\rm arctan}(\eta\xi^{-1}).
\end{split}
\end{equation*}
Now we consider the integral of $h_{1}(\lambda;z,t)$ on the matrix contour in the complex $z$-plane (see Fig.\ref{fig:z-plane}).
	\begin{figure}
		\centering
		\includegraphics[width=0.45\linewidth]{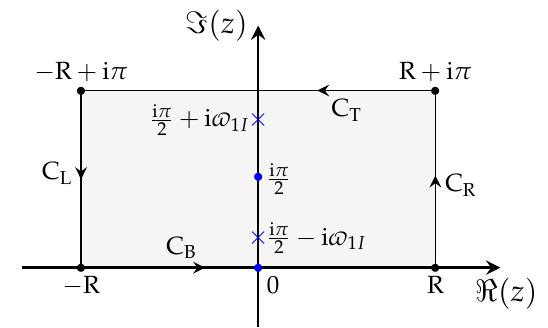}
		\caption[z-plane]{The matrix contour in the complex $z$-plane, which contains only two singularities of the function $h_{1}(\lambda;z,t)$. The singularities of $h_{1}(\lambda;z,t)$ are indicated by $\times$’s in the figure.}
		\label{fig:z-plane}
	\end{figure}
 Apparently, within the closed path shown in Fig.\ref{fig:z-plane}, $h_{1}(\lambda;z,t)$ has singularities at $z=\pm\ii\varpi_{1I}+\frac{\ii\pi}{2}$, so
 \begin{equation*}
 	\int_{C_{B}}h_{1}(\lambda;z,t)\dd z+\int_{C_{R}}h_{1}(\lambda;z,t)\dd z+\int_{C_{T}}h_{1}(\lambda;z,t)\dd z+\int_{C_{L}}h_{1}(\lambda;z,t)\dd z=2\pi\ii\underset{z=\pm\ii\varpi_{1I}+\frac{\ii\pi}{2}}{\rm{Res}}h_{1}(\lambda;z,t).
 \end{equation*}
Through calculations, we find that when $\lambda\in\Sigma_{+}$ or  $\lambda=\pm\lambda_{1}$, there is
\begin{equation*}
\lim\limits_{R\to+\infty}\left(\int_{C_{R}}h_{1}(\lambda;z,t)\dd z+\int_{C_{L}}h_{1}(\lambda;z,t)\dd z\right)=0.
\end{equation*}
In addition, the integral on path $C_{T}$ can be expressed by the integral on path $C_{B}$,
	\begin{equation*}
\int_{C_{T}}h_{1}(\lambda;z,t)\dd z=\exp\left(\frac{\pi(\lambda^{2}-\xi^{2}+\eta^{2})}{2\xi\eta}\right)\int_{C_{B}}h_{1}(\lambda;z,t)\dd z.
	\end{equation*}
Therefore, we can get
	\begin{equation*}
	\left(1+\exp\left(\frac{\pi(\lambda^{2}-\xi^{2}+\eta^{2})}{2\xi\eta}\right)\right)	\int_{-\infty}^{+\infty}h_{1}(\lambda;z,t)\dd z=2\pi\ii\underset{z=\pm\ii\varpi_{1I}+\frac{\ii\pi}{2}}{\rm{Res}}h_{1}(\lambda;z,t),\ \ {\rm when}\ \lambda\in\Sigma_{+}\ {\rm or}\ \lambda=\pm\lambda_{1}.
	\end{equation*}
However, the residues of $h_{1}(\lambda;z,t)$ at $z=\pm\ii\varpi_{1I}+\frac{\ii\pi}{2}$ are all equal to zero, thus $\int_{-\infty}^{+\infty}h_{1}(\lambda;z,t)\dd z=0$ when $\lambda\in\Sigma_{+}$ or  $\lambda=\pm\lambda_{1}$, that is,
\begin{equation}\label{2-int-g1}
	\int_{-\infty}^{+\infty}g_{+,1}(\lambda;x,t)\dd x=0,\ \ \ \ {\rm when}\ \lambda\in\Sigma_{+}\ {\rm or}\ \lambda=\pm\lambda_{1}.
\end{equation}
	Similarly, we denote
	\begin{equation*}
		\int_{-\infty}^{+\infty}\Big(\big(\widehat{\phi}_{11}^{+}(\lambda;x,t)\big)^{2}q^{[1]*}_{x}(x,t){-}\big(\widehat{\phi}_{21}^{+}(\lambda;x,t)\big)^{2}q^{[1]}_{x}(x,t)\Big)\dd x:=\int_{-\infty}^{+\infty}g_{-,1}(\lambda;x,t)\dd x.
	\end{equation*}
By using the same method, we can obtain
	\begin{equation}\label{2-int-g2}
		\int_{-\infty}^{+\infty}g_{-,1}(\lambda;x,t)\dd x=0,\ \ \ \ {\rm when}\ \lambda\in\Sigma_{-}\ {\rm or}\ \lambda=\pm\lambda_{1}^{*}.
	\end{equation}
	
	The expressions \eqref{2-int-g1} and \eqref{2-int-g2} provide insight into the decomposition of the integral on $\Gamma_{+}$ or $\Gamma_{-}$ within the right-hand side of equation \eqref{2-exact-fFL}. This decomposition allows us to separate the integral into continuous and discrete components. It is worth noting that the discrete portions represent the residues of the integrand functions in \eqref{2-exact-fFL}, which are exactly at the zeros of $s_{11}(\lambda;t)$ and $s_{22}(\lambda;t)$. We denote
	\begin{equation*}
		|2\lambda^{2}|^{2\epsilon}\partial_{x}\left(\left(\widehat{\phi}_{11}^{-}(\lambda;x,t)\right)^2\right):=g_{+,2}(\lambda;x,t),\ \ \ \ |2\lambda^{2}|^{2\epsilon}\partial_{x}\left(\left(\widehat{\phi}_{12}^{-}(\lambda;x,t)\right)^2\right):=g_{-,2}(\lambda;x,t).
	\end{equation*}
	Then we can rewrite the right part in \eqref{2-exact-fFL} as follows by combining the equations \eqref{2-pf-s11+s22}, \eqref{2-int-g1} and \eqref{2-int-g2},
	\begin{multline}\label{2-Gamma+}
		\int_{\Gamma_{+}}\frac{g_{+,2}(\lambda;x,t)}{\lambda^{3}s_{11}^2(\lambda;t)}\int_{-\infty}^{+\infty}g_{+,1}(\lambda;y,t)\dd y\dd\lambda=2\pi\ii\sum\limits_{j=1}^{2}\left(\frac{g_{+,2}(\lambda_{j};x,t)}{\lambda_{j}^{3}s^{'2}_{11}(\lambda_{j};t)}\int_{-\infty}^{+\infty}g_{+,1}'(\lambda_{j};y,t)\dd y\right)\\
		=\dfrac{4\pi\ii\xi\eta(2(\xi^2+\eta^2))^{2\epsilon}\nu_{21}^{*}}{(\xi+\ii\eta)^{4}(\xi-\ii\eta)\nu_{11}^{*}}\ee^{2\ii\theta_{1I}+\varpi_{1}}{\rm sech}^{3}(2\theta_{1R}+\varpi_{1})\left((\xi+\ii\eta)^{2}\ee^{-2\theta_{1R}-\varpi_{1}}{\rm cosh}(2\theta_{1R}+\varpi_{1})-4\ii\xi\eta\right),
	\end{multline}
\begin{multline}\label{2-Gamma-}
	\int_{\Gamma_{-}}\frac{g_{-,2}(\lambda;x,t)}{\lambda^{3}s_{22}^2(\lambda;t)}\int_{-\infty}^{+\infty}g_{-,1}(\lambda;y,t)\dd y\dd\lambda=-2\pi\ii\sum\limits_{j=1}^{2}\left(\frac{g_{-,2}(\lambda_{j}^{*};x,t)}{\lambda_{j}^{*3}s^{'2}_{22}(\lambda_{j}^{*};t)}\int_{-\infty}^{+\infty}g_{-,1}'(\lambda_{j}^{*};y,t)\dd y\right)\\
	=-\dfrac{4\pi\ii\xi\eta(2(\xi^2+\eta^2))^{2\epsilon}\nu_{21}^{*}}{(\xi+\ii\eta)^{2}(\xi-\ii\eta)^{3}\nu_{11}^{*}}\ee^{2\ii\theta_{1I}+\varpi_{1}}{\rm sech}^{3}(2\theta_{1R}+\varpi_{1})\left((\xi-\ii\eta)^{2}\ee^{2\theta_{1R}+\varpi_{1}}{\rm cosh}(2\theta_{1R}+\varpi_{1})+4\ii\xi\eta\right),
\end{multline}
	where the superscript $'$
	denotes the partial derivative with respect to $\lambda$. Moreover, we can directly obtain the derivative of $q^{[1]}(x,t)$ with respect to $x$ and $t$ by according to the solution \eqref{2-onesoliton},
	\begin{equation}\label{2-qxt}
		q^{[1]}_{xt}(x,t)=\dfrac{\ii\xi\eta(2(\xi^2+\eta^2))^{2\epsilon-3}\nu_{21}^{*}}{4(\xi+\ii\eta)\nu_{11}^{*}}\ee^{2\ii\theta_{1I}+\varpi_{1}}{\rm sech}^{3}(2\theta_{1R}+\varpi_{1})\left((\xi^2+\eta^2)^{2}{\rm cosh}^{2}(2\theta_{1R}+\varpi_{1})-8\xi^2\eta^2\right).
	\end{equation}
Substituting \eqref{2-Gamma+}, \eqref{2-Gamma-} and \eqref{2-qxt} into \eqref{2-exact-fFL}, then the left and right sides are equal. This completes the proof.
\end{proof}
Based on the solution $q^{[1]}(x,t)$ (i.e.,\eqref{2-onesoliton}), we can easily obtain 
\begin{equation*}
	|q^{[1]}(x,t)|^{2}=\dfrac{8\xi^2\eta^2}{(\xi^2+\eta^2)^{2}\left((\xi^2+\eta^2){\rm cosh}\left(4\theta_{1R}+2\ln\left|\frac{\nu_{11}}{\nu_{21}}\right|+(\xi^2-\eta^2)\right)\right)}.
\end{equation*}
The maximum value of $|q^{[1]}(x,t)|$ can be precisely calculated as $\frac{2\eta}{\xi^2+\eta^2}$. Additionally, it is important to note that the velocity of the solitary wave is $v_{w}^{[1]}=-\left(2(\xi^2+\eta^2)\right)^{2\epsilon-2}$. By selecting appropriate parameters, we depict the temporal evolution of wave propagation and investigated the impact of the small parameter $\epsilon$ on this phenomenon, as illustrated in Fig.\ref{fig:wave-direction}. A comprehensive analysis of the figure reveals that the soliton exhibits a characteristic leftward movement, indicative of a left-going traveling-wave soliton. Interestingly, the wave moves farther as $\epsilon$ increases, a phenomenon that is corroborated through the observation of the wave velocity (see Fig.\ref{fig:velocity}). Moreover, the peak amplitude of the soliton achieves $1.6$, based on the chosen parameters in Fig.\ref{fig:wave-direction}.
\begin{figure}[h]
	\centering
	\includegraphics[width=0.9\linewidth]{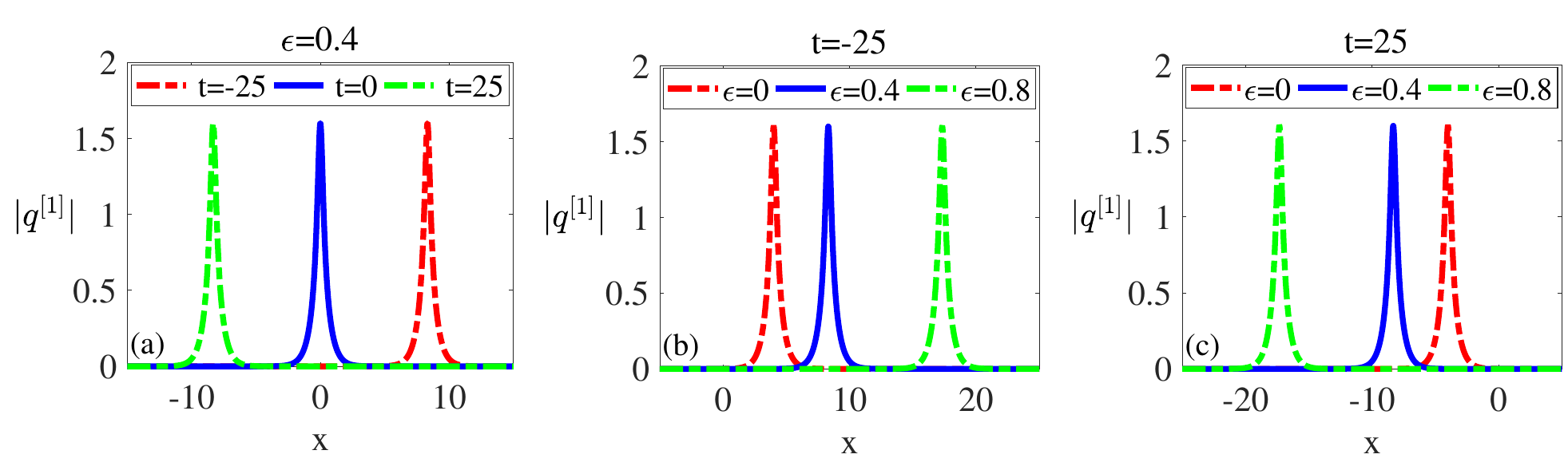}
	\caption{The profiles of the fractional one-soliton solution $q^{[1]}$. (a) the profiles of the fractional one-soliton at $\epsilon=0.4$. It clearly shows that as time increases, the soliton moves in the negative direction along the $x$-axis. (b)-(c) the profiles of the fractional one-soliton at fixed values of $t$. By examining these two figures together, it becomes apparent that as $\epsilon$ increases, the soliton moves farther. The parameters are $\xi=0.5,\ \eta=1,\ \nu_{11}=\nu_{21}=1.$}
	\label{fig:wave-direction}
\end{figure}
\begin{figure}[h]
	\centering
	\includegraphics[width=0.65\linewidth]{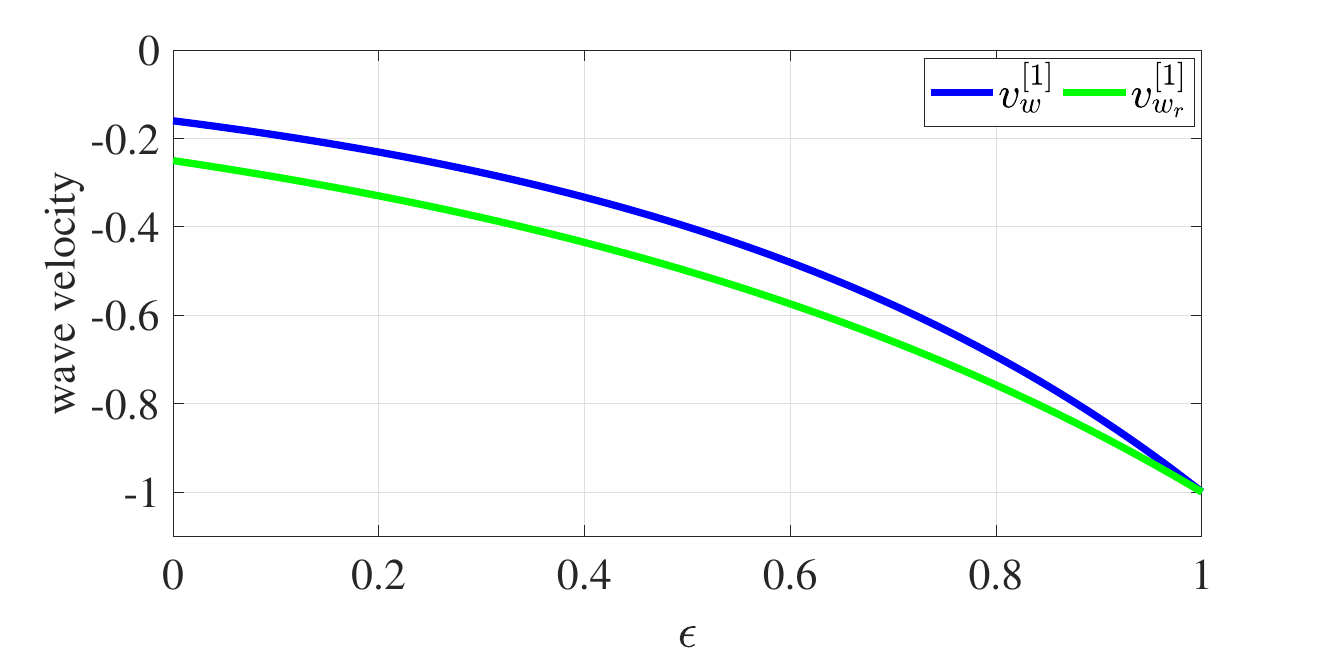}
	\caption{The solid blue curve ($\xi=0.5$): the wave velocity $v_{w}^{[1]}$ corresponds to the fractional one-soliton solution $q^{[1]}$. The solid green curve ($\xi=0$): the wave velocity $v_{w_{r}}^{[1]}$ corresponds to the fractional rational solution $q_{r}^{[1]}$. The negative values of the wave velocities suggest that the solitons are all left-going traveling-wave solitons. And the absolute values of wave velocities become larger as $\epsilon$ increases, implying that the solitons move farther as $\epsilon$ increases. The parameters are fixed by $\eta=1,\ \nu_{11}=\nu_{21}=1$.}
	\label{fig:velocity}
\end{figure}

Furthermore, the rational solution of the fFL equation can be explored by approaching the limit of the fractional one-soliton solution $q^{[1]}(x,t)$. Through the thorough analysis and calculations, it has been determined that the rational solution $q^{[1]}_{r}(x,t)$ emerges exclusively only when $\xi\to 0$ and satisfies the constraint $|\nu_{11}|=|\nu_{21}|$. The detailed expression of this rational solution is presented below:
\begin{equation*}
	q^{[1]}_{r}(x,t)=\dfrac{2\ii\eta \nu_{21}^{*}\exp\left(2\ii\eta^2 x-\ii (2\eta^2)^{2\epsilon-1}t\right)}{\nu_{11}^{*}\left(4\ii\eta^4 x+\ii(2\eta^2)^{2\epsilon}t+\eta^2\right)}.
\end{equation*}

By considering the fractional rational solution $q^{[1]}_{r}(x,t)$ and selecting parameters consistent with Fig.\ref{fig:wave-direction}, we can ascertain the temporal evolution of wave propagation and effect of $\epsilon$ on this phenomenon, as illustrated in Figure \ref{fig:wave-direction-rationalsolution}. Evidently, these characteristics align with the properties exhibited by the fractional one-soliton solution.  It is essential to note that since the fractional rational solution is obtained in the limit $\xi\to 0$, the wave peak naturally attains a value of $2$.
\begin{figure}[ht]
	\centering
	\includegraphics[width=0.9\linewidth]{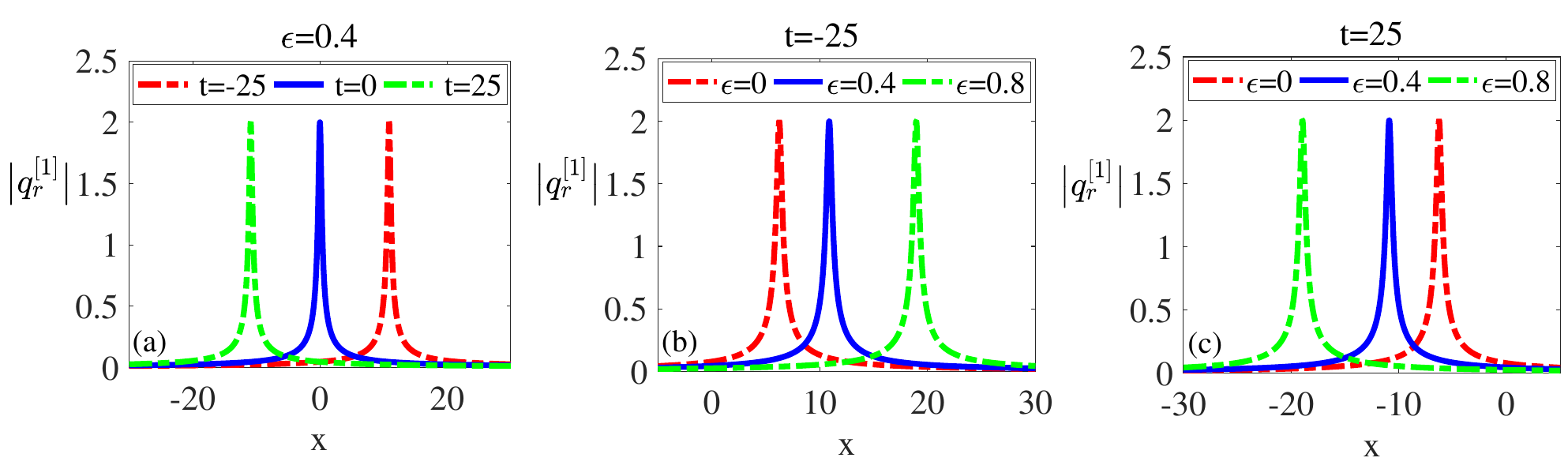}
	\caption{The profiles of the fractional resonant soliton corresponding to the fractional rational solution $q^{[1]}_{r}$ when the time $t$ or the small parameter $\epsilon$ is fixed. The dynamical behavior exhibited by the fractional resonant soliton during its motion is identical to that of the fractional one-soliton (see Fig.\ref{fig:wave-direction}). The parameters are given by $\eta=1,\ \nu_{11}=\nu_{21}=1.$}
	\label{fig:wave-direction-rationalsolution}
\end{figure}
\subsection{Fractional $N$-soliton solution}
Next, we will analyze the asymptotic states of the fractional $N$-soliton \eqref{2-recover-qx} as $|t|\to\infty$. Here we need use the well-known Cauchy determinant:
\begin{equation*}
	\det\left(\left[\dfrac{1}{x_{i}+y_{j}}\right]_{i,j=1}^{n}\right)=\dfrac{\prod\limits_{1\leqslant i<j\leqslant n}(x_{i}-x_{j})(y_{i}-y_{j})}{\prod\limits_{i,j=1}^{n}(x_{i}+y_{j})}.
\end{equation*}
\begin{prop}\label{prop-Nsoliton}
	For the fractional Fokas--Lenells equation, its fractional $N$-soliton solution can be approximated as the sum of $N$ fractional single-soliton solutions, as $|t|$ tends to $\infty$.
\end{prop}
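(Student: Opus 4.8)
The plan is to resolve $q^{[N]}$ into $N$ well-separated single solitons by a long-time dominant-balance argument applied to the determinant representation \eqref{2-recover-qx}. The starting point is the observation, already recorded for $N=1$, that the real part of the phase $\theta_j$ in \eqref{2-v_j} is $\theta_{jR}=2\xi_j\eta_j x+2^{2\epsilon-1}\xi_j\eta_j(\xi_j^2+\eta_j^2)^{2\epsilon-2}t$, so the $j$-th soliton travels with velocity $v_j=-(2(\xi_j^2+\eta_j^2))^{2\epsilon-2}$, which depends only on $|\lambda_j|^2=\xi_j^2+\eta_j^2$. Under the generic assumption that the moduli $|\lambda_j|$ are pairwise distinct, the velocities $v_1,\dots,v_N$ are distinct, and I would order the solitons accordingly. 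The entire argument is then carried out in the moving frame attached to a fixed soliton $n$, i.e. along the ray $x=v_n t+\mathrm{const}$, on which $\theta_{nR}$ stays bounded while $\theta_{jR}\to\pm\infty$ for every $j\neq n$ as $t\to\pm\infty$, the sign being governed by $\mathrm{sgn}((v_j-v_n)t)$.

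First I would make explicit the exponential content of the matrix ${\bf P}$. Writing out $p_{jk}$ from \eqref{2-recover-qx} using \eqref{2-v_j} shows that each entry is a fixed rational prefactor in the $\lambda$'s times a combination of $\ee^{\theta_j+\theta_k^*}$ and $\ee^{-\theta_j-\theta_k^*}$; in particular the diagonal carries the purely real factors $\ee^{\pm2\theta_{jR}}$. In the moving frame I would split $\{1,\dots,N\}\setminus\{n\}$ into the set $\mathcal{A}$ of solitons whose phase diverges to $+\infty$ and the set $\mathcal{B}$ whose phase diverges to $-\infty$. The next step is a reduction of the two determinants $\det{\bf P}$ and $\det\widehat{\bf P}$: I would factor the dominant exponential $\ee^{2\theta_{jR}}$ (for $j\in\mathcal{A}$) or $\ee^{-2\theta_{jR}}$ (for $j\in\mathcal{B}$) out of the corresponding row and column, and keep only the leading term of every remaining entry, all subdominant contributions decaying exponentially and being discarded uniformly on the ray.

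After this reduction, both $\det{\bf P}$ and $\det\widehat{\bf P}$ collapse to expressions in which the only surviving $x,t$-dependence is through the bounded phase $\theta_n$, multiplied by a common divergent prefactor and by a constant minor built from the spectral data of the solitons in $\mathcal{A}\cup\mathcal{B}$. The common prefactor cancels in the ratio $-2\det\widehat{\bf P}/\det{\bf P}$, and what remains is precisely the one-soliton profile \eqref{2-onesoliton} associated with $\lambda_n$, except that the constant $\varpi_n$ (equivalently the seed $\nu_{1n},\nu_{2n}$) is replaced by a shifted constant. To put the residual constant minors into closed form, and thereby read off the phase shift explicitly, I would evaluate them with the Cauchy determinant identity stated just before the proposition, since after the reduction the relevant blocks have exactly the $1/(x_i+y_j)$ structure. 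Carrying this out for both $t\to+\infty$ and $t\to-\infty$ (which interchange the roles of $\mathcal{A}$ and $\mathcal{B}$) yields the two asymptotic positions and phases of each soliton and hence the advertised decomposition into a sum of $N$ single solitons.

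The main obstacle I anticipate is the bookkeeping in this reduction: one must show that, after pulling out the dominant exponential from each row and column indexed by $\mathcal{A}\cup\mathcal{B}$, the leading coefficient factorizes as a single constant minor times the surviving one-soliton data, and that no cross terms of the same exponential order have been overlooked. Establishing that the discarded terms are \emph{uniformly} subdominant on the whole ray, rather than merely pointwise, is the delicate point, and the Cauchy determinant is the tool that makes the surviving minor computable in closed form rather than leaving it as an unevaluated cofactor.
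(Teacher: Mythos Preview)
Your proposal is correct and follows essentially the same route as the paper. The only cosmetic difference is that, rather than factoring the dominant exponential out of each row and column of ${\bf P}$ and $\widehat{\bf P}$, the paper normalises the vectors $|{\bf v}_j\rangle$ themselves: in the frame moving with soliton $k$ it replaces $|{\bf v}_j\rangle$ by $[1,0]^\top$ or $[0,1]^\top$ according to the sign of the divergent $\theta_{jR}$, which is the same reduction you describe since every entry $p_{jk}$ is bilinear in $|{\bf v}_j\rangle$ and $\langle{\bf v}_k|$. After that the paper evaluates the resulting block determinants with the Cauchy identity exactly as you intend, obtaining the one-soliton profile \eqref{2-onesoliton} with an explicit position shift $\delta_k$ and phase shift $c_k$, and records the error as $\mathcal{O}(\ee^{\pm\iota t})$ with $\iota=\max_{i\neq j}2\xi_j\eta_j|v_i-v_j|$.
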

\begin{proof}
We assume $\lambda_{j}=\xi_{j}+\ii\eta_{j}$, which belongs to the ${\rm\uppercase\expandafter{\romannumeral1}}$ quadrant, and define $\theta_{j}=\theta_{jR}+\ii\theta_{jI}$. Without loss of
generality, we suppose $v_{1}<v_{2}<\cdots<v_{N}<0$, where the velocity of the single soliton is $v_{j}=-\left(2(\xi_{j}^2+\eta_{j}^2)\right)^{2\epsilon-2}$. Firstly, we consider the case of $t\to-\infty$. 
In the reference frame that is moving with a velocity of $v_{k}\ (1\leqslant k\leqslant N)$, there exists
\begin{equation*}
\theta_{kR}=2\xi_{k}\eta_{k}\left(x+\left(2(\xi_{k}^2+\eta_{k}^2)\right)^{2\epsilon-2}t\right)=\mathcal{O}(1),
\end{equation*}
then
\begin{equation*}
	\theta_{jR}=2\xi_{j}\eta_{j}\left(x-v_{k}t\right)+2\xi_{j}\eta_{j}(v_{k}-v_{j})t\to\begin{cases}
	-\infty,\ \ \ \ 	j=1,2,\cdots,k-1,\\[2pt]
		+\infty,\ \ \ \ j=k+1,k+2,\cdots,N,
	\end{cases}.
\end{equation*}
So $|\textbf{v}_{j}\rangle=\begin{bmatrix}
	0,&1
\end{bmatrix}^{\top}$ when $j=1,2,\cdots,k-1$, and  $|\textbf{v}_{j}\rangle=\begin{bmatrix}
1,&0
\end{bmatrix}^{\top}$ when $j=k+1,k+2,\cdots,N$. By combining the solution \eqref{2-recover-qx}, we have
\begin{equation*}
	q^{[N],-}_{k}(x,t)=\frac{\det\widehat{\bf P}^{-}_{k}}{\det{\bf P}^{-}_{k}},
\end{equation*}
where
\begin{equation}\label{2-P-}
	\begin{split}
	&\det{\bf P}^{-}_{k}=\left|\begin{matrix}
		\frac{1}{\lambda_{1}^{2}-\lambda_{1}^{*2}}&\cdots&\frac{1}{\lambda_{1}^{2}-\lambda_{k-1}^{*2}}&\frac{\lambda_{k}^{*}\nu_{2k}^{*}\ee^{-\theta_{k}^{*}}}{\lambda_{1}^{2}-\lambda_{k}^{*2}}&0&\cdots&0\\
		\vdots&\ddots&\vdots&\vdots&\vdots&\ddots&\vdots\\
		\frac{1}{\lambda_{k-1}^{2}-\lambda_{1}^{*2}}&\cdots&\frac{1}{\lambda_{k-1}^{2}-\lambda_{k-1}^{*2}}&\frac{\lambda_{k}^{*}\nu_{2k}^{*}\ee^{-\theta_{k}^{*}}}{\lambda_{k-1}^{2}-\lambda_{k}^{*2}}&0&\cdots&0\\[8pt]
		\frac{\nu_{2k}\ee^{-\theta_{k}}}{\lambda_{k}^{2}-\lambda_{1}^{*2}}&\cdots&\frac{\nu_{2k}\ee^{-\theta_{k}}}{\lambda_{k}^{2}-\lambda_{k-1}^{*2}}&\frac{\lambda_{k}|\nu_{1k}|^2\ee^{2\theta_{kR}}+\lambda_{k}^{*}|\nu_{2k}|^2\ee^{-2\theta_{kR}}}{\lambda_{k}^{2}-\lambda_{k}^{*2}}&\frac{\lambda_{k}\nu_{1k}\ee^{\theta_{k}}}{\lambda_{k}^{2}-\lambda_{k+1}^{*2}}&\cdots&\frac{\lambda_{k}\nu_{1k}\ee^{\theta_{k}}}{\lambda_{k}^{2}-\lambda_{N}^{*2}}\\[8pt]
		0&\cdots&0&\frac{\nu_{1k}^{*}\ee^{\theta_{k}^{*}}}{\lambda_{k+1}^{2}-\lambda_{k}^{*2}}&\frac{1}{\lambda_{k+1}^{2}-\lambda_{k+1}^{*2}}&\cdots&\frac{1}{\lambda_{k+1}^{2}-\lambda_{N}^{*2}}\\
		\vdots&\ddots&\vdots&\vdots&\vdots&\ddots&\vdots\\
		0&\cdots&0&\frac{\nu_{1k}^{*}\ee^{\theta_{k}^{*}}}{\lambda_{N}^{2}-\lambda_{k}^{*2}}&\frac{1}{\lambda_{N}^{2}-\lambda_{k+1}^{*2}}&\cdots&\frac{1}{\lambda_{N}^{2}-\lambda_{N}^{*2}}
	\end{matrix}\right|,\\[3pt]
&\det{\bf \widehat{P}}^{-}_{k}=\left|\begin{matrix}
	{\bf P}^{-}_{k}&\hat{{\bf p}}_{1k}^{-}\\[5pt]
	\hat{{\bf p}}_{2k}^{-}&0
\end{matrix}\right|,\ \ \ 
\hat{{\bf p}}_{1k}^{-}=[\ \underbrace{0,\cdots,0}_{k-1},\ \nu_{1k}\ee^{\theta_{k}},\ \underbrace{1,\cdots,1}_{N-k}\ ]^{\top},\ \ \ 
\hat{{\bf p}}_{2k}^{-}=[\ \underbrace{1,\cdots,1}_{k-1},\ \nu_{2k}^{*}\ee^{-\theta_{k}^{*}},\ \underbrace{0,\cdots,0}_{N-k}\ ].
\end{split}
\end{equation}
For convenience, we introduce some notations:
\begin{equation*}
	\begin{split}
		&C_{m,n}=\det\left(\left[\frac{1}{\lambda_{i}^{2}-\lambda_{j}^{*2}}\right]_{i,j=m}^{n}\right)=\dfrac{\prod\limits_{m\leqslant i<j\leqslant n}\left(-\left|\lambda_{i}^{2}-\lambda_{j}^{2}\right|^{2}\right)}{\prod\limits_{i,j=m}^{n}(\lambda_{i}^{2}-\lambda^{*2}_{j})},\ \ \ \ 1\leqslant m<n\leqslant N,\\[3pt]
		&C_{m,m}=\dfrac{1}{\lambda_{m}^{2}-\lambda_{m}^{*2}},\ \ \ \ C_{1,0}=C_{N+1,N}=1,
	\end{split}
\end{equation*}
\begin{equation*}
	\begin{split}
		&C^{-}_{1,i}=\left|\begin{matrix}
			\frac{1}{\lambda_{1}^{2}-\lambda_{1}^{*2}}&\cdots&\frac{1}{\lambda_{1}^{2}-\lambda_{i}^{*2}}\\[2pt]
			\vdots&\ddots&\vdots\\[2pt]
			\frac{1}{\lambda_{i-1}^{2}-\lambda_{1}^{*2}}&\cdots&\frac{1}{\lambda_{i-1}^{2}-\lambda_{i}^{*2}}\\[8pt]
			\frac{1}{\lambda_{1}^{*2}}&\cdots&\frac{1}{\lambda_{i}^{*2}}
		\end{matrix}\right|=\prod\limits_{1\leqslant m<n<i}\left(-\left|\lambda_{m}^{2}-\lambda_{n}^{2}\right|^{2}\right)\prod\limits_{m=1}^{i-1}\prod\limits_{n=1}^{i}\frac{\lambda_{m}^{2}(\lambda_{m}^{*2}-\lambda_{i}^{*2})}{\lambda_{n}^{*2}(\lambda_{m}^{2}-\lambda_{n}^{*2})},\\[4pt]
	&C_{1,i}^{+}=\left|\begin{matrix}
		\frac{1}{\lambda_{1}^{2}-\lambda_{1}^{*2}}&\cdots&\frac{1}{\lambda_{1}^{2}-\lambda_{i-1}^{*2}}&\frac{1}{\lambda_{1}^2}\\[4pt]
		\vdots&\ddots&\vdots&\vdots\\[4pt]
		\frac{1}{\lambda_{i}^{2}-\lambda_{1}^{*2}}&\cdots&\frac{1}{\lambda_{i}^{2}-\lambda_{i-1}^{*2}}&\frac{1}{\lambda_{i}^2}
	\end{matrix}\right|=\prod\limits_{1\leqslant m<n<i}\left(-\left|\lambda_{m}^{2}-\lambda_{n}^{2}\right|^{2}\right)\prod\limits_{m=1}^{i-1}\prod\limits_{n=1}^{i}\frac{\lambda_{m}^{*2}(\lambda_{i}^{2}-\lambda_{m}^{2})}{\lambda_{n}^{2}(\lambda_{n}^{2}-\lambda_{m}^{*2})},\\[4pt]
&\bar{C}^{-}_{j,N}=\left|\begin{matrix}
	\frac{1}{\lambda_{j}^2}&\frac{1}{\lambda_{j}^{2}-\lambda_{j+1}^{*2}}&\cdots&\frac{1}{\lambda_{j}^{2}-\lambda_{N}^{*2}}\\[4pt]
	\vdots&\vdots&\ddots&\vdots\\[4pt]
	\frac{1}{\lambda_{N}^2}&\frac{1}{\lambda_{N}^{2}-\lambda_{j+1}^{*2}}&\cdots&\frac{1}{\lambda_{N}^{2}-\lambda_{N}^{*2}}
\end{matrix}\right|=\prod\limits_{j< m<n\leqslant N}\left(-\left|\lambda_{m}^{2}-\lambda_{n}^{2}\right|^{2}\right)\prod\limits_{m=j+1}^{N}\prod\limits_{n=j}^{N}\frac{\lambda_{m}^{*2}(\lambda_{j}^{2}-\lambda_{m}^{2})}{\lambda_{n}^{2}(\lambda_{n}^{2}-\lambda_{m}^{*2})},\\[4pt]
&\bar{C}_{j,N}^{+}=\left|\begin{matrix}
	\frac{1}{\lambda_{j}^{*2}}&\cdots&\frac{1}{\lambda_{N}^{*2}}\\[6pt]
	\frac{1}{\lambda_{j+1}^{2}-\lambda_{j}^{*2}}&\cdots&\frac{1}{\lambda_{j+1}^{2}-\lambda_{N}^{*2}}\\[2pt]
	\vdots&\ddots&\vdots\\[2pt]
	\frac{1}{\lambda_{N}^{2}-\lambda_{j}^{*2}}&\cdots&\frac{1}{\lambda_{N}^{2}-\lambda_{N}^{*2}}
\end{matrix}\right|=\prod\limits_{j< m<n\leqslant N}\left(-\left|\lambda_{m}^{2}-\lambda_{n}^{2}\right|^{2}\right)\prod\limits_{m=j+1}^{N}\prod\limits_{n=j}^{N}\frac{\lambda_{m}^{2}(\lambda_{m}^{*2}-\lambda_{j}^{*2})}{\lambda_{n}^{*2}(\lambda_{m}^{*2}-\lambda_{n}^{2})},\\[4pt]
&C_{1,1}^{-}=\frac{1}{\lambda_{1}^{*2}},\ \ \ \ C_{1,1}^{+}=\frac{1}{\lambda_{1}^{2}},\ \ \ \ \bar{C}_{N,N}^{-}=\frac{1}{\lambda_{N}^{2}},\ \ \ \ \bar{C}_{N,N}^{+}=\frac{1}{\lambda_{N}^{*2}},
	\end{split}
\end{equation*}
where $1<i\leqslant N,\ 1\leqslant j<N$. 
By direct calculation, we can rewrite \eqref{2-P-} as follows:
\begin{equation*}
	\begin{split}
		\det{\bf P}^{-}_{k}=&\left|\begin{matrix}\begin{array}{cccc:ccc}
				\frac{1}{\lambda_{1}^{2}-\lambda_{1}^{*2}}&\cdots&\frac{1}{\lambda_{1}^{2}-\lambda_{k-1}^{*2}}&\frac{\lambda_{k}^{*}\nu_{2k}^{*}\ee^{-\theta_{k}^{*}}}{\lambda_{1}^{2}-\lambda_{k}^{*2}}&0&\cdots&0\\
				\vdots&\ddots&\vdots&\vdots&\vdots&\ddots&\vdots\\
				\frac{1}{\lambda_{k-1}^{2}-\lambda_{1}^{*2}}&\cdots&\frac{1}{\lambda_{k-1}^{2}-\lambda_{k-1}^{*2}}&\frac{\lambda_{k}^{*}\nu_{2k}^{*}\ee^{-\theta_{k}^{*}}}{\lambda_{k-1}^{2}-\lambda_{k}^{*2}}&0&\cdots&0\\[8pt]
				\frac{\nu_{2k}\ee^{-\theta_{k}}}{\lambda_{k}^{2}-\lambda_{1}^{*2}}&\cdots&\frac{\nu_{2k}\ee^{-\theta_{k}}}{\lambda_{k}^{2}-\lambda_{k-1}^{*2}}&\frac{\lambda_{k}^{*}|\nu_{2k}|^2\ee^{-2\theta_{kR}}}{\lambda_{k}^{2}-\lambda_{k}^{*2}}&\frac{\lambda_{k}\nu_{1k}\ee^{\theta_{k}}}{\lambda_{k}^{2}-\lambda_{k+1}^{*2}}&\cdots&\frac{\lambda_{k}\nu_{1k}\ee^{\theta_{k}}}{\lambda_{k}^{2}-\lambda_{N}^{*2}}\\[8pt]
				\hdashline\\[-6pt]
				0&\cdots&0&0&\frac{1}{\lambda_{k+1}^{2}-\lambda_{k+1}^{*2}}&\cdots&\frac{1}{\lambda_{k+1}^{2}-\lambda_{N}^{*2}}\\
				\vdots&\ddots&\vdots&\vdots&\vdots&\ddots&\vdots\\
				0&\cdots&0&0&\frac{1}{\lambda_{N}^{2}-\lambda_{k+1}^{*2}}&\cdots&\frac{1}{\lambda_{N}^{2}-\lambda_{N}^{*2}}
		\end{array}\end{matrix}\right|\\[3pt]
		+&\left|\begin{matrix}\begin{array}{ccc:cccc}
				\frac{1}{\lambda_{1}^{2}-\lambda_{1}^{*2}}&\cdots&\frac{1}{\lambda_{1}^{2}-\lambda_{k-1}^{*2}}&0&0&\cdots&0\\
				\vdots&\ddots&\vdots&\vdots&\vdots&\ddots&\vdots\\
				\frac{1}{\lambda_{k-1}^{2}-\lambda_{1}^{*2}}&\cdots&\frac{1}{\lambda_{k-1}^{2}-\lambda_{k-1}^{*2}}&0&0&\cdots&0\\[8pt]
				\hdashline\\[-6pt]
				\frac{\nu_{2k}\ee^{-\theta_{k}}}{\lambda_{k}^{2}-\lambda_{1}^{*2}}&\cdots&\frac{\nu_{2k}\ee^{-\theta_{k}}}{\lambda_{k}^{2}-\lambda_{k-1}^{*2}}&\frac{\lambda_{k}|\nu_{1k}|^2\ee^{2\theta_{kR}}}{\lambda_{k}^{2}-\lambda_{k}^{*2}}&\frac{\lambda_{k}\nu_{1k}\ee^{\theta_{k}}}{\lambda_{k}^{2}-\lambda_{k+1}^{*2}}&\cdots&\frac{\lambda_{k}\nu_{1k}\ee^{\theta_{k}}}{\lambda_{k}^{2}-\lambda_{N}^{*2}}\\[8pt]
				0&\cdots&0&\frac{\nu_{1k}^{*}\ee^{\theta_{k}^{*}}}{\lambda_{k+1}^{2}-\lambda_{k}^{*2}}&\frac{1}{\lambda_{k+1}^{2}-\lambda_{k+1}^{*2}}&\cdots&\frac{1}{\lambda_{k+1}^{2}-\lambda_{N}^{*2}}\\
				\vdots&\ddots&\vdots&\vdots&\vdots&\ddots&\vdots\\
				0&\cdots&0&\frac{\nu_{1k}^{*}\ee^{\theta_{k}^{*}}}{\lambda_{N}^{2}-\lambda_{k}^{*2}}&\frac{1}{\lambda_{N}^{2}-\lambda_{k+1}^{*2}}&\cdots&\frac{1}{\lambda_{N}^{2}-\lambda_{N}^{*2}}
		\end{array}\end{matrix}\right|\\[3pt]
	=&C_{1,k-1}C_{k,N}\lambda_{k}|\nu_{1k}|^{2}\ee^{2\theta_{kR}}+C_{1,k}C_{k+1,N}\lambda_{k}^{*}|\nu_{2k}|^{2}\ee^{-2\theta_{kR}},\\[4pt]
	\det{\bf \widehat{P}}^{-}_{k}=&
	C_{1,k}^{-}\bar{C}_{k,N}^{-}|\lambda_{k}|^{2}\nu_{1k}\nu_{2k}^{*}\ee^{2\ii\theta_{kI}}.
	\end{split}
\end{equation*}
Thus, along the trajectory $x-v_{k}t=const$, we have
\begin{equation*}
q_{k}^{[N],-}(x,t)=\frac{2\ii\xi_{k}\eta_{k}\nu_{2k}^{*}}{(\xi_{k}+\ii\eta_{k})^{2}(\xi_{k}-\ii\eta_{k})\nu_{1k}^{*}}\exp\left(2\ii\theta_{kI}+\varpi_{k}-c_{k}\right){\rm sech}\left(2\theta_{kR}+\varpi_{k}-\delta_{k}\right),
\end{equation*}
where
\begin{equation*}
	\begin{split}
	&\varpi_{k}=\ln\left(\frac{\lambda_{k}|\nu_{1k}|}{|\lambda_{k}||\nu_{2k}|}\right),\ \ \ \ \delta_{k}=\ln\left(\prod\limits_{m=1}^{k-1}\prod\limits_{n=k+1}^{N}\left|\dfrac{(\lambda_{m}^{2}-\lambda_{k}^{2})(\lambda_{n}^{2}-\lambda_{k}^{*2})}{(\lambda_{m}^{2}-\lambda_{k}^{*2})(\lambda_{n}^{2}-\lambda_{k}^{2})}\right|\right),\\ &c_{k}=\frac{1}{2}\ln\left(\prod\limits_{m=1}^{k-1}\prod\limits_{n=k+1}^{N}\dfrac{\tau_{m}^{*}\tau_{n}}{\tau_{m}\tau_{n}^{*}}\right),\ \ \ \ \tau_{j}=\lambda_{j}^{4}(\lambda_{k}^{2}-\lambda_{j}^{*2})(\lambda_{k}^{*2}-\lambda_{j}^{*2}),\ \ j\neq k,
		\end{split}
\end{equation*}
and we define $\prod\limits_{m=1}^{0}f_{m}=\prod\limits_{n=N+1}^{N}f_{m}=1$ with arbitrary smooth function $f_{m}$. 
Therefore, when $t\to-\infty$,
\begin{equation*}
	q^{[N],-}(x,t)=\sum\limits_{k=1}^{N}q^{[N],-}_{k}(x,t)+\mathcal{O}\left(\ee^{\iota t}\right),\ \ \ \ \iota=\max\limits_{i\neq j}\left(2\xi_{j}\eta_{j}\left|v_{i}-v_{j}\right|\right),\ \ i,j=1,\cdots,N.
\end{equation*}

Using the same method, we can analyze the asymptotic form of the fractional $N$-soliton solution when $t\to+\infty$, as shown below:
\begin{equation*}
	q^{[N],+}(x,t)=\sum\limits_{k=1}^{N}q_{k}^{[N],+}(x,t)+\mathcal{O}\left(\ee^{-\iota t}\right),
\end{equation*}
where
\begin{equation*}
	q_{k}^{[N],+}(x,t)=\frac{2\ii\xi_{k}\eta_{k}\nu_{2k}^{*}}{(\xi_{k}+\ii\eta_{k})^{2}(\xi_{k}-\ii\eta_{k})\nu_{1k}^{*}}\exp\left(2\ii\theta_{kI}+\varpi_{k}+c_{k}\right){\rm sech}\left(2\theta_{kR}+\varpi_{k}+\delta_{k}\right).
\end{equation*}
Consequently, we complete the proof.
\end{proof}
We take the fractional three-soliton solution as an example to compare the exact solution $q^{[3]}(x,t)$ and the asymptotic solutions $q^{[3],\pm}(x,t)$ at $t=\pm 40$, respectively. It is shown that the decomposition of three-soliton is consistent with the exact three-soliton solution (see Fig.\ref{fig:asy-3soliton}), which exhibits the feature of elastic collisions as the classical multi-soliton of FL equation.
\begin{figure}[h]
	\centering
	\subfloat[Choosing $\epsilon=0$. (a1) the fractional three-soliton solution. (a2)-(a3) the comparison between the exact solution and the asymptotic solution when $t=\pm40$, respectively.]{\includegraphics[width=0.9\linewidth]{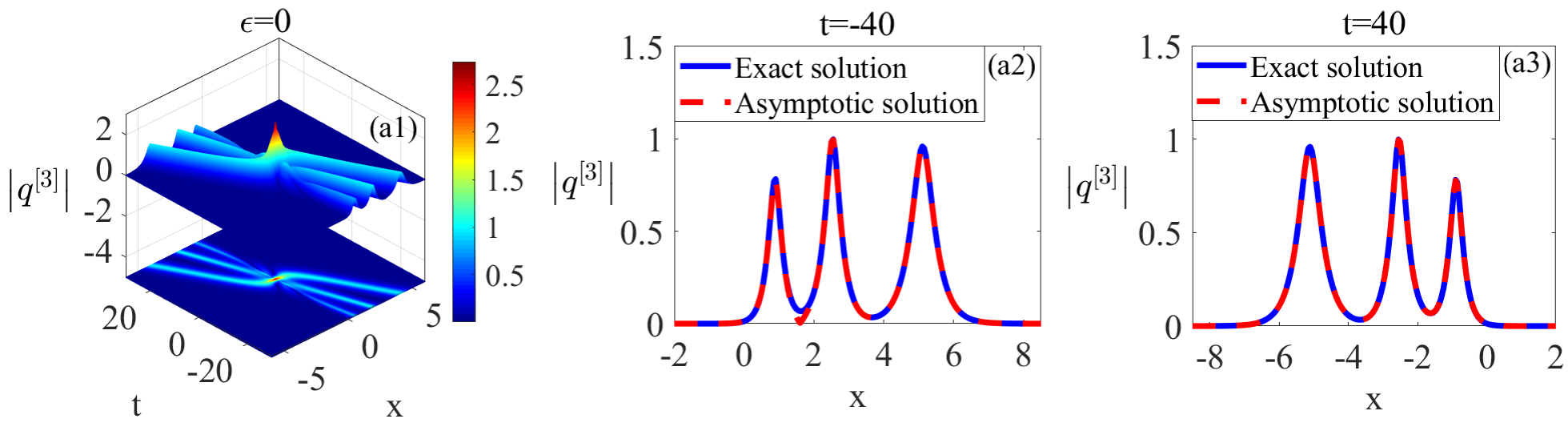}
		\label{fig:asy-3soliton-1}}\hfill
	\subfloat[Choosing $\epsilon=0.4$. (b1) the fractional three-soliton solution. (b2)-(b3) the comparison between the exact solution and the asymptotic solution when $t=\pm40$, respectively.]{\hspace{0.2cm}\includegraphics[width=0.9\linewidth]{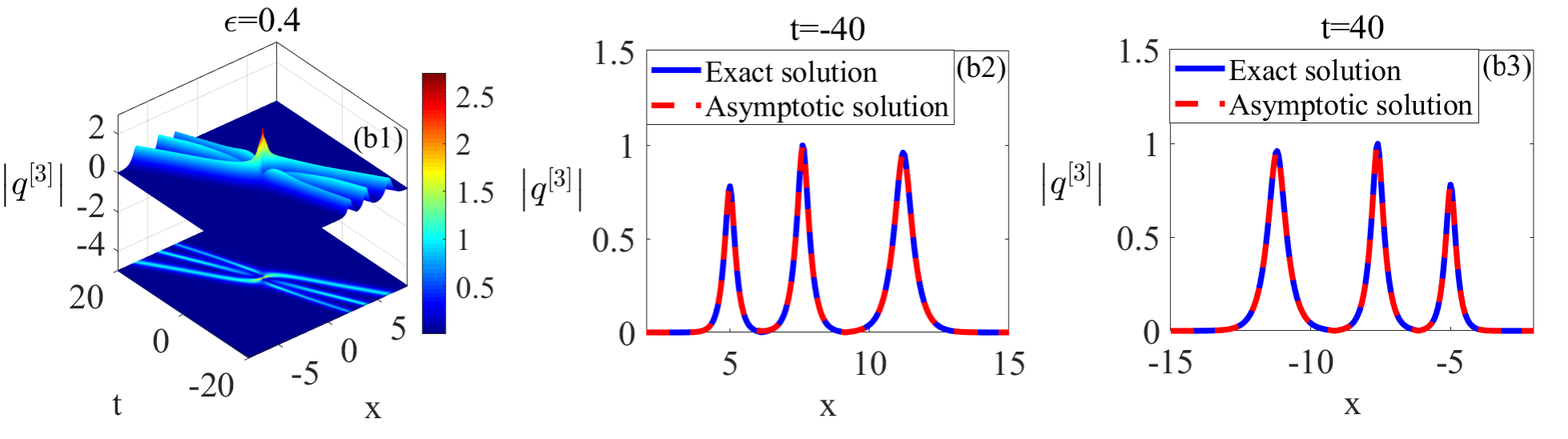}
		\label{fig:asy-3soliton-2}}\hfill
	\subfloat[Choosing $\epsilon=0.8$. (c1) the fractional three-soliton solution. (c2)-(c3) the comparison between the exact solution and the asymptotic solution when $t=\pm40$, respectively.]{\includegraphics[width=0.9\linewidth]{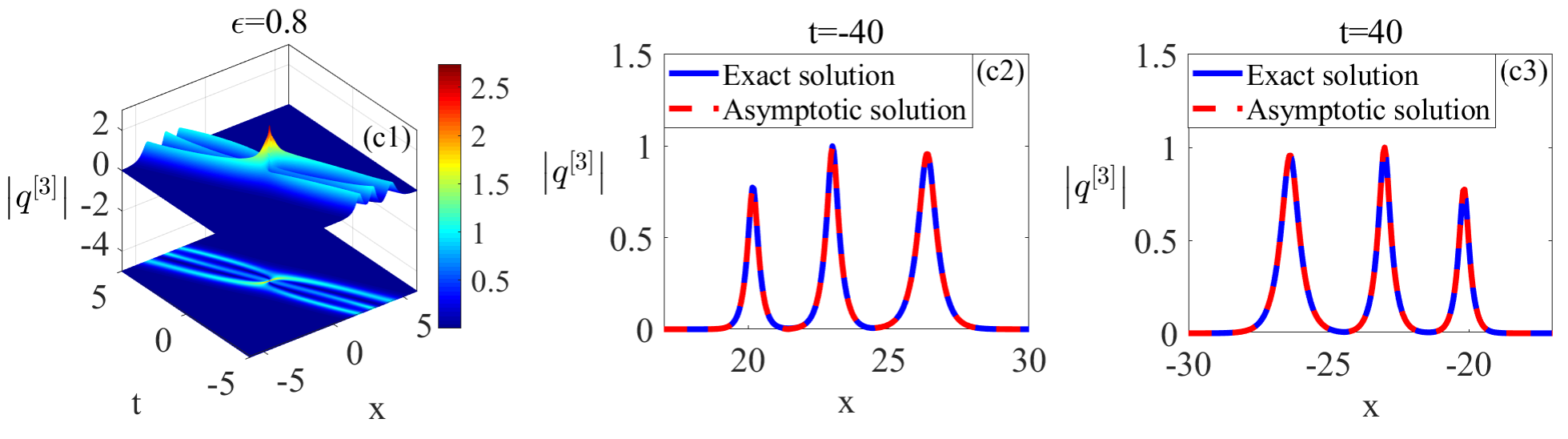}
		\label{fig:asy-3soliton-3}}
	\caption{Fractional three-soliton solution $q^{[3]}(x,t)$ and the comparison between the exact solution and the asymptotic solution. The figures (a)-(c) correspond to the cases of $\epsilon=0,0.4,0.8$, respectively. The solid blue curve and the dashed red curve are plotted by the exact solution and the asymptotic solution of the fractional three-soliton solution based on the proof process of the proposition \ref{prop-Nsoliton}, respectively. The parameters are taken by $\nu_{11}=\nu_{12}=\nu_{13}=\nu_{21}=\nu_{22}=\nu_{23}=1,\ \xi_{1}=1,\ \eta_{1}=\frac{3}{4},\ \xi_{2}=1,\ \eta_{2}=1,\ \xi_{3}=\frac{5}{4},\ \eta_{3}=1.$}
	\label{fig:asy-3soliton}
\end{figure}

\section{Conclusion}
In this paper, we present an integrable fractional form of the FL equation based on the fractional integrable system proposed by Ablowitz, Been and Carr in \cite{ablowitz2022fractional}. By constructing a suitable Riemann-Hilbert problem, we provide a determinant representation for the fractional $N$-soliton solution of the fFL equation in the case of reflectionless potential. Additionally, we offer a comprehensive proof of the fractional one-soliton solution and discuss its limit form, leading to the fractional rational solution of the fFL equation. We also analyze the dynamic behavior of solitons and observe that they all propagate as left-traveling waves. Furthermore, we find that as the parameter $\epsilon$ increases, the solitons exhibit longer travel distances. Finally, based on the determinant representation of the fractional $N$-soliton solution, we prove that the fractional $N$-soliton solution can be regarded as a linear superposition of $N$ fractional single-soliton solutions when $|t|\to\infty$.

\section*{Acknowledgements}
Liming Ling is supported by the National Natural Science Foundation of China (No.$12122105$).

\section*{Conflict of interest}
The authors declare that they have no conflict of interest with other people or organizations that may inappropriately influence the author's work.

\bibliographystyle{unsrt}

\bibliography{Reference-fFL}

\begin{thebibliography}{10}

\bibitem{hasegawa1973transmission}
A.~Hasegawa and F.~Tappert.
\newblock {Transmission of stationary nonlinear optical pulses in dispersive
  dielectric fibers. I. Anomalous dispersion}.
\newblock {\em Appl. Phys. Lett.}, 23(3):142--144, 1973.

\bibitem{agrawal2011nonlinear}
G.~P. Agrawal.
\newblock {Nonlinear fiber optics: its history and recent progress}.
\newblock {\em J. Opt. Soc. Am. B}, 28(12):A1--A10, 2011.

\bibitem{potasek1991exact}
M.~J. Potasek and M.~Tabor.
\newblock {Exact solutions for an extended nonlinear Schr{\"o}dinger equation}.
\newblock {\em Phys. Lett. A}, 154(9):449--452, 1991.

\bibitem{cavalcanti1991modulation}
S.~B. Cavalcanti, J.~C. Cressoni, H.~R. da~Cruz, and A.~S. Gouveia-Neto.
\newblock {Modulation instability in the region of minimum group-velocity
  dispersion of single-mode optical fibers via an extended nonlinear
  Schr{\"o}dinger equation}.
\newblock {\em Phys. Rev. A}, 43(11):6162, 1991.

\bibitem{kodama1997input}
Y.~Kodama, A.~V. Mikhailov, and S.~Wabnitz.
\newblock {Input pulse optimization in wavelength-division-multiplexed soliton
  transmissions}.
\newblock {\em Opt. Commun.}, 143(1-3):53--56, 1997.

\bibitem{fuchssteiner1981symplectic}
B.~Fuchssteiner and A.~S. Fokas.
\newblock {Symplectic structures, their B{\"a}cklund transformations and
  hereditary symmetries}.
\newblock {\em Phys. D}, 4(1):47--66, 1981.

\bibitem{fokas-1995class}
A.~S. Fokas.
\newblock {On a class of physically important integrable equations}.
\newblock {\em Phys. D}, 87(1-4):145--150, 1995.

\bibitem{lenells-2009exactly}
J.~Lenells.
\newblock {Exactly solvable model for nonlinear pulse propagation in optical
  fibers}.
\newblock {\em Stud. Appl. Math.}, 123(2):215--232, 2009.

\bibitem{lenells-2008novel}
J.~Lenells and A.~S. Fokas.
\newblock {On a novel integrable generalization of the nonlinear
  Schr{\"o}dinger equation}.
\newblock {\em Nonlinearity}, 22(1):11, 2008.

\bibitem{lenells-2010dressing}
J.~Lenells.
\newblock {Dressing for a novel integrable generalization of the nonlinear
  Schr{\"o}dinger equation}.
\newblock {\em J. Nonlinear Sci.}, 20:709--722, 2010.

\bibitem{vekslerchik2011lattice}
V.~E. Vekslerchik.
\newblock {Lattice representation and dark solitons of the Fokas--Lenells
  equation}.
\newblock {\em Nonlinearity}, 24(4):1165, 2011.

\bibitem{matsuno2012direct-1}
Y.~Matsuno.
\newblock {A direct method of solution for the Fokas--Lenells derivative
  nonlinear Schr{\"o}dinger equation: I. Bright soliton solutions}.
\newblock {\em J. Phys. A}, 45(47):235202, 2012.

\bibitem{matsuno2012direct-2}
Y.~Matsuno.
\newblock {A direct method of solution for the Fokas--Lenells derivative
  nonlinear Schr{\"o}dinger equation: II. Dark soliton solutions}.
\newblock {\em J. Phys. A}, 45(47):475202, 2012.

\bibitem{he-2012rogue}
J.~He, S.~Xu, and K.~Porsezian.
\newblock {Rogue waves of the Fokas--Lenells equation}.
\newblock {\em J. Phys. Soc. Jpn.}, 81(12):124007, 2012.

\bibitem{zhao2013algebro}
P.~Zhao, E.~Fan, and Y.~Hou.
\newblock {Algebro-geometric solutions and their reductions for the
  Fokas--Lenells hierarchy}.
\newblock {\em J. Nonlinear Math. Phys.}, 20(3):355--393, 2013.

\bibitem{wang2020fokas}
Y.~Wang, Z.~J. Xiong, and L.~Ling.
\newblock {Fokas--lenells equation: Three types of darboux transformation and
  multi-soliton solutions}.
\newblock {\em Appl. Math. Lett.}, 107:106441, 2020.

\bibitem{liu2022fokas}
S.~Liu, J.~Wang, and D.~Zhang.
\newblock {The Fokas--Lenells equations: bilinear approach}.
\newblock {\em Stud. Appl. Math.}, 148(2):651--688, 2022.

\bibitem{zhang2019exact}
Q.~Zhang, Y.~Zhang, and R.~Ye.
\newblock {Exact solutions of nonlocal Fokas--Lenells equation}.
\newblock {\em Appl. Math. Lett.}, 98:336--343, 2019.

\bibitem{li2021n}
J.~Li and T.~Xia.
\newblock {N-soliton solutions for the nonlocal Fokas--Lenells equation via
  RHP}.
\newblock {\em Appl. Math. Lett.}, 113:106850, 2021.

\bibitem{wang-2020dynamical}
B.~H. Wang, Y.~Y. Wang, C.~Q. Dai, and Y.~X. Chen.
\newblock {Dynamical characteristic of analytical fractional solitons for the
  space-time fractional Fokas--Lenells equation}.
\newblock {\em Alexandria Eng. J.}, 59(6):4699--4707, 2020.

\bibitem{ablowitz2022fractional}
M.~J. Ablowitz, J.~B. Been, and L.~D. Carr.
\newblock {Fractional Integrable Nonlinear Soliton Equations}.
\newblock {\em Phys. Rev. Lett.}, 128(18):184101, 2022.

\bibitem{Riesz1949}
M.~Riesz.
\newblock {L'int{\'e}grale de Riemann-Liouville et le probl{\`e}me de Cauchy}.
\newblock {\em Acta Math.}, 81:1--222, 1949.

\bibitem{agrawal2001applications}
G.~Agrawal.
\newblock {\em {Applications of nonlinear fiber optics}}.
\newblock Elsevier, 2001.

\bibitem{lischke2020fractional}
A.~Lischke, G.~Pang, M.~Gulian, F.~Song, C.~Glusa, X.~Zheng, Z.~Mao, W.~Cai,
  M.~M. Meerschaert, M.~Ainsworth, et~al.
\newblock {What is the fractional Laplacian? A comparative review with new
  results}.
\newblock {\em J. Comput. Phys.}, 404:109009, 2020.

\bibitem{ablowitz2022integrable}
M.~J. Ablowitz, J.~B. Been, and L.~D. Carr.
\newblock {Integrable Fractional Modified Korteweg-de Vries, Sine-Gordon, and
  Sinh-Gordon Equations}.
\newblock {\em J. Phys. A}, 55(38):384010, 2022.

\bibitem{ablowitz2022discrete}
M.~J. Ablowitz, J.~B. Been, and L.~D. Carr.
\newblock {Fractional integrable and related discrete nonlinear Schr{\"o}dinger
  equations}.
\newblock {\em Phys. Lett. A}, 452:128459, 2022.

\bibitem{weng2022dynamics}
W.~Weng, M.~Zhang, G.~Zhang, and Z.~Yan.
\newblock {Dynamics of fractional N-soliton solutions with anomalous
  dispersions of integrable fractional higher-order nonlinear Schr{\"o}dinger
  equations}.
\newblock {\em Chaos}, 32(12):123110, 2022.

\bibitem{zhang2022interaction}
M.~Zhang, W.~Weng, and Z.~Yan.
\newblock {Interactions of fractional N-solitons with anomalous dispersions for
  the integrable combined fractional higher-order mKdV hierarchy}.
\newblock {\em Phys. D}, 444:133614, 2023.

\bibitem{yan2022multi}
Z.~Yan.
\newblock {New integrable multi-L{\'e}vy-index and mixed fractional nonlinear
  soliton hierarchies}.
\newblock {\em Chaos Solitons Fractals}, 164:112758, 2022.

\bibitem{zhong2022data}
M.~Zhong and Z.~Yan.
\newblock {Data-driven soliton mappings for integrable fractional nonlinear
  wave equations via deep learning with Fourier neural operator}.
\newblock {\em Chaos Solitons Fractals}, 165:112787, 2022.

\bibitem{an2022nondegenerate}
L.~An, L.~Ling, and X.~Zhang.
\newblock {Nondegenerate solitons in the integrable fractional coupled Hirota
  equation}.
\newblock {\em Phys. Lett. A}, 460:128629, 2023.

\bibitem{an2023inverse}
L.~An, L.~Ling, and X.~Zhang.
\newblock Inverse scattering transform for the integrable fractional derivative
  nonlinear schr\"odinger equation.
\newblock {\em arXiv preprint arXiv:2303.17104}, 2023.

\bibitem{mou-2023integrable}
D.~S. Mou, C.~Q. Dai, and Y.~Y. Wang.
\newblock {Integrable fractional n-component coupled nonlinear Schr{\"o}dinger
  model and fractional n-soliton dynamics}.
\newblock {\em Chaos Solitons Fractals}, 171:113451, 2023.

\bibitem{biondini2014inverse}
G.~Biondini and G.~Kova{\v{c}}i{\v{c}}.
\newblock {Inverse scattering transform for the focusing nonlinear
  Schr{\"o}dinger equation with nonzero boundary conditions}.
\newblock {\em J. Math. Phys.}, 55(3):031506, 2014.

\bibitem{biondini-2016inverse}
G.~Biondini, E.~Fagerstrom, and B.~Prinari.
\newblock {Inverse scattering transform for the defocusing nonlinear
  Schr{\"o}dinger equation with fully asymmetric non-zero boundary conditions}.
\newblock {\em Phys. D}, 333:117--136, 2016.

\bibitem{guo-2012}
B.~Guo and L.~Ling.
\newblock {Riemann-Hilbert approach and N-soliton formula for coupled
  derivative Schr{\"o}dinger equation}.
\newblock {\em J. Math. Phys.}, 53(7):073506, 2012.

\bibitem{pelinovsky-2018}
D.~E. Pelinovsky and Y.~Shimabukuro.
\newblock {Existence of global solutions to the derivative NLS equation with
  the inverse scattering transform method}.
\newblock {\em Int. Math. Res. Not. IMRN}, 2018(18):5663--5728, 2018.

\bibitem{wang-2010}
D.~S. Wang, D.~J. Zhang, and J.~Yang.
\newblock {Integrable properties of the general coupled nonlinear
  Schr{\"o}dinger equations}.
\newblock {\em J. Math. Phys.}, 51(2):023510, 2010.

\bibitem{wu-2017}
J.~P. Wu and X.~G. Geng.
\newblock {Inverse scattering transform of the coupled Sasa--Satsuma equation
  by Riemann-Hilbert approach}.
\newblock {\em Commun. Theor. Phys.}, 67(5):527, 2017.

\bibitem{ma-2019}
W.~X. Ma.
\newblock {The inverse scattering transform and soliton solutions of a combined
  modified Korteweg-de Vries equation}.
\newblock {\em J. Math. Anal. Appl.}, 471(1-2):796--811, 2019.

\bibitem{kaup-1996squared}
D.~J. Kaup and T.~I. Lakoba.
\newblock {The squared eigenfunctions of the massive Thirring model in
  laboratory coordinates}.
\newblock {\em J. Math. Phys.}, 37(1):308--323, 1996.

\bibitem{kaup-1999perturbation}
D.~J. Kaup, T.~I. Lakoba, and Y.~Matsuno.
\newblock {Perturbation theory for the Benjamin-Ono equation}.
\newblock {\em Inverse Problems}, 15(1):215, 1999.

\bibitem{yang-2009squared}
J.~Yang and D.~J. Kaup.
\newblock {Squared eigenfunctions for the Sasa--Satsuma equation}.
\newblock {\em J. Math. Phys.}, 50(2):023504, 2009.

\bibitem{kawata1980generalized}
T.~Kawata and J.~Sakai.
\newblock {Generalized Gel'fand-Levitan equation and variational relations of
  the Kaup-Newell equation}.
\newblock {\em Nagoya Univ.(Japan). Inst. of Plasma Physics}, 1980.

\bibitem{kawata1980linear}
T.~Kawata and J.~Sakai.
\newblock {Linear problems associated with the Derivative Nonlinear
  Schr{\"o}dinger equation}.
\newblock {\em J. Phys. Soc. Jpn.}, 49(6):2407--2414, 1980.

\end{thebibliography}

\end{document}